\let\csname equation*\endcsname\relax
\let\csname endequation*\endcsname\relax
\numberwithin{equation}{section}
\numberwithin{figure}{section}
\newcommand\tabcaption{\def\@captype{table}\caption}
\newtheorem{thm}{Theorem}[section]
\newtheorem{corollary}[thm]{Corollary}
\newtheorem{lemma}[thm]{Lemma}
\newtheorem{proposition}[thm]{Proposition}
\newtheorem{assumption}{Assumption}
\newcommand{\m}{\mathtt m}
\newcommand{\grad}{\textrm {grad}}
\newcommand{\ssp}[1]{\langle #1 \rangle}
\newcommand{\norm}[1]{\|#1\|}
\def\real{\mathbb{R}}
\newcommand{\cP}{\mathcal P}
\newcommand{\cF}{\mathcal F}
\newcommand{\R}{\mathbb R}
\newcommand{\E}{\mathbb E}
\newcommand{\KL}{\mathrm{KL}}
\def\predictiveN{\mu^N_{n-1/2}}
\def\predictive{\mu_{n-1/2}}
\def\updateN{\widetilde{\mu}_n^N}
\def\update{\mu_n}
\newcommand{\bgN}{\ensuremath{\mu_n^N}\xspace}
\newcommand{\bg}{\ensuremath{\mu_n}\xspace}
\def\weight{v_n}
\def\weightN{w_n}
\newcommand{\supnorm}[1]{\norm{#1}_\infty}
\newcommand{\asconverges}{\ensuremath{\overset{a.s.}{\rightarrow}}}
\newcommand{\testfn}{\ensuremath{\varphi} }
\newcommand{\bounded}[1][\real^d]{\ensuremath{\mathcal{B}_b(#1)}}
\def\sfmutation{\mathcal{G}_{n-1}^N}
\def\sfresampling{\mathcal{F}_{n}^N}
\def\mse{\mathrm{MSE}}
\def\FR{\textrm{FR}}
\def\W{\textrm{W}}
\title{Sequential Monte Carlo approximations \\of Wasserstein--Fisher--Rao gradient flows}
\date{}
\author[1]{Francesca Romana Crucinio\thanks{ \href{mailto:francescaromana.crucinio@unito.it}{francescaromana.crucinio@unito.it}
    F.R.C. gratefully acknowledges the ``de Castro" Statistics Initative at the \textit{Collegio Carlo Alberto} and the \textit{Fondazione Franca e Diego de Castro}.}}
\author[2]{Sahani Pathiraja\thanks{  \href{mailto:s.pathiraja@unsw.edu.au}{s.pathiraja@unsw.edu.au} S.P. gratefully acknowledges funding from UNSW Faculty of Science Research Grant and the Eva Mayr Stihl Foundation.} }
\date{ }
\affil[1]{ESOMAS, University of Turin, Italy \& Collegio Carlo Alberto, Turin, Italy}
\affil[2]{School of Mathematics \& Statistics, UNSW Sydney, Australia}
\begin{document}

    \maketitle
\begin{abstract}
    We consider the problem of sampling from a probability distribution $\pi$. It is well known that this can be written as an optimisation problem over the space of probability distribution in which we aim to minimise the Kullback--Leibler divergence from $\pi$. We consider several partial differential equations (PDEs) whose solution is a minimiser of the Kullback--Leibler divergence from $\pi$ and connect them to well-known Monte Carlo algorithms. We focus in particular on PDEs obtained by considering the Wasserstein--Fisher--Rao geometry over the space of probabilities and show that these lead to a natural implementation using importance sampling and sequential Monte Carlo. We propose a novel algorithm to approximate the Wasserstein--Fisher--Rao flow of the Kullback--Leibler divergence and conduct an extensive empirical study to identify when this algorithms outperforms other popular Monte Carlo algorithms.
\end{abstract}


\section{Introduction}

Sampling from a target probability distribution whose density is known up to a normalisation constant is a fundamental task in computational statistics and machine learning. A natural way to formulate this task is optimisation of a functional measuring the dissimilarity to the target probability distribution, typically the Kullback--Leibler (KL) divergence.

Following this point of view, one can derive many popular sampling frameworks including variational inference \citep{blei2017variational}, algorithms based on the Langevin dynamics \citep{roberts1996exponential, durmus2019analysis} and on kernelised Langevin dynamics \citep{liu2016stein}.
When the space over which the optimisation is carried out is the Wasserstein space, i.e. probability distributions with bounded second moments equipped with the Wasserstein-2 distance \citep{ambrosio2008gradient}, the resulting gradient descent scheme is referred to as Wasserstein (W) gradient flow. It is well-known that the W gradient flow of the KL can be implemented by a Langevin diffusion  \citep{jordan1998variational, wibisono2018sampling} and easily discretised in time, resulting for instance in the Unadjusted Langevin algorithm (ULA; \citep{roberts1996exponential, durmus2019analysis}) and in the Metropolis-adjusted Langevin Algorithm (MALA; \citep{roberts1996exponential}) when a Metropolis--Hastings accept reject step is added after each discretisation step.
Convergence of the W gradient flow and of the Unadjusted Langevin algorithm is guaranteed when the target distribution is sufficiently smooth and verifies a log-Sobolev assumption \citep{vempala2019rapid}.
Alternative discretisations of the KL Wasserstein gradient flow \citep{salim2020wasserstein, mou2021high} or its gradient flow with respect to similar optimal transport geometries have been considered in the literature to propose alternative algorithms
\citep{liu2016stein,garbuno2020interacting}, but their convergence also depends strongly on the log-concavity of the target.

A second class of gradient flows which has gained popularity in recent years is that of Fisher--Rao (FR) gradient flows, i.e. gradient descent w.r.t. the Fisher--Rao or  Hellinger--Kakutani distance.  
Contrary to the Wasserstein metric, which leads to a diffusive behaviour, the Fisher--Rao metric leads to birth-death dynamics in which mass is created or removed. FR gradient flows perform natural gradient descent (see, e.g.,  \citet[Appendix D]{Nusken2024} and \cite{fujiwara1995gradient,hofbauer1998evolutionary,Harper2009}).

Exponential convergence of the FR gradient flow for KL does not require a log-Sobolev assumption and can be established under a bounded second moment assumption and mild regularity of the ratio between target and initial distribution \citep[Theorem 4.1 and Eq. (B.6)]{chen2023sampling}, and therefore holds for a wider class of targets. The FR gradient flow for KL is intimately tied to tempering or annealing  due to its unit time rescaling \citep{domingo-enrich2023an, chen_efficient_2024} and can also be connected to a mirror descent optimisation scheme for KL \citep{chopin2023connection}.
While superior in terms of theoretical behaviour, the FR gradient flow does not naturally lead to stable algorithms, in fact, as noted in \citep{chen_efficient_2024}, pure FR dynamics cannot change the support of the distribution, so additional steps need to be added to explore the space. These steps can change the dynamics and can become problematic in high dimensions. 

One option to address this shortcoming is offered by kernel methods \citep{Nusken2024, maurais2024sampling, Wang2024} which naturally introduce a diffusive part into the (unit time) FR dynamics. 
A second option, and the main focus of this paper, is to combine the W and the FR gradient flows by making use of the Wasserstein--Fisher--Rao metric \citep{gallouet2017jko} leading to the Wasserstein--Fisher--Rao (WFR) gradient flow. 
The WFR flow combines the diffusive behaviour of the W gradient flow with the FR flow and enjoys more favourable convergence properties, combining the rate of convergence of the W flow with that of the FR flow \citep{domingo-enrich2023an, Lu2019}; its numerical approximation can considerably speed up the convergence of gradient based methods for multimodal targets \citep{Lu2019, lu2023birth} and has been successfully employed to learn Gaussian mixtures \citep{yan2024learning}.
Despite its favourable convergence properties, the WFR flow of the KL has remained relatively understudied compared to the W and the FR flow. This is likely due to the difficulty of obtaining stable numerical approximations for the WFR flow. To the best of our knowledge, the only numerical approximations of the WFR flow for the reverse KL are given by the birth-death Langevin algorithms developed in \citep{Lu2019, lu2023birth}.

In this paper we review the main gradient flows used for sampling with particular focus on the Wasserstein--Fisher--Rao flow due to its superior convergence properties (Section~\ref{sec:pde}).
We propose an algorithm to approximate the WFR flow of the KL divergence which combines the diffusive behaviour of Langevin based algorithms with importance sampling (Section~\ref{sec:smc_wfr}) and empirically show its superior performance w.r.t. the birth-death Langevin dynamics of \citep{Lu2019, lu2023birth}.
We further explore the connections between the WFR gradient flow, importance sampling and sequential Monte Carlo samplers (SMC; \citep{del2006sequential}) and identify in the latter a general framework to approximate different FR flows. While connections between sequential Bayesian inference and FR flows have been explored in the discrete time and discrete space setting \citep{DelMoral1997,Shalizi2009,harper2009replicator,akyildiz2017probabilistic,Czegel2022} and more recently in continuous time and continuous space \citep{Pathiraja2024}; our focus here is on the sampling problem, in which no underlying sequential structure is present. 
Our aim is to provide a more stable approximation of the WFR than those currently available in the literature and to investigate for which class of targets our approximations of the WFR flow preserve the superior convergence speed established theoretically.

To summarise, the main contributions of this work are 
\begin{itemize}
    \item We propose a stable algorithm to approximate the WFR flow which combines the diffusive behaviour of Langevin samplers and importance sampling (Algorithm~\ref{alg:smc_wfr}) and establish its convergence properties (Proposition~\ref{prop:is_wfr} and~\ref{prop:lp}). We show that this algorithm is a particular instance of sequential Monte Carlo samplers.
    \item We provide an optimisation point of view on sequential Monte Carlo samplers, showing that they provide numerical approximations of many FR type gradient flows of the Kullback--Leibler divergence. We thus establish a parallel result to that of \cite{jordan1998variational,wibisono2018sampling} which shows that algorithms based on the Langevin diffusion can be seen as numerical approximations of the Wasserstein gradient flow (Section~\ref{sec:smc}).  
    \item We empirically show (Section~\ref{sec:expe}) that  Algorithm~\ref{alg:smc_wfr} outperforms the current approximations of the WFR flow and is more robust to initialisation than the W and FR flows. We compare our approximation of the WFR flow with other popular Monte Carlo algorithms and identify regimes in which the addition of the FR correction to the W flow is beneficial. We conclude by providing guidelines for choosing the appropriate algorithm for a given target $\pi$.
\end{itemize}

\textbf{Notation}
We endow $\real^d$ with the Borel $\sigma$-field $\mathcal{B}(\real^d)$ with respect to
the Euclidean norm 
$\Vert\cdot\Vert$ where $d$ is clear from context.  We denote by $C^1(\mathbb{R}^{d})$ the set of continuously differentiable functions,
for all $f\in C^1(\mathbb{R}^{d})$ we denote the gradient by $\nabla f$. 
Throughout this manuscript we denote the target by $\pi$ and assume $\pi= \exp(-V_\pi)$ for some $V_\pi\in C^1(\real^d)$. The initial distribution is denoted by $\mu_0= \exp(-V_0)$ with $V_0\in C^1(\real^d)$.
We denote by $\mathcal{N}(x; m, \Sigma)$ the density of a multivariate Gaussian with mean $m$ and covariance $\Sigma$.
\section{Sampling as optimisation and gradient flows}
\label{sec:pde}

Let us denote the target distribution over $\R^d$ by $\pi$. A natural way to recover $\pi$ is to consider the optimisation problem
\begin{equation*}
    \min_{\mu \in \cP(\R^d)} \KL(\mu||\pi) = \min_{\mu \in \cP(\R^d)} \int \log\frac{\mu(x)}{\pi(x)}\mu(x)dx,
\end{equation*}
which can be solved by following the direction of steepest descent of $ \KL(\cdot||\pi)$, given by the gradient flow partial differential equation (PDE)
\begin{align*}
    \partial_t\mu_t = -\grad_{\m} \KL(\mu_t||\pi),
\end{align*}
where $\grad_{\m} \KL(\mu||\pi)$ denotes the gradient of $\KL(\cdot||\pi)$ w.r.t. metric $\m$ \citep{ambrosio2008gradient,carrillo_fisher-rao_2024}.
Depending on the choice of metric $\m$, the gradient $\grad_{\m}\KL(\mu||\pi)$ takes different shapes;
    we primarily focus on Wasserstein (W), Fisher--Rao (FR) and Wasserstein--Fisher--Rao (WFR) flows of the Kullback--Leibler divergence. 

    The gradient flow of $\KL(\mu || \pi)$ w.r.t. the geometry induced by the Wasserstein distance is
    \begin{align}
    \label{eq:winfflow}
        \partial_t \mu_t(x)=  \nabla \cdot \left( \mu_t(x)\nabla \log\left(\frac{\mu_t(x)}{\pi(x)}\right) \right)  \quad t \in [0,\infty)
    \end{align}
    and corresponds to the Fokker-Planck equation of a Langevin diffusion targeting $\pi$ \citep{jordan1998variational}. Convergence of the W gradient flow has been widely studied \citep{durmus2019analysis, vempala2019rapid} and is guaranteed if $\pi$ satisfies a log-Sobolev inequality (e.g., \cite{chewi2024analysis}).
    
    The Fisher--Rao (FR) gradient flow of $\KL(\mu || \pi)$ is obtained by considering the Fisher--Rao (also known as Hellinger) metric \citep{Lu2019}
    \begin{align}
    \label{eq:infFR}
        \partial_t \mu_t(x) = \mu_t(x) \left( \log \left( \frac{\pi(x)}{\mu_t(x)} \right) - \mathbb{E}_{\mu_t} \left[  \log \left( \frac{\pi}{\mu_t} \right) \right] \right) \quad t \in [0,\infty),
    \end{align} 
    and is closely tied to mirror descent \citep{chopin2023connection, domingo-enrich2023an} and birth-death \citep{Lu2019, lu2023birth} dynamics. Fisher--Rao gradient flows are easily rescaled to the unit time interval as shown in \cite{domingo-enrich2023an} and Lemma \ref{lem:unitinfFR} in a slightly more condensed form. The unit time FR flow is intimately tied to geometric interpolations or annealing \citep{chopin2023connection, domingo-enrich2023an} which are at the basis of sequential Monte Carlo (SMC; \cite{del2006sequential}), annealed importance sampling (AIS; \cite{neal2001annealed}), parallel tempering \citep{geyer1991markov} and the homotopy method \citep{daum2008particle}.

    Combining the two geometries one obtains the Wasserstein--Fisher--Rao (WFR) gradient flow  \citet[Theorem 3.1]{Lu2019}
    \begin{align}
    \label{eq:wfr}
        \partial_t \mu_t(x)&= \mu_t(x) \left( \log \left( \frac{\pi(x)}{\mu_t(x)} \right) - \mathbb{E}_{\mu_t} \left[  \log \left( \frac{\pi}{\mu_t} \right) \right] \right) + \nabla \cdot \left( \mu_t(x)\nabla \log\left(\frac{\mu_t(x)}{\pi(x)}\right) \right)  \quad t \in [0,\infty),
    \end{align}
    where the addition of the two components is justified because each is $\pi$-invariant. As we will see later, from a sampling point of view the W flow diffuses the samples ``horizontally'' and promotes exploration of the space while the FR flows ``vertically'' weights the samples according to their closeness to the target $\pi$. Intuitively, the combination of these two strategies will improve on each one of them.
    
We briefly review the convergence properties of the WFR flow to showcase its superior theoretical properties compared to the FR and the W flow.
The infinite-time FR flow~\eqref{eq:infFR} reduces $\KL(\mu||\pi)$ w.r.t. the Fisher--Rao distance (e.g., \citet[Lemma 2.1]{Pathiraja2024}); a similar results holds for the W flow~\eqref{eq:winfflow} w.r.t. the Wasserstein-2 distance under a log-Sobolev assumption on the target $\pi$ (e.g., \citep{chewi2024analysis}):
\begin{align}
    \label{eq:lsi}
    \KL(\mu||\pi)\leq \frac{C_{\textrm{LSI}}}{2}\int \mu \Vert\nabla\log \frac{\mu}{\pi}\Vert^2,
\end{align}
where $C_{\textrm{LSI}}$ denotes the log-Sobolev constant of $\pi$.

To the best of our knowledge, the convergence result obtained under the weakest assumptions on $\pi, \mu_0$ is given in \citet[Theorem 4.1]{chen2023sampling} and requires that $\mu_0, \pi$ have bounded second moments $\int \norm{x}^2\mu_0(x)dx \leq B$, $\int \norm{x}^2\pi(x)dx \leq B$
 and $\left|\log\nicefrac{\mu_0(x)}{\pi(x)}\right|\leq M(1+\norm{x}^2)$
for some $M>0$. Under these assumptions, we have
\begin{align*}
\KL(\mu_t^{\textrm{FR}}||\pi)\leq M(2+B+Be)e^{-t},
\end{align*}
for $t \geq \log((1+B)M)$.
Under~\eqref{eq:lsi} it is well known that (e.g. \citet[Section 2.1]{chewi2024analysis})
\begin{align*}
    \KL(\mu_t^{\textrm{W}}||\pi)\leq e^{-2C_{\textrm{LSI}}^{-1}t}\KL(\mu_0||\pi).
\end{align*}

Due to the orthogonality of the W and FR geometries \citet[page 13]{gallouet2017jko}, the convergence rate of WFR gradient flow~\eqref{eq:wfr} can be upper bounded by \citet[Remark 2.6]{lu2023birth}
\begin{align}
\label{eq:rate_wfr}
\KL(\mu_t^{\textrm{WFR}}||\pi)\leq \min\left\lbrace \KL(\mu_t^{\textrm{FR}}|\pi), \KL(\mu_t^{\textrm{W}}|\pi)\right\rbrace.
\end{align}

\begin{figure}
    \centering
    \includegraphics[width=0.45\linewidth]{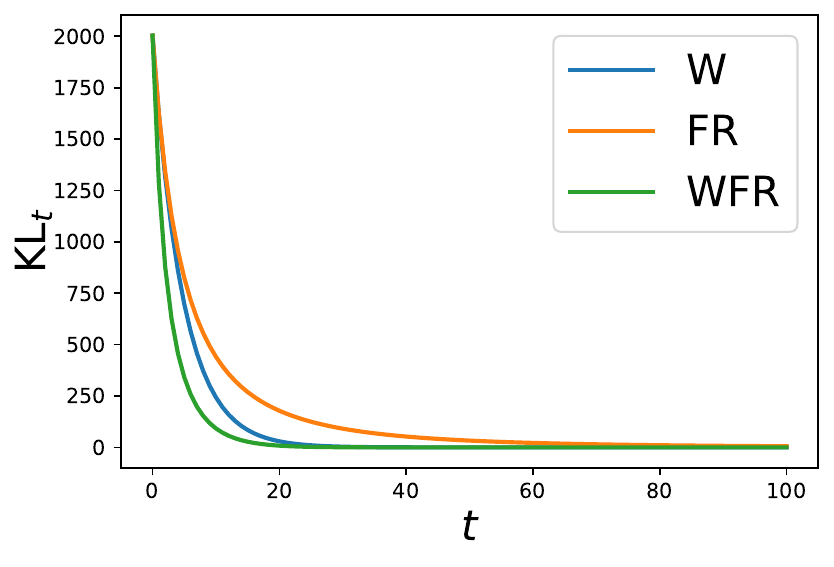}
    \includegraphics[width=0.45\linewidth]{ 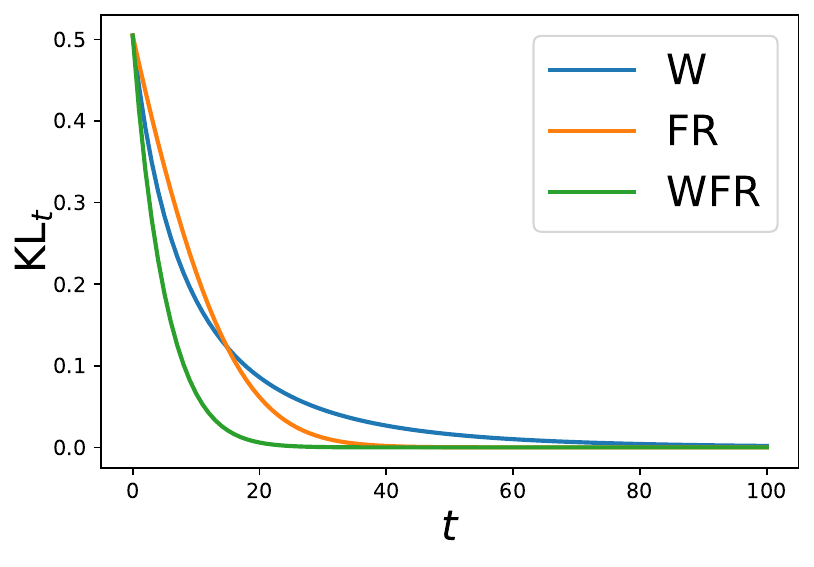}
    \caption{Evolution of $\KL$ along different PDE flows in the 1D Gaussian case with $\mu_0(x) = \mathcal{N}(x; 0, 1)$ and $\pi(x) = \mathcal{N}(x; 20, 0.1)$ (first row), $\pi(x) = \mathcal{N}(x; 1, 5)$ (second row).}
    \label{fig:gaussians_exact}
\end{figure}
Figure~\ref{fig:gaussians_exact} shows the evolution of Kullback--Leibler divergence along different PDE flows in the case of 1D Gaussians. The initial distribution is a standard Gaussian and we consider two different targets: for target 1 (first row) we set $m_\pi = 20, C_\pi = 0.1$.
In this case the W flow is clearly faster than the FR one and the WFR flow behaves more like the W flow (Figure~\ref{fig:gaussians_exact} left).
For the second target (right) we have $m_\pi = 1, C_\pi = 5$;
the FR flow is faster than the W flow for all sufficiently large $t$, and the WFR flow more closely follows the FR flow.
Figure~\ref{fig:gaussians_exact} confirms that combining the W flow with the FR flow improves on both the W flow and the FR flow.

\section{Numerical approximation of the WFR PDE}
\label{sec:smc_wfr}

We now consider numerical approximations of the WFR flow~\eqref{eq:wfr}. Current approximations of the WFR proceed by first discretising space by replacing $\mu_t$ with a particle approximation $\mu_t^{N}=N^{-1}\sum_{i=1}^N\delta_{X_t^i}$ with $X_t^i\sim \mu_t$ and secondly discretising time (\cite{Lu2019, lu2023birth} and \citet[Section H]{ lambert2022variational} for Gaussian distributions).

\cite{Lu2019, lu2023birth} consider $X_t^i$ for $i=1, \dots, N$ which follows a Langevin diffusion to which a birth-death step is added to mimic the FR dynamics (see Appendix~\ref{app:bdl} for details). \citet[Section H]{ lambert2022variational} considers weighted particle approximations $\mu_t^{N}=N^{-1}\sum_{i=1}^NW_t^i \delta_{X_t^i}$ where $X_t^i$ is a Gaussian approximation to $\mu_t$ and $W_t$ follows an ODE derived from the FR flow.

We consider a different approach: we first discretise time and then replace $\mu_n$ (a time discretised version of $\mu_t$) with a weighted particle approximation  $\mu_n^{N}=N^{-1}\sum_{i=1}^NW_n^i\delta_{X_n^i}$. Considering a time discretisation before a space discretisation allows us to exploit not only the connections between the W flow and Langevin dynamics but also those between the FR flow and mirror descent dynamics \citep{chopin2023connection}.

\subsection{Time discretisation}

The infinite time Wasserstein--Fisher--Rao PDE takes the form 
\begin{align*}
    \partial_t \mu_t = f_{\W}(\mu_t)+f_{\FR}(\mu_t)
\end{align*}
where $f_{\W}(\mu):= \nabla \cdot (\mu \nabla \log \frac{\mu}{\pi})$ and $f_{\FR}(\mu) := \mu (\log \frac{\pi}{\mu} -  \int \log \frac{\pi}{\mu} d\mu  )$ correspond to the Wasserstein and Fisher--Rao operators respectively, and takes the form of a reaction-diffusion equation.

To obtain a time discretisation of the WFR PDE we consider the solution operator of the W flow and the FR flow separately. Assume that for certain discrete time $n-1$ the solution of the WFR PDE is given by $\mu_{n-1}$.
A discretisation of the W flow for a time step $\gamma$ leads to
\begin{align}
\label{eq:w_semigroup}
     \mu_{n-1/2}(x) &  := \mathcal{N}(0, 2\gamma\cdot \textsf{Id})*\left[(\textsf{Id}+\gamma\nabla\log \pi)_{\#}\mu_{n-1}\right](x) \\
     & \propto \int \exp \left(-\frac{1}{4\gamma} \|x - y - \gamma  \nabla \log \pi(y) \|^2   \right) \mu_{n-1}(y) dy, \notag
\end{align}
see, e.g. \cite{wibisono2018sampling}. The FR flow is not a linear operator, therefore to obtain an explicit expression for its action on $\mu_{n-1/2}$ we consider the solution of the  unnormalised FR PDE, 
\begin{align}
    \label{eq:infFRunnorm}
    \partial_t \rho_t = \rho_t \log \frac{\pi}{\rho_t}
\end{align}
over time period $t$, where the above is obtained by applying Lemma \ref{lem:normalisereplicator} to \eqref{eq:infFR}.  We can evaluate \eqref{eq:infFRunnorm} directly using the substitution $\nu_t := \log\frac{\rho_t}{\pi}$ which transforms \eqref{eq:infFRunnorm} into a linear PDE \citep[Eq. (B.1)]{chen2023sampling}, 
\begin{align*}
    \partial_t \nu_t = \frac{\partial_t \rho_t}{\rho_t} = \log \frac{\pi}{\rho_t} = -\nu_t 
\end{align*}
which has explicit solution $\nu_t = \nu_0 e^{-t} = e^{-t} \log \frac{\rho_0}{\pi}$.  Transforming back yields 
\begin{align*}
    e^{-t} \log \frac{\rho_0}{\pi} = \log \frac{\rho_t}{\pi} \implies \rho_t = \pi^{1 - e^{-t}} \cdot \rho_0^{e^{-t}}. 
\end{align*}
Therefore, we immediately obtain that the FR flow with time step $\gamma$ applied to $\mu_{n-1/2}$ gives
\begin{align}
\label{eq:fr_semigroup}
   \mu_n(x) \propto \pi(x)^{1 - e^{-\gamma}} \mu_{n-1/2}(x)^{e^{-\gamma}} = \left( \frac{\pi(x)}{\mu_{n-1/2}(x)} \right)^{1-e^{-\gamma}}  \mu_{n-1/2}(x).
\end{align}
The solution of the FR flow~\eqref{eq:fr_semigroup} corresponds to one step of mirror descent applied to the Kullback--Leibler divergence (see \cite{chopin2023connection}).

Alternating between one step of the W solution operator~\eqref{eq:w_semigroup} and one step of the FR one~\eqref{eq:fr_semigroup} we obtain a discrete-time approximation of WFR PDE.
Combining the fact that the mirror descent iterates reduce the Kullback--Leibler divergence as established in \cite{aubin2022mirror, chopin2023connection} and standard results on the convergence of the unadjusted Langevin algorithm \citep{vempala2019rapid}, we can establish that the discrete scheme given by~\eqref{eq:w_semigroup}--\eqref{eq:fr_semigroup} converges exponentially fast provided that the step size $\gamma$ is chosen wisely (see Appendix~\ref{app:proofs3} for a proof).

\begin{proposition}
\label{prop:is_wfr}
Assume that $\pi$ satisfies the log-Sobolev inequality~\eqref{eq:lsi} and that there exists $L_\pi>0$ such that $\|\nabla V_\pi(x) - \nabla V_\pi(x')\| \leq L_\pi\|x-x'\|$. If $0\leq \gamma \leq C_{\textrm{LSI}}^{-1}/(4L_\pi^2)$ the approximation of the WFR flow given by~\eqref{eq:w_semigroup}--\eqref{eq:fr_semigroup} satisfies
\begin{align*}
\KL(\mu_n||\pi)< e^{-n\gamma /C_{\textrm{LSI}}} \KL(\mu_0||\pi)+8\gamma d L_\pi^2 C_{\textrm{LSI}}^{-1}.
\end{align*}
\end{proposition}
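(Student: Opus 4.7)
The plan is to decompose one outer iteration of the scheme as the ULA step~\eqref{eq:ula} carrying $\mu_{n-1}$ to $\mu_{n-1/2}$, followed by the mirror descent step~\eqref{eq:md} carrying $\mu_{n-1/2}$ to $\mu_n$ with $\delta = 1-e^{-\gamma}$, and then to bound the Kullback--Leibler divergence added by each block separately before iterating.

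For the Langevin block I would invoke the one-step analysis of~\cite{vempala2019rapid}: under the log-Sobolev inequality~\eqref{eq:lsi} with constant $C_{\textrm{LSI}}$ and the $L_\pi$-Lipschitzness of $\nabla V_\pi$ from Assumption~\ref{ass:ula}-\ref{ass:lipshitz}, the step-size condition $\gamma \leq C_{\textrm{LSI}}^{-1}/(4L_\pi^2)$ implies the one-step contraction
\begin{align*}
D_{\KL}(\mu_{n-1/2}\|\pi) \leq e^{-\gamma/C_{\textrm{LSI}}}\, D_{\KL}(\mu_{n-1}\|\pi) + c\,\gamma^2 d L_\pi^2
\end{align*}
for an explicit absolute constant $c$, which is precisely the per-step form of the estimate in \cite[Thm.~1]{vempala2019rapid}. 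For the Fisher--Rao block I would show that the geometric update does not increase $D_{\KL}(\cdot\|\pi)$ by using the variational characterisation
\begin{align*}
\mu_n = \argmin_{\mu \in \cP_{ac}(\real^d)} \bigl\{ \delta\, D_{\KL}(\mu\|\pi) + (1-\delta)\, D_{\KL}(\mu\|\mu_{n-1/2}) \bigr\},
\end{align*}
which is equivalent to $\mu_n \propto \mu_{n-1/2}^{1-\delta}\pi^{\delta}$. Evaluating the minimand at the candidate $\mu = \mu_{n-1/2}$ yields
\begin{align*}
\delta\, D_{\KL}(\mu_n\|\pi) + (1-\delta)\, D_{\KL}(\mu_n\|\mu_{n-1/2}) \leq \delta\, D_{\KL}(\mu_{n-1/2}\|\pi),
\end{align*}
and dropping the non-negative second term gives the non-expansion $D_{\KL}(\mu_n\|\pi) \leq D_{\KL}(\mu_{n-1/2}\|\pi)$. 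This is the descent lemma for mirror descent on the KL used in~\cite{aubin2022mirror,chopin2023connection}.

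Chaining the two inequalities produces the one-step recursion
\begin{align*}
D_{\KL}(\mu_n\|\pi) \leq e^{-\gamma/C_{\textrm{LSI}}}\, D_{\KL}(\mu_{n-1}\|\pi) + c\,\gamma^2 d L_\pi^2,
\end{align*}
so unrolling $n$ times and bounding the geometric series $\sum_{k=0}^{n-1} e^{-k\gamma/C_{\textrm{LSI}}}\leq (1-e^{-\gamma/C_{\textrm{LSI}}})^{-1}$ via the step-size condition (so that $\gamma/C_{\textrm{LSI}}$ is small and $1-e^{-x}\gtrsim x$) absorbs the residual bias into a constant multiple of $\gamma d L_\pi^2 C_{\textrm{LSI}}^{-1}$, recovering the stated bound up to identification of constants.

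The most delicate point I expect is the Fisher--Rao block: even though the overall rate is carried entirely by the ULA block, one must verify that the non-linear geometric update neither worsens $D_{\KL}(\cdot\|\pi)$ nor inflates the bias, and this is exactly what the variational identity above delivers. A sharper route would exploit a strict contraction factor of $(1-\delta)$ coming from the continuous-time Fisher--Rao PDE, but this would go beyond what the stated proposition requires; the ULA block in contrast is a direct citation of a now-standard result.
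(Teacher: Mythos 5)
Your decomposition and the two lemmas you chain together are exactly the ones the paper uses: the ULA step inherits the one-step contraction $D_{\KL}(\mu_{n-1/2}\|\pi)\le e^{-\gamma/C_{\textrm{LSI}}}D_{\KL}(\mu_{n-1}\|\pi)+6\gamma^2 d L_\pi^2$ from Lemma~3 of \cite{vempala2019rapid}, the mirror-descent/geometric-mean step is non-expansive in $D_{\KL}(\cdot\|\pi)$, and unrolling with $\sum_k e^{-k\gamma/C_{\textrm{LSI}}}\le(1-e^{-\gamma/C_{\textrm{LSI}}})^{-1}$ together with $\gamma\le C_{\textrm{LSI}}/4$ gives the constant $8$. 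The one genuine difference is in how you prove the Fisher--Rao non-expansion. The paper reproduces the three-point Bregman argument of \cite{chopin2023connection,aubin2022mirror}: it invokes $1$-relative smoothness of $D_{\KL}(\cdot\|\pi)$ w.r.t.\ negative entropy, applies the three-point lemma, and substitutes $\nu=\mu_{n-1/2}$ to get $\cF(\mu_n)\le\cF(\mu_{n-1/2})-\frac{1}{\delta}B_\phi(\mu_{n-1/2}\,|\,\mu_n)$. You instead use the proximal/variational characterisation of the geometric mean, $\mu_n=\argmin_\mu\{\delta D_{\KL}(\mu\|\pi)+(1-\delta)D_{\KL}(\mu\|\mu_{n-1/2})\}$, and plug in the feasible point $\mu=\mu_{n-1/2}$. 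This is correct --- the objective equals $D_{\KL}(\mu\|\mu_n)$ up to an additive constant, so $\mu_n$ is indeed the minimiser --- and it is arguably cleaner and more self-contained than routing through relative smoothness. Both routes deliver the same conclusion $D_{\KL}(\mu_n\|\pi)\le D_{\KL}(\mu_{n-1/2}\|\pi)$, and since the rate of the proposition is carried entirely by the ULA block, the extra information the Bregman argument provides (the $-\frac{1}{\delta}B_\phi(\mu_{n-1/2}|\mu_n)$ slack) is also discarded in the paper's proof, so nothing is lost by your shortcut. The only thing I would flag for completeness is that the step $\gamma\le C_{\textrm{LSI}}^{-1}/(4L_\pi^2)\Rightarrow\gamma\le C_{\textrm{LSI}}/4$, which is needed to turn $(1-e^{-\gamma/C_{\textrm{LSI}}})^{-1}$ into $\lesssim C_{\textrm{LSI}}/\gamma$ with a clean constant, silently uses $C_{\textrm{LSI}}^{-1}\le L_\pi$; the paper takes this from \cite{vempala2019rapid}, and you should cite it rather than leave it under ``identification of constants''.
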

The first term of the bound in Proposition~\ref{prop:is_wfr} shows that~\eqref{eq:w_semigroup}--\eqref{eq:fr_semigroup} reduce the Kullback--Leibler divergence exponentially fast up to an error due to the time discretisation in~\eqref{eq:w_semigroup}. 
Comparing Proposition~\ref{prop:is_wfr} with \citet[Theorem 1]{vempala2019rapid} we find that, also in discrete time, the rate of WFR is strictly better than that of pure W, in particular the decay of $\KL$ induced by the FR flow allows us to obtain a strict inequality.

As in the continuous time case, obtaining a sharp rate of convergence for the WFR flow is challenging.
Our proof techniques uses results from the mirror descent literature which only show that the FR part reduces the Kullback--Leibler divergence but do not quantify the reduction in terms of rates, thus the rate in Proposition~\ref{prop:is_wfr} coincides with that of the discrete time W flow in \citet[Theorem 1]{vempala2019rapid} and is not sharp in general.

\subsection{A sampling approach to approximate the WFR PDE}
\label{sec:proxy}
The evolution in~\eqref{eq:w_semigroup} and~\eqref{eq:fr_semigroup} describes one step of the WFR in discrete time; we now show that both the W and the FR flow can be naturally implemented via sampling based ideas.

We consider an initial state for the WFR PDE given by the empirical measure $\mu_0^N = N^{-1}\sum_{i=1}^N\delta_{X_0^i}$ where $X_0^i\sim \mu_0$ for $i=1, \dots, N$.  Given $\mu_{n-1}^N =N^{-1}\sum_{i=1}^N\delta_{X_{n-1}^i} $, an weighted empirical measure approximating $\mu_{n-1}$, we can approximate the W flow in~\eqref{eq:w_semigroup} exploiting the well-known connections between the W flow and Langevin dynamics \citep{jordan1998variational}, a discretisation of~\eqref{eq:w_semigroup} over timestep $\gamma$ is given by $N$ copies of the unadjusted Langevin (ULA) scheme
\begin{align}
\label{eq:ula}
    X_{n}^i = X_{n-1}^i + \gamma\nabla\log\pi(X_{n-1}^i)+\sqrt{2\gamma}\xi_{n}^i,
\end{align}
where $\xi_{n}^i\sim \mathcal{N}(0, \textsf{Id})$ for $i=1, \dots, N$. The empirical measure $\mu_{n-1/2}^N :=N^{-1}\sum_{i=1}^N \delta_{X_{n}^i} $ provides an approximation to $\mu_{n-1/2}$.

The expression \eqref{eq:fr_semigroup} shows that given $\mu_{n-1/2}^N$, $\mu_{n}$ can be approximated by importance sampling with unnormalised weights given by $w_n(x) = \left( \pi(x)/\mu_{n-1/2}^N(x) \right)^{1-e^{-\gamma}}$.  When the WFR flow is discretised by first approximating the W flow, then the FR flow, the weights can be computed as
\begin{align}
\label{eq:w_wfr}
    w_n(x) &=\left( \frac{\pi(x)}{\mu_{n-1/2}^N(x)}\right)^{1-e^{-\gamma}} = \left( \frac{\pi(x)}{\mathcal{N}(0, 2\gamma\cdot\textsf{Id})*\left[(\textsf{Id}+\gamma\nabla\log \pi)_{\#}\mu_{n-1}^N\right](x)}\right)^{1-e^{-\gamma}}\\
    &=\left( \frac{\pi(x)}{N^{-1}\sum_{i=1}^N\mathcal{N}(x; X_{n-1}^i+\gamma\nabla\log\pi(X_{n-1}^i), 2\gamma\cdot\textsf{Id})}\right)^{1-e^{-\gamma}},\notag
\end{align}
where we used the definition of pushforward measure and convolution integral to obtain the last expression. This gives the following importance sampling approximation to $\mu_n$: $\mu_n^N:= N^{-1}\sum_{i=1}^N \delta_{X_n^i}$ where $W_n^i\propto w_n(X_n^i)$.

Alternating between one step of W flow approximated via~\eqref{eq:ula} and one step of FR flow approximated via importance sampling with weights~\eqref{eq:w_wfr} we obtain an approximation of WFR PDE through sampling-based procedures (Algorithm~\ref{alg:smc_wfr}). To avoid the weight degeneracy normally introduced by repeated importance sampling steps, after each FR step we add a resampling step \citep{jasra2011inference}.
We will show in Section~\ref{sec:smc} that this approximation of the WFR PDE is an instance of a well known class of Monte Carlo algorithms: sequential Monte Carlo samplers.

Similarly to the approximations of mirror descent \citep{dai2016provable, korba2022adaptive, bianchi2024stochastic} and of birth-death Langevin algorithms \citep{Lu2019, lu2023birth}, the cost of computing the approximate weights is linear in $N$. However, in our case the mixture approximating the denominator is not obtained from a kernel density estimator introduced purely for numerical purposes, but can be justified from the point of view of the W flow. We directly compare this approximation of the WFR flow with that provided by the birth-death Langevin dynamics in Section~\ref{sec:expe}.

\begin{algorithm}
\begin{algorithmic}[1]
\STATE{\textit{Inputs:} learning rate $\gamma$, initial proposal $\mu_0$.}
\STATE{\textit{Initialize:} sample $\widetilde{X}_0^i\sim \mu_0$ and set $W_0^i=1/N$ for $i=1,\dots, N$.}
\FOR{$n=1,\dots, T$}
\IF{$n>1$}
\STATE{\textit{Resample:} draw $ \{\widetilde{X}_{n-1}^i\}_{i=1}^N$ independently from $\{X_{n-1}^i, W_{n-1}^i\}_{i=1}^N$ and set $W_n^i =1/N$ for $i=1,\dots, N$.}
\ENDIF
\STATE{\textit{W flow:} update $X_{n}^i = \bar{X}_{n-1}^i+\sqrt{2\gamma}\xi_{n-1/2}^i$, where $\bar{X}_{n-1}^i=\widetilde{X}_{n-1}^i +\gamma\nabla\log \pi(\widetilde{X}_{n-1}^i)$, $\xi_{n-1/2}^i\sim \mathcal{N}(0, \textsf{Id})$ for $i=1,\dots, N$.}
\STATE{\textit{FR flow:} compute and normalize the weights $W_n^i \propto w_{n}(X_{n}^i)$ for $i=1,\dots, N$, where
\begin{align*}
    w_{n}(x) = \left(\frac{\pi(x)}{N^{-1}\sum_{i=1}^N\mathcal{N}(x;\bar{X}_{n-1}^i, 2\gamma\cdot \textsf{Id})}\right)^{1-e^{-\gamma}} 
\end{align*}} 
\ENDFOR
\STATE{\textit{Output:} $\{ X_n^i, W_n^i\}_{i=1}^N$}
\end{algorithmic}
\caption{SMC-WFR.}\label{alg:smc_wfr}
\end{algorithm}

Convergence of the numerical approximations provided by Algorithm~\ref{alg:smc_wfr} follows using similar arguments to those usually employed in the sequential Monte Carlo (SMC) literature (see Section~\ref{sec:smc} for the connection between Algorithm~\ref{alg:smc_wfr} and these methods). 
We provide convergence bounds for the approximations of $\mu_n$ provided at each step of Algorithm~\ref{alg:smc_wfr}. As in standard SMC literature, in the case of the $\mu$-iterates we focus on the approximation error for measurable bounded test functions $\testfn:\real^d\to\real$ with $\supnorm{\testfn}:=\sup_{x\in \real^d}\vert\testfn(x)\vert<\infty$, a set we denote by $\bounded$. 

We make a stability assumption on the weights $w_n$. This assumption corresponds to the assumption used in \citet[Theorem 3.2]{Lu2019} to prove decay along the FR flow.
To control the error introduced by the approximation of the FR flow we also consider the idealised algorithm in which $\mu_{n-1/2}$ can be computed analytically and the weights are given by
\begin{align}
\label{eq:weights_ideal}
    v_n(x) &=\left( \frac{\pi(x)}{\mu_{n-1/2}(x)}\right)^{1-e^{-\gamma}}.
\end{align}
To ensure this ideal algorithm is also well-behaved we make similar assumptions on $v_n$.

\begin{assumption}
\label{ass:smc0}
We assume that the weights $\weight, \weightN$ are positive everywhere, $\weight(x_n)>0, \weightN(x_n)>0$ for every $x_n\in \real^d$, and upper bounded $\|\weight\|_{\infty}<\infty, \|\weightN\|_{\infty}<\infty$.

\end{assumption}
Assumption~\ref{ass:smc0} requires the weights to be bounded above, a standard assumption in the SMC literature. $\weight(x_n)>0$ ensures that the system does not become extinct (i.e. the weights have never all simultaneously taken the
value zero).
This assumption and some regularity on $\pi$ allow us to obtain the following  finite-$N$ error bound whose proof is provided in Appendix~\ref{app:lp}. We empirically confirm the rate in Proposition~\ref{prop:lp} in Appendix~\ref{app:prop_confirmation}.
\begin{proposition}
\label{prop:lp}
Under Assumption~\ref{ass:smc0}, if $\pi$ is upper and lower bounded, for all $n\geq0$ and for every $ \testfn\in \bounded$, 
\begin{align*}
\mathbb{E}\left[\left\lvert\sum_{i=1}^NW_n^i \testfn(X_n^i) -\int \testfn(x)\mu_n(x)dx\right\rvert^p\right]^{1/p}\leq \frac{\bar{C}_{p,n}(\gamma) }{N^{1/2}}
\end{align*}
for some finite constant $\bar{C}_{p,n}(\gamma) $.
\end{proposition}

This result shows that the error accumulated by running Algorithm~\ref{alg:smc_wfr} decays at rate $N^{-1/2}$. We obtain this result under relatively strong assumptions both on $\pi$ and on $\testfn$, which we envisage could be avoided by considering appropriate integrability conditions on both (e.g.~\citep{ agapiou2017importance}) but would substantially complicate the presentation.
For simplicity, in our arguments we only consider multinomial resampling \citep{smc:methodology:GSS93} although lower variance resampling schemes exist \citep{Gerber2019}.

The strong law of large numbers can be obtained from the previous result using Markov's inequality within a Borel-Cantelli argument as shown in e.g. \citet[Appendix D]{amj26:BADJ20}.
\begin{corollary}
\label{prop:slln}
Under Assumption~\ref{ass:smc0}, if $\pi$ is upper and lower bounded, for all $n\geq0$ and for every $ \testfn\in \bounded$, we have $\bgN(\testfn)\asconverges\bg(\testfn)$.
\end{corollary}

Using standard techniques (e.g. \cite{berti2006almost}) given in detail for the context of interest in \citet[Supplementary Material, Theorem 1]{schmon2018large}, the result of Corollary~\ref{prop:slln} can be strengthened to the convergence of the measures in the weak topology.
\begin{corollary}
\label{prop:asw}
Under Assumption~\ref{ass:smc0}, if $\pi$ is upper and lower bounded, $\bgN$ converges almost surely in the weak topology to $\bg$, $\bgN\rightharpoonup\bg$. 
\end{corollary}

\section{Sequential Monte Carlo as Approximation of WFR Flows}
\label{sec:smc}

We introduce in this section a class of algorithms known as sequential Monte Carlo samplers \citep{del2006sequential} and highlight how different algorithmic choices within this class lead to approximations of the PDE flows described in Section~\ref{sec:pde}, including Algorithm~\ref{alg:smc_wfr}.

\subsection{Sequential Monte Carlo samplers}

Sequential Monte Carlo (SMC) samplers \citep{del2006sequential} provide a particle approximation of the sequence of distributions $(\hat{\eta}_n)_{n\geq 0}$ using clouds of $N$ weighted particles $\{X_n^i, W_n^i\}_{i=1}^N$. 
To build an SMC sampler approximating the sequence $(\hat{\eta}_n)_{n\geq 0}$ one needs a family of Markov kernels $(M_n)_{n\geq 1}$ used to propagate forward the particles, a sequence of backward kernels $(L_{n-1})_{n \geq 1}$ and a resampling scheme.

A resampling scheme is a selection mechanism which given a set of weighted samples $\lbrace X^i, W^i\rbrace_{i=1}^N$ outputs a sequence of equally weighted samples $\lbrace \widetilde{X}^i, 1/N\rbrace_{i=1}^N$ in which $\widetilde{X}^i = X^j$ for some $j$ for all $i=1,\ldots, N$. For a review of commonly used resampling schemes see \citep{Gerber2019}.

At iteration $n$ the weighted particle set $\{X_{n-1}^i, W_{n-1}^i\}_{i=1}^N$ approximating $\hat{\eta}_{n-1}$ is resampled to obtain the equally weighted particle set $\{\widetilde{X}_{n-1}^i, 1/N\}_{i=1}^N$ and the kernel $M_{n}$ is applied to propose new particle locations $X_{n}^i\sim M_{n}(\widetilde{X}_{n-1}^i, \cdot)$. At this stage the particle set approximates $\hat{\eta}_{n-1}$, to obtain an approximation of $\hat{\eta}_n$ an importance sampling step is applied with weights proportional to
\begin{align}
\label{eq:smc_generalw}
    w_{n}(x_{n-1}, x_n) = \frac{\hat{\eta}_n(x_n)L_{n-1}(x_n, x_{n-1})}{\hat{\eta}_{n-1}(x_{n-1})M_{n}(x_{n-1}, x_{n})}.
\end{align}
Due to the renormalisation of the weights we do not need the normalising constants of $\hat{\eta}_n, \hat{\eta}_{n-1}$.

The resulting procedure is summarised in Algorithm~\ref{alg:smc} whose output is a set of weighted samples which approximate $\hat{\eta}_n$. We refer the reader to \cite{chopin2020introduction} for a comprehensive introduction to SMC. Annealed importance sampling can be obtained from Algorithm~\ref{alg:smc} by omitting the resampling step; however, since it has been empirically \citep{jasra2011inference} and theoretically \citep{syed2024optimised} shown that the addition of the resampling step improves the accuracy of the algorithm  we do not further consider AIS.

The particle approximations provided by SMC have been widely studied \citep{smc:theory:Del04}. A complete survey of the convergence results available in the literature is beyond the scope of this work, but we highlight that finite-$N$ bounds as Proposition~\ref{prop:lp} are commonly obtained as well as law of large numbers \citep{smc:theory:CD02}, bias estimates \citep{olsson2004bootstrap} and central limit theorems \citep{chopin2004central}.

\begin{algorithm}
\begin{algorithmic}[1]
\STATE{\textit{Inputs:} sequences of  distributions $(\hat{\eta}_n)_{n\geq 0}$, Markov kernels $(M_n)_{n\geq1}$, backward kernels $(L_{n-1})_{n\geq1}$.}
\STATE{\textit{Initialize:} sample $\widetilde{X}_0^i\sim \hat{\eta}_0$ and set $W_0^i=1/N$ for $i=1,\dots, N$.}
\FOR{$n=1,\dots, T$}
\IF{$n>1$}
\STATE{\textit{Resample:} draw $ \{\widetilde{X}_{n-1}^i\}_{i=1}^N$ independently from $\{X_{n-1}^i, W_{n-1}^i\}_{i=1}^N$ and set $W_n^i =1/N$ for $i=1,\dots, N$.}
\ENDIF
\STATE{\textit{Propose:} draw $X_n^i\sim M_{n}(\widetilde{X}_{n-1}^i, \cdot)$ for $i=1,\dots, N$.}
\STATE{\textit{Reweight:} compute and normalize the weights $W_n^i \propto w_{n}(\widetilde{X}_{n-1}^i, X_n^i)$ in~\eqref{eq:smc_generalw} for $i=1,\dots, N$.
}
\ENDFOR
\STATE{\textit{Output:} $\{X_n^i, W_n^i\}_{i=1}^N$}
\end{algorithmic}
\caption{SMC samplers \citep{del2006sequential}.}\label{alg:smc}
\end{algorithm}

Several choices of kernels $M_n, L_n$ are possible. We refer to \citet[Section 3.3]{del2006sequential} for a thorough discussion but mention here some options which will be relevant in the following sections.
A popular choice in the literature is to pick $M_n$ to be an MCMC kernel of invariant distribution $\hat{\eta}_n$. In this case, choosing
\begin{align*}
    L_{n-1}(x_n, x_{n-1}) = \frac{ \hat{\eta}_n(x_{n-1}) M_n(x_{n-1}, x_n)}{\hat{\eta}_n(x_n)},
\end{align*}
i.e. the reversed Markov kernel that is associated with $M_n$ allows to obtain a simplified expression for the weights $w_n(x_{n-1}) = \hat{\eta}_n(x_{n-1})/\hat{\eta}_{n-1}(x_{n-1})$.
An alternative to MCMC kernels is given by unadjusted schemes \citet[Section 2.3]{dai2022invitation}. In this case the weights do not simplify, but in some cases the term $L_{n-1}/M_n\approx 1$ and can be ignored (see Section~\ref{sec:variants} for a deeper discussion for the unadjusted Langevin algorithm).

\citet[Proposition 1]{del2006sequential} derives the expression of the optimal backward kernel $L_{n-1}$ which minimises the variance of the weights. This kernel is intractable in most cases; a pragmatic but suboptimal choice is given by the following approximation of the optimal kernel
\begin{align}
\label{eq:optimal_L}
L_{n-1}(x_n, x_{n-1}) = \frac{\hat{\eta}_{n-1}(x_{n-1})M_n(x_{n-1}, x_n)}{\int \hat{\eta}_{n-1}(z)M_n(z, x_n)dz}.
\end{align}
The resulting weight expression is
\begin{align*}
    w_n(x_n)&= \frac{\hat{\eta}_{n}(x_{n})}{\int \hat{\eta}_{n-1}(z)M_n(z, x_n)dz}.
\end{align*}

There is a vast literature on choosing the sequence $(\hat{\eta}_n)_{\geq 0}$. The following section discusses how different choices lead to numerical approximation of the PDEs described above.

\subsection{Wasserstein--Fisher--Rao flow through SMC}
\begin{figure}
    \centering
    \includegraphics[width=0.8\linewidth]{ 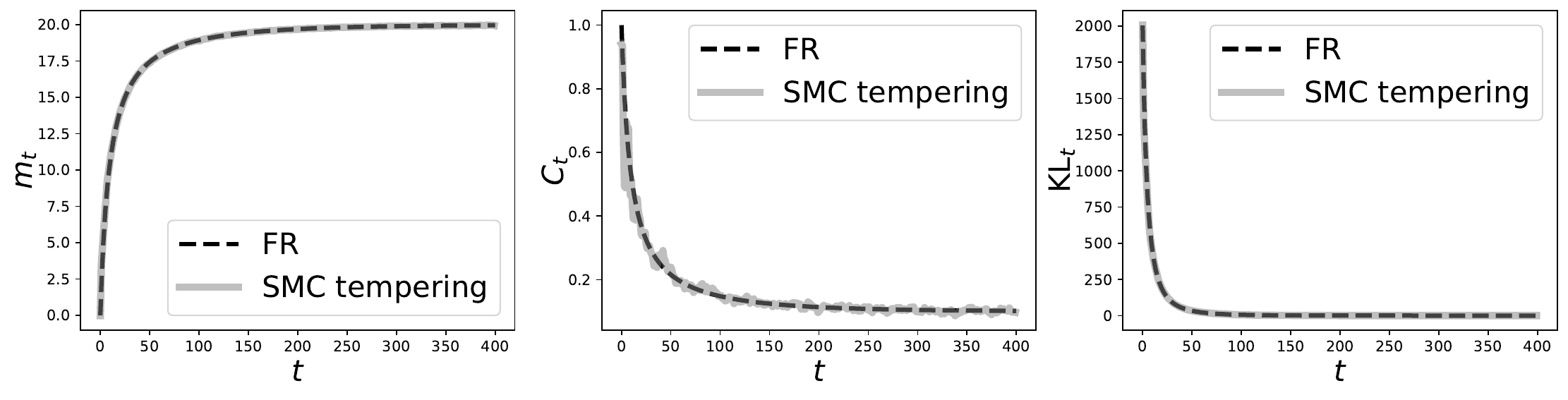}
    \includegraphics[width=0.8\linewidth]{ 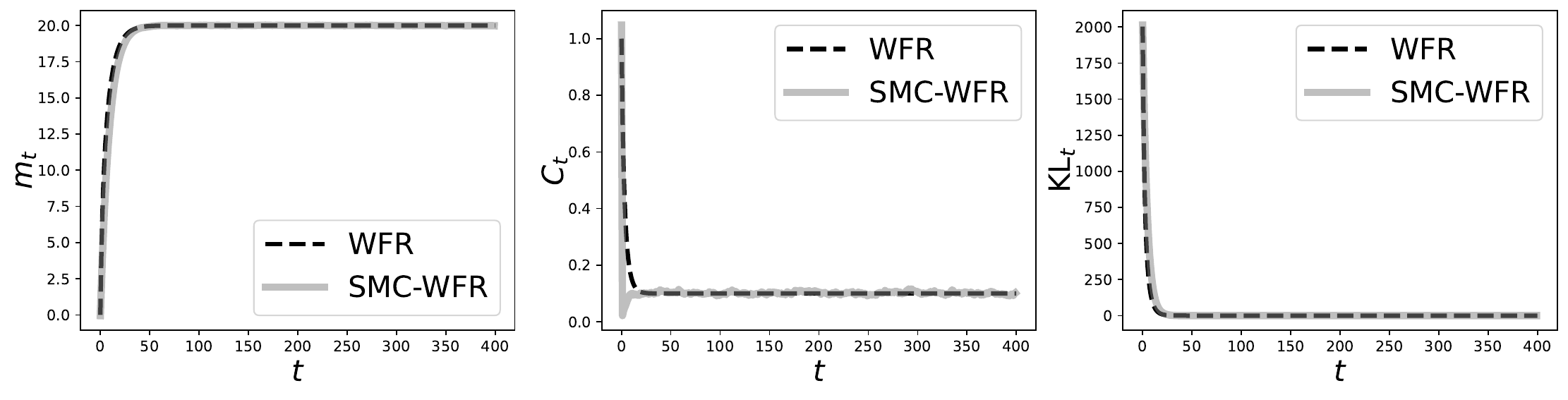}
    \caption{Comparison of evolution of mean, variance and $\KL$ of the exact PDE flows and approximations provided by Algorithm~\ref{alg:smc} with target distribution the 1D Gaussian $\pi(x) = \mathcal{N}(x; 20, 0.1)$ and initial distribution $\mu_0(x) = \mathcal{N}(x; 0, 1)$. Top row: Fisher--Rao; bottom row: Wasserstein--Fisher--Rao.}
    \label{fig:gaussians_exact_smc}
\end{figure}

Consider the sequence of distributions $(\hat{\eta}_n)_{n\geq 0}$ given by
\begin{align}
\label{eq:md}
    \hat{\eta}_n(x) = \mu_n(x)\propto \pi(x)^{1-e^{-\gamma}}\mu_{n-1}(x)^{e^{-\gamma}}
\end{align}
where $\gamma = t_n-t_{n-1}$ is a time discretisation step
and $M_n$ given by one step of an unadjusted Langevin algorithm targeting $\pi$
\begin{align}
\label{eq:ula_kernal}
M_n(x_{n-1}, x_n) = \mathcal{N}(x_n;x_{n-1}+\gamma\nabla \log \pi(x_{n-1}), 2\gamma\cdot \textsf{Id}).
\end{align}
Using the approximation of the optimal backward kernel~\eqref{eq:optimal_L} we obtain the weights
\begin{align}
\label{eq:w_wfr_smc}
w_{n}(x_{n}) 
&= \frac{\mu_{n}(x_{n})}{\int \mathcal{N}(x_{n}; x_{n-1}+\gamma\nabla\log\pi(x_{n-1}), 2\gamma\cdot\textsf{Id})\mu_{n-1}(x_{n-1})dx_{n-1} }.
\end{align}
For small $\gamma$, we have $\int \mathcal{N}(x_{n}; x_{n-1}+\gamma\nabla\log\pi(x_{n-1}), 2\gamma\cdot\textsf{Id})\mu_{n-1}(x_{n-1})dx_{n-1} \approx \mu_{n-1}(x_n)$ and we can approximate the weights above with 
\begin{align*}
    w_{n}(x_n) &=\frac{\mu_{n-1}^{e^{-\gamma}}(x_n)\pi^{1-e^{-\gamma}}(x_n)}{\int \mathcal{N}(x_{n}; x_{n-1}+\gamma\nabla\log\pi(x_{n-1}), 2\gamma\cdot\textsf{Id})\mu_{n-1}(x_{n-1})dx_{n-1}}\\
    &\approx \left(\frac{\pi(x_n)}{\int \mathcal{N}(x_{n}; x_{n-1}+\gamma\nabla\log\pi(x_{n-1}), 2\gamma\cdot\textsf{Id})\mu_{n-1}(x_{n-1})dx_{n-1}}\right)^{1-e^{-\gamma}}.
\end{align*}
which coincides with~\eqref{eq:w_wfr} up to replacing $\mu_{n-1}$ with a particle approximation (which will be necessary since $\mu_{n-1}$ is unknown also in this case). Different approximations are possible, we choose this one for its similarity with Algorithm~\ref{alg:smc_wfr}.

It is now easy to see that the SMC sampler in Algorithm~\ref{alg:smc} with sequence of distributions~\eqref{eq:md}, $M_n$ as the unadjusted Langevin kernel~\eqref{eq:ula} and $L_{n-1}$ the approximation of the optimal kernel~\eqref{eq:optimal_L} closely resembles Algorithm~\ref{alg:smc_wfr} and provides an approximation of the WFR flow.
Figure~\ref{fig:gaussians_exact_smc} second row compares the approximation provided by Algorithm~\ref{alg:smc_wfr} with the exact evolution of the WFR flow for a 1D Gaussian distribution $\pi(x)=\mathcal{N}(x; 20, 0.1)$.

The main drawback of  Algorithm~\ref{alg:smc_wfr} is its higher computational cost. For small number of particles $N$ this additional cost is balanced by the improved convergence speed (Section~\ref{sec:gm1}) but for large $N$ the $\mathcal{O}(N)$ cost of the weights~\eqref{eq:w_wfr} might become prohibitive. 
In addition~\eqref{eq:w_wfr} uses a kernel density estimator-like object to approximate $\mu_{n-1/2}$, which is unlikely to be a good approximation when the dimension is large.

We now discuss some related algorithms which might help circumvent some of these issues.

\subsection{SMC tempering}
\label{sec:smc_fr}
     As discussed in Section~\ref{sec:smc_wfr}, the solution operator of the infinite time FR flow is given by $\mu_t(x) \propto \pi(x)^{1-e^{-t}}\mu_0(x)^{e^{-t}}$.
    Consider a time discretisation $0=t_0, t_1, \dots$ with $t_n-t_{n-1} = \gamma$ and the corresponding discrete time iterates
    \begin{align}
    \label{eq:tempering}
        \hat{\eta}_n(x) = \mu_n(x) \propto \pi(x)^{1-e^{-t_n}}\mu_0(x)^{e^{-t_n}} = \mu_{n-1}(x)\left(\frac{\pi(x)}{\mu_0(x)}\right)^{e^{-t_{n-1}}-e^{-t_n}}
    \end{align}
which correspond to the tempering iterates popular in the SMC literature \citep{chopin2023connection}.

Consider $\hat{\eta}_n$ to be the $n$-th tempering iterate~\eqref{eq:tempering} and as Markov kernel $M_n$ a simple random walk Metropolis (RWM) kernel targeting $\hat{\eta}_{n}=\mu_n$.
Since this choice of $M_n$ leaves $\hat{\eta}_{n}$ invariant, the weights simplify to 
\begin{align}
\label{eq:weightsSMC tempering}
    w_n(x_{n-1}) = \frac{\hat{\eta}_n(x_{n-1})}{\hat{\eta}_{n-1}(x_{n-1})}=\left(\frac{\pi(x_{n-1})}{\mu_0(x_{n-1})}\right)^{e^{-t_{n-1}}-e^{-t_n}}.
\end{align}
The resulting SMC algorithm approximates the FR flow as is normally referred to as tempering SMC.
Figure~\ref{fig:gaussians_exact_smc} first row compares the approximation provided by Algorithm~\ref{alg:smc} and the evolution of the FR flow for a 1D Gaussian distribution $\pi(x)=\mathcal{N}(x; 20, 0.1)$.

As noted in \cite{chen_efficient_2024}, pure FR dynamics cannot change the support of the distribution, so additional steps need to be added to explore the space. These steps can change the dynamics and can become problematic in high dimensions.
The use of a RWM kernel guarantees that the transport dynamics do not alter the sequence of distributions by leaving $\hat{\eta}_n$ invariant. 

\subsection{Related algorithms: SMC-ULA \& SMC-MALA}
\label{sec:variants} 

The main drawback of approximating the WFR flow via Algorithm~\ref{alg:smc_wfr} is the $\mathcal{O}(N)$ cost required to compute the weights~\eqref{eq:w_wfr_smc}.
We consider here two related algorithms that allow to reduce the computational cost of the weight computation.

The first alternative consists in considering a different approximation to~\eqref{eq:w_wfr_smc}. Since for small $\gamma$, we have $\int \mathcal{N}(x_{n}; x_{n-1}+\gamma\nabla\log\pi(x_{n-1}), 2\gamma\cdot\textsf{Id})\mu_{n-1}(x_{n-1})dx_{n-1} \approx \mu_{n-1}(x_n)$ we can approximate the weights~\eqref{eq:w_wfr_smc} with
\begin{align*}
    w_{n}(x_n) &=\frac{\mu_{n-1}^{e^{-\gamma}}(x_n)\pi^{1-e^{-\gamma}}(x_n)}{\int \mathcal{N}(x_{n}; x_{n-1}+\gamma\nabla\log\pi(x_{n-1}), 2\gamma\cdot\textsf{Id})\mu_{n-1}(x_{n-1})dx_{n-1}}\approx \left(\frac{\pi(x_n)}{\mu_{n-1}(x_{n})}\right)^{1-e^{-\gamma}}.
\end{align*}
Using the recursive structure of~\eqref{eq:md} we have
\begin{align*}
\mu_n(x) \propto \pi(x)^{1-e^{-\gamma}}\mu_{n-1}(x)^{e^{-\gamma}} \propto \pi(x)^{1-e^{-2\gamma}}\mu_{n-2}(x)^{e^{-2\gamma}}\dots \propto \mu_{0}(x)^{e^{-n\gamma}}\pi(x)^{1-e^{-n\gamma}}
\end{align*}
and we obtain the simplified weights
\begin{align}
\label{eq:w_smcula}
    w_{n}(x_n) &\approx \left(\frac{\pi(x_n)}{\mu_{0}(x_{n})}\right)^{(1-e^{-\gamma})e^{-(n-1)\gamma}},
\end{align}
which do not depend on $\mu_{n-1}$ but only on $\pi, \mu_0$ and the time discretisation and can be computed in $\mathcal{O}(1)$ cost w.r.t. $N$. We call this SMC algorithm SMC-ULA. 
Our experimental results in Section~\ref{sec:expe} show that SMC-ULA can be unstable even if SMC-WFR is stable, we believe the use of the different approximations causes this behaviour and is linked to the fact that importance weights obtained from mixture proposals (as~\eqref{eq:w_wfr}) have lower variance than those obtained from single proposals (as~\eqref{eq:w_smcula}) \citep{elvira2019generalized}.

A second alternative can be obtained by selecting a different approximation of the W flow. Consider again the sequence~\eqref{eq:md} but now replace the unadjusted Langevin kernel~\eqref{eq:ula_kernal} with its adjusted version obtained by adding a Metropolis--Hastings accept/reject step, leading to the Metropolis adjusted Langevin algorithm (MALA; \cite{roberts1996exponential}). We call this SMC algorithm SMC-MALA.

Since MALA leaves $\pi$ invariant we can select the backward kernel
\begin{align*}
L_{n-1}(x_n, x_{n-1})  = \frac{M_n(x_{n-1}, x_n)\pi(x_{n-1})}{\int M_n(z, x_n)\pi(z)d z}=\frac{M_n(x_{n-1}, x_n)\pi(x_{n-1})}{\pi(x_n)},
\end{align*}
which leads to the weights 
\begin{align*}
w_n(x_{n-1}, x_n) &=\frac{\hat{\eta}_n(x_n)\pi(x_{n-1})}{\hat{\eta}_{n-1}(x_{n-1})\pi(x_n)} = \left(\frac{\mu_{n-1}(x_n)}{\pi(x_n)}\right)^{e^{-\gamma}}\frac{\pi(x_{n-1})}{\mu_{n-1}(x_{n-1})}.
\end{align*}
Using again the recursive structure of~\eqref{eq:md} the above simplifies to
\begin{align*}
w_n(x_{n-1}, x_n) &=\frac{\hat{\eta}_n(x_n)\pi(x_{n-1})}{\hat{\eta}_{n-1}(x_{n-1})\pi(x_n)} = \left(\frac{\mu_{0}(x_n)}{\pi(x_n)}\right)^{e^{-n\gamma}}\left(\frac{\pi(x_{n-1})}{\mu_{0}(x_{n-1})}\right)^{e^{-(n-1)\gamma}}.
\end{align*}

The use of a MALA kernel has two advantages over the one described in the previous section: (a) the weights can be computed in $\mathcal{O}(1)$ cost w.r.t. $N$ (against the  $\mathcal{O}(N)$ cost of~\eqref{eq:w_wfr}) and (b) the discretisation bias introduced by replacing the continuous time Langevin diffusion with its time-discretised approximation~\eqref{eq:ula} is removed. While related to SMC-WFR, these algorithms are poor approximations of the WFR flow as shown in Figure~\ref{fig:gaussians_exact_smc_variants} in Appendix~\ref{app:alternatives}.

\section{Numerical Experiments}
\label{sec:expe}

We now present some numerical experiments to support our results in the previous sections. 
First, we show that SMC-WFR in Algorithm~\ref{alg:smc_wfr} outperforms the birth-death Langevin dynamics as approximation of WFR and show that in a low dimensional setting SMC-WFR converges faster as predicted by the theoretical results in \citep{lu2023birth}.
We then compare our SMC approximation of the WFR flow with an approximation the W flow given by $N$ parallel ULA chains and the SMC approximation of the FR flow (i.e. tempering SMC) and compare Algorithm~\ref{alg:smc_wfr} with the variants introduced in Section~\ref{sec:variants}. Our aim is to identify scenarios in which the extra computational cost requires by SMC-WFR is beneficial in achieving faster convergence to the target.

Due to the impossibility of accurately approximating $\KL$ in general scenarios, to evaluate convergence we consider 
the maximum mean discrepancy with standard Gaussian kernel (MMD; \cite{gretton2012kernel}) and the average Wasserstein-1 distance for each 1D marginal ($W_1$).
Code to reproduce our experiments is available at \small{ \url{https://github.com/FrancescaCrucinio/SMC-WFR}}.

\normalsize

\subsection{Birth-death Langevin and sequential Monte Carlo Wasserstein Fisher Rao}
\label{sec:bdl}

 \begin{figure}[t]
 	\centering
 	\begin{tikzpicture}[every node/.append style={font=\normalsize}]
  \node (img1) {\includegraphics[width = 0.3\textwidth]{ 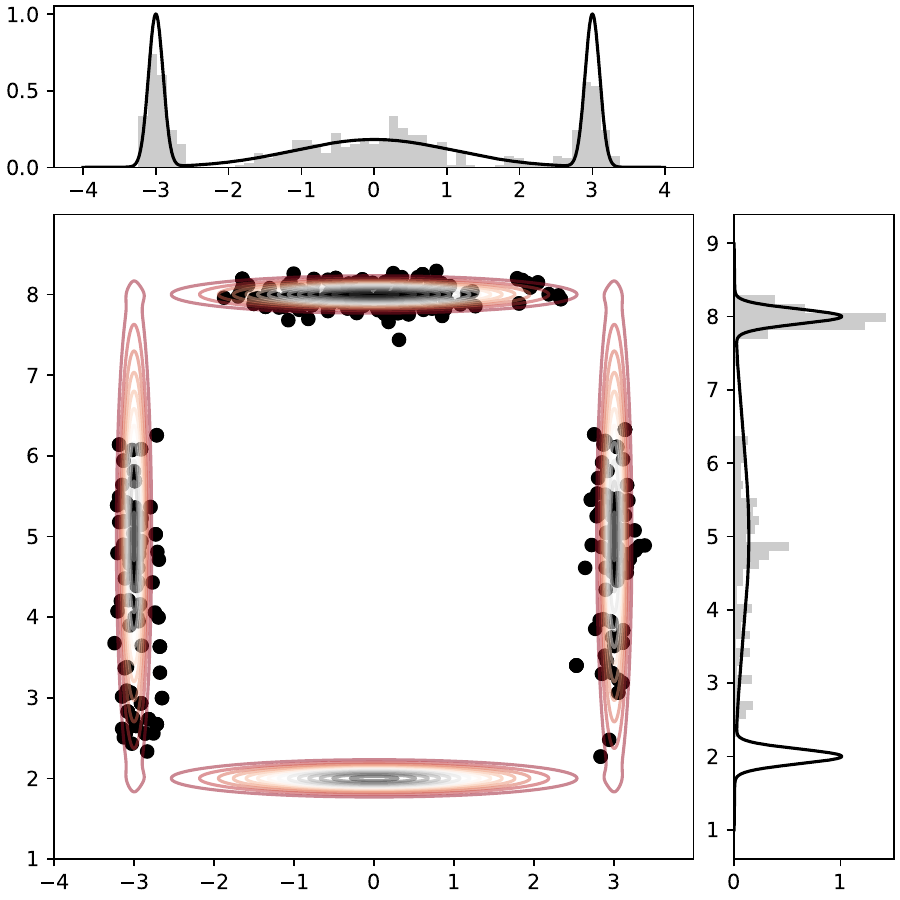}};
 		\node[right=of img1, node distance = 0, xshift = -1cm] (img2) {\includegraphics[width = 0.3\textwidth]{ 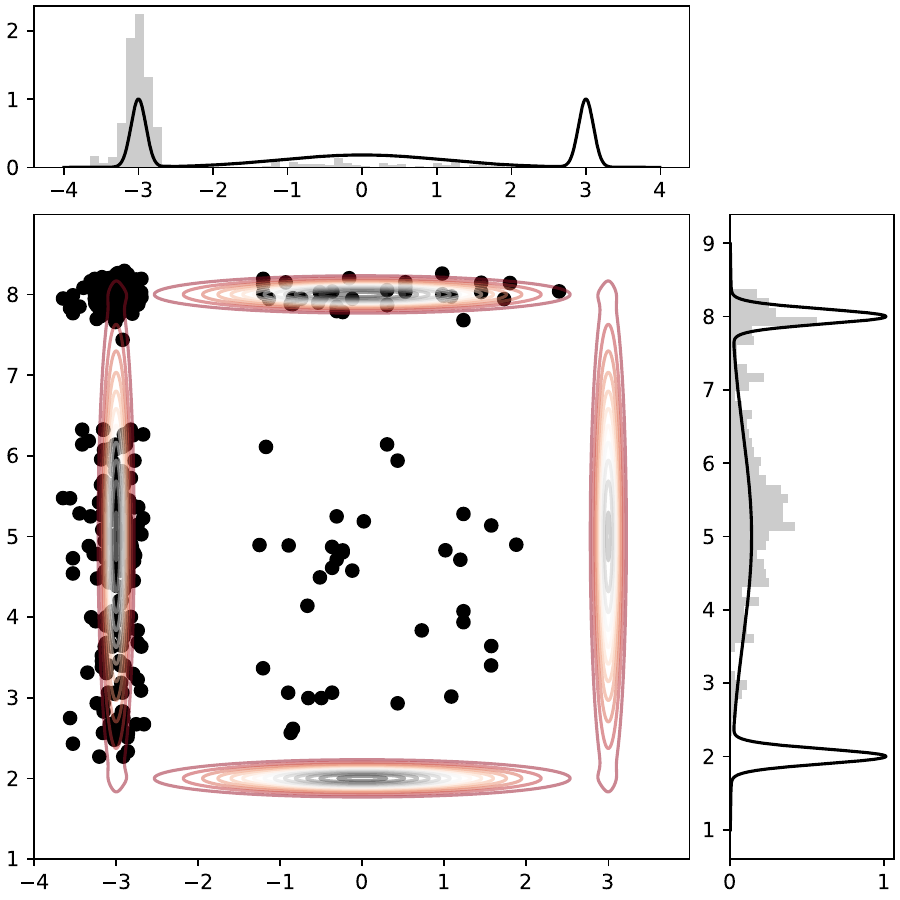}};
    \node[below=of img1, node distance = 0, anchor = center, yshift = 0.8cm] {BDL-PDE};
      \node[below=of img2, node distance = 0, anchor = center, yshift = 0.8cm] {BDL-KL};
   \node[right=of img2, node distance = 0, xshift = -1cm] (img3) {\includegraphics[width = 0.3\textwidth]{ 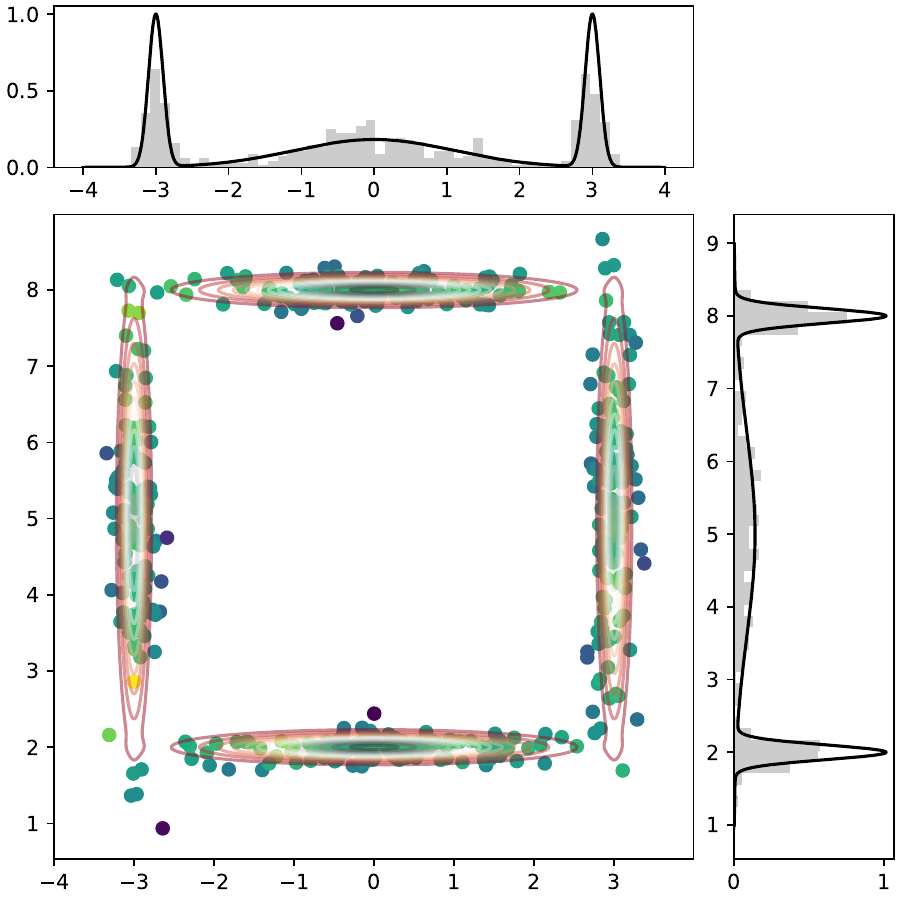}};
             \node[below=of img3, node distance = 0, yshift = 0.8cm] {SMC-WFR};
 	\end{tikzpicture}
 	\caption{Comparison of approximations of the target $\pi(x) = \sum_{i=1}^4 w_i \mathcal{N}(x; m_i, C_i)$ for the birth-death Langevin algorithms and our SMC approximation. For the latter the colour of the particles corresponds to the weight (brighter corresponds to higher weight). We compare both the joint distribution and the marginals.}
 	\label{fig:4modes}
 \end{figure}

We compare the approximation of the WFR flow~\eqref{eq:wfr} provided by  Algorithm~\ref{alg:smc_wfr} with that given by the birth-death Langevin (BDL) algorithms of \cite{lu2023birth, Lu2019}. 
We consider the two dimensional Gaussian mixture of \citet{Lu2019}, given by $\pi(x) = \sum_{i=1}^4 w_i \mathcal{N}(x; m_i, C_i)$ with
\begin{align*}
& w_i = 1/4, i =1,\cdots,4,\ m_1 = (0,8)^T,
m_2 = (0,2)^T,
m_3 = (-3,5)^T, m_4 = (3,5)^T,\\
& \Sigma_1 = \Sigma_2 = \begin{pmatrix}
              1.2 & 0\\
              0 & 0.01
             \end{pmatrix},
             \Sigma_3 = \Sigma_4 = \begin{pmatrix}
              0.01 & 0\\
              0&  2
             \end{pmatrix}.
\end{align*}

The initial distribution $\mu_0$ is a Gaussian with $m_0 = m_1$ and covariance matrix $\Sigma_0=0.3\cdot \textsf{Id}$. We use $N = 500$, $T=1000$, $\gamma = 0.01$ and select $h = \gamma$ for the kernel density estimator used within BDL.
We compare the results in terms of mean, variance, $W_1$ and MMD.
Both algorithms have $\mathcal{O}(N^2)$ cost due to the computation of the birth-death rates and weights, in fact, they also have the same wallclock time when using the same number of iterations and of particles (Table~\ref{tab:bdl_wfr}).

 \begin{table}[t]
 \centering
 \begin{small}
 \begin{tabular}{l|cc|cc|cc}
 Method & $\mse$ mean & $\mse$ covariance & $W_1$ & MMD & runtime (s) & MMD$<\epsilon$\\
 \hline\noalign{\smallskip}
 BDL-PDE \citep{Lu2019} & 1.930 & 4.600 & 1.325 & 0.123 & 3.96 & 977\\
 BDL-KL \citep{lu2023birth} & 2.406 & 5.645 & 1.451 & 0.153 & 4.87 & 980\\
 SMC-WFR & 0.007 & 0.043 & 0.176 & 0.005 & 3.79 & 289\\
 \end{tabular}
 \end{small}
 \caption{Comparison of the approximation of the WFR flow~\eqref{eq:wfr} provided by the birth-death Langevin (BDL) algorithms and Algorithm~\ref{alg:smc_wfr}. We compare the mean squared error $\mse$ for mean and covariance, the average Wasserstein-1 distance over dimension and the maximum mean discrepancy with Gaussian kernel (MMD). We also report the average runtime and the number of iterations needed to have MMD$<\epsilon$ for $\epsilon = 0.05$. All results are averaged over 50 replicates.}
 \label{tab:bdl_wfr}
 \end{table}

The results provided by our SMC approximation are considerably better than those obtained by the BDL algorithms. We attribute this improved performance to three factors. First, the approximations of the birth-death rate~\eqref{eq:rate_bdl} employed within the BDL algorithms are obtained using kernel density estimation (KDE) while the use of the convolution in~\eqref{eq:w_wfr} is motivated by the presence of the W flow and therefore does not add a layer of approximation as does KDE. Secondly, in SMC-WFR the reweighting step corresponds to the exact solution of the FR flow~\eqref{eq:fr_semigroup} in which $\mu_n$ is replaced by a particle approximation $\mu_n^N$; BDL does not use the exact solution but a discretisation of the FR flow obtained through birth-death dynamics. Thirdly, importance (re)sampling seems to be a more algorithmically stable way to approximate the FR flow in the sampling context as it ensures that the number of samples $N$ remains fixed with no additional cost while the BDL dynamics require additional steps to endure that $N$ is kept constant at each time step.

\subsection{Robustness to initial distribution}
\label{sec:gm1}

Figure~\ref{fig:gaussians_exact} suggests that WFR is more robust to the choice of initial distribution: when $\mu_0$ is more concentrated than $\pi$, WFR is closer to the FR flow, viceversa, WFR is closer to W when  $\mu_0$ is more diffuse than $\pi$.
We empirically show that this behaviour is maintained by Algorithm~\ref{alg:smc_wfr} when compared with 
ULA~\eqref{eq:ula} (as approximation of the W flow) and the SMC tempering algorithm in Section~\ref{sec:smc_fr} (as approximation of the FR flow).

We fix $\mu_0(x)=\mathcal{N}(x;0, 1)$ and consider two 1D Gaussian mixture targets
\begin{align*}
\pi_1(x) =& \frac{1}{2}\mathcal{N}(x;0, 1) + \frac{1}{2}\mathcal{N}(x;m, 1),\qquad \pi_2(x) = \frac{1}{2}\mathcal{N}(x;5, 0.1) + \frac{1}{2}\mathcal{N}(x;6, 0.1).
\end{align*}
The first target is challenging for the W flow as its log-Sobolev inequality increases exponentially with $m$ \citep{schlichting2019poincare}, while the second target is challenging for FR due to the presence of two very peaked modes.

For each algorithm we set $N=200$ and $\gamma=0.05$.
For ULA we run $N$ parallel chains using the same $\gamma$.  
Figure~\ref{fig:robust} shows the decay of $W_1$ against iteration number averaged over 50 repetitions for the three algorithms. 
For $\pi_1$ SMC tempering is the faster than ULA and viceversa for $\pi_2$.
SMC-WFR is able to follow the behaviour of the fastest converging algorithm in both cases.
However, the cost of SMC-WFR is higher than ULA and SMC tempering.
When taking runtime into account, SMC-WFR achieves faster convergence than ULA and is comparable to SMC tempering for $\pi_1$, for $\pi_2$ SMC-WFR is faster than SMC tempering and comparable to ULA.

\begin{figure}
    \centering
    \includegraphics[width=0.42\linewidth]{ 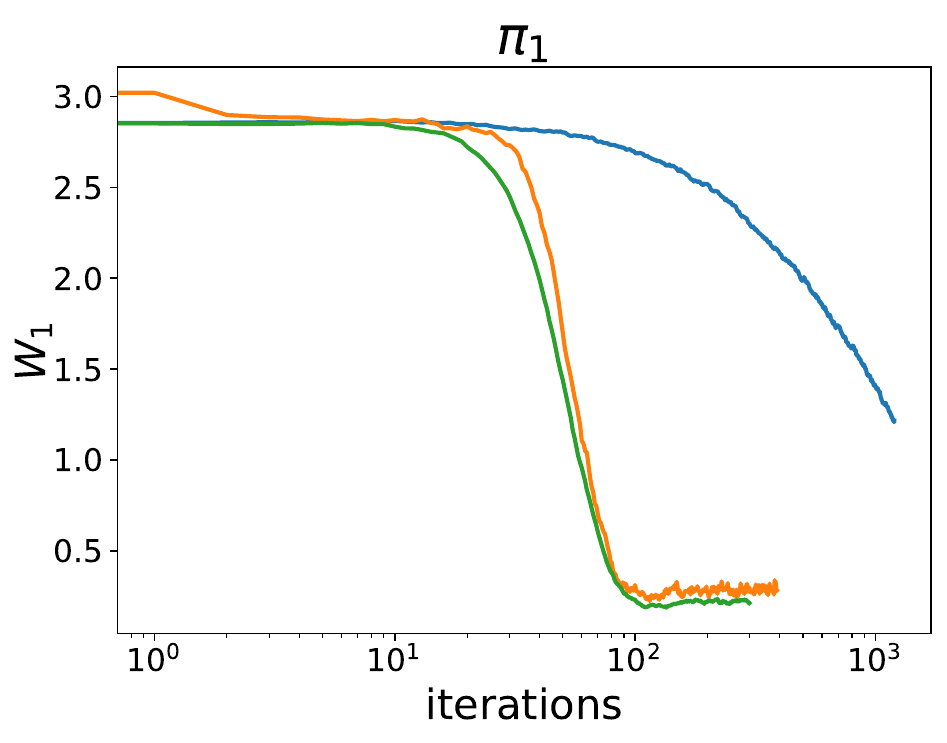}
    \includegraphics[width=0.4\linewidth]{ 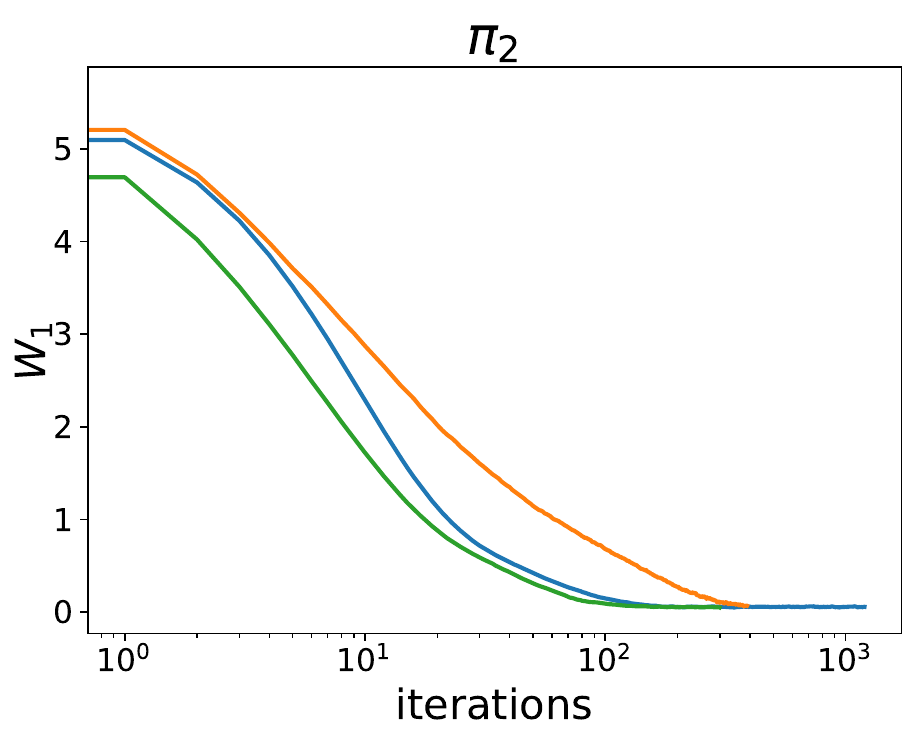}
    \includegraphics[width = 0.5\textwidth]{ 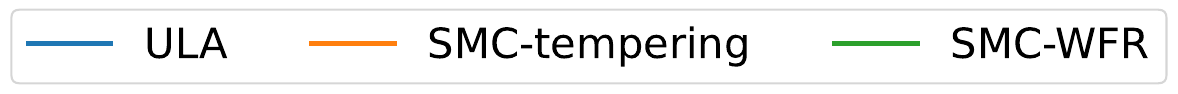}
    \caption{1D Gaussian mixture (Section~\ref{sec:gm1}): Evolution of $W_1$ for ULA, SMC tempering and SMC-WFR along iterations averaged over 50 repetitions. The initial distribution is $\mu_0(x) = \mathcal{N}(x; 0, 1)$. Left: $\pi_1$ with $m=6$ is more diffuse than $\mu_0$. Right: $\pi_2$ is more concentrated than $\mu_0$. }
    \label{fig:robust}
\end{figure}

\subsection{Comparison with other Monte Carlo algorithms}
\label{sec:comparison}
To assess whether the approximation of WFR in Algorithm~\ref{alg:smc_wfr} preserves the superior convergence speed of the WFR flow we compare SMC-WFR with several sampling algorithms in the literature.
In particular we consider $N$ parallel ULA chains~\eqref{eq:ula} and $N$ parallel MALA chains \citep{roberts1996exponential} and the SMC tempering algorithm in Section~\ref{sec:smc_fr}. We also include SMC-ULA and SMC-MALA (Section~\ref{sec:variants}).
We consider the following targets (see Appendix~\ref{app:targets} for details): 2D and 20D Gaussian mixture, 2D banana distribution, 2D donut distribution, 20D Gaussian.

We fix $\mu_0(x)=\mathcal{N}(x;\bf{0}, \textsf{Id})$.
We tune the variance of the proposal for MALA and SMC-MALA to achieve $57\%$ acceptance rate and the variance of RWM for SMC tempering to achieve $23\%$ acceptance rate \citep{roberts2001optimal}. 
For SMC-WFR and SMC-ULA we set $\gamma$ depending on the target, we use this $\gamma$ also for the reweighting step in SMC-MALA and SMC tempering (see Appendix~\ref{app:targets} for the experimental setup and additional results).
 
As the algorithms have different computational cost, we compare them both in terms of number of iterations and in terms of wall-clock time. We summarise our main takeaways as follows:
\begin{enumerate}[label=(\alph*)]
    \item \textbf{Shape of the target:} When ULA works well, e.g. targets which satisfy a log-Sobolev inequality~\eqref{eq:lsi} with small $C_{\textrm{LSI}}$, SMC-WFR and SMC-ULA (both approximating the WFR flow) are competitive in terms of number of iterations to convergence. When ULA works well, convergence tend to be driven by the W flow and SMC-WFR and SMC-ULA perform similarly (Figure~\ref{fig:donut_app}, \ref{fig:20DG1} and~\ref{fig:20DG2}).  On targets with challenging geometries (Figure~\ref{fig:gm2d}) SMC-WFR achieves faster converge in terms of wallclock time even when ULA, MALA, SMC-ULA and SMC-MALA fail to converge. 
    \item \textbf{Computational cost:} SMC-WFR is generally more expensive than the other competitors due to the structure of the weights~\eqref{eq:w_wfr}, despite this, SMC-WFR tends to converge faster in wallclock time on targets with challenging geometries (Figure~\ref{fig:gm2d} and~\ref{fig:gm20d}), is competitive on targets where ULA works well (Figure~\ref{fig:donut_app}, \ref{fig:20DG1} and~\ref{fig:20DG2}), although ULA and MALA tend to be faster on these targets due to their lower computational cost, and can stabilise the behaviour of ULA in cases where $\nabla\log \pi$ is only locally Lipschitz continuous (Figure~\ref{fig:banana_app}).
    \item \textbf{Robustness:} All algorithms are sensitive to step size specification. MALA, SMC-tempering and SMC-MALA are more robust than ULA, SMC-WFR and SMC-ULA but this is largely due to the availability of simple tuning strategies for the proposal variance. SMC-WFR is more robust to the choice of initial distribution, performing well when the initial distribution is more diffuse than the target and when is more concentrated (Figure~\ref{fig:gm2d} and~\ref{fig:gm20d}).
    \item \textbf{Time discretisation bias:} Due to the lack of the Metropolis--Hastings accept/reject step, ULA, SMC-ULA and SMC-WFR exhibit a time discretisation bias. 
\end{enumerate}

\begin{figure}[h]
    \centering
     \includegraphics[width=0.9\linewidth]{ 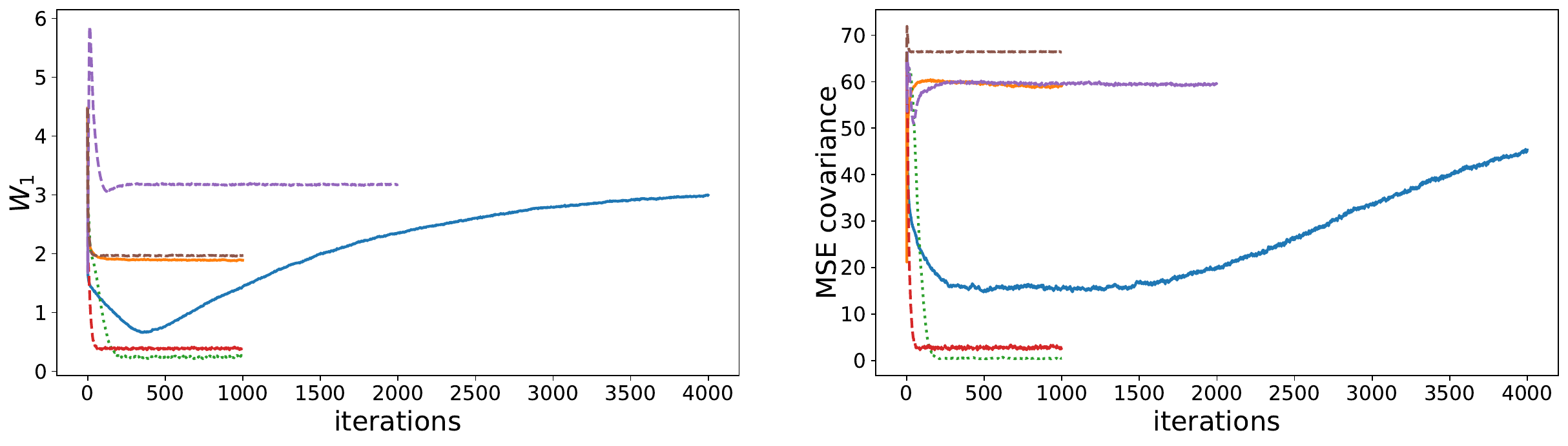}
     \includegraphics[width=0.9\linewidth]{ 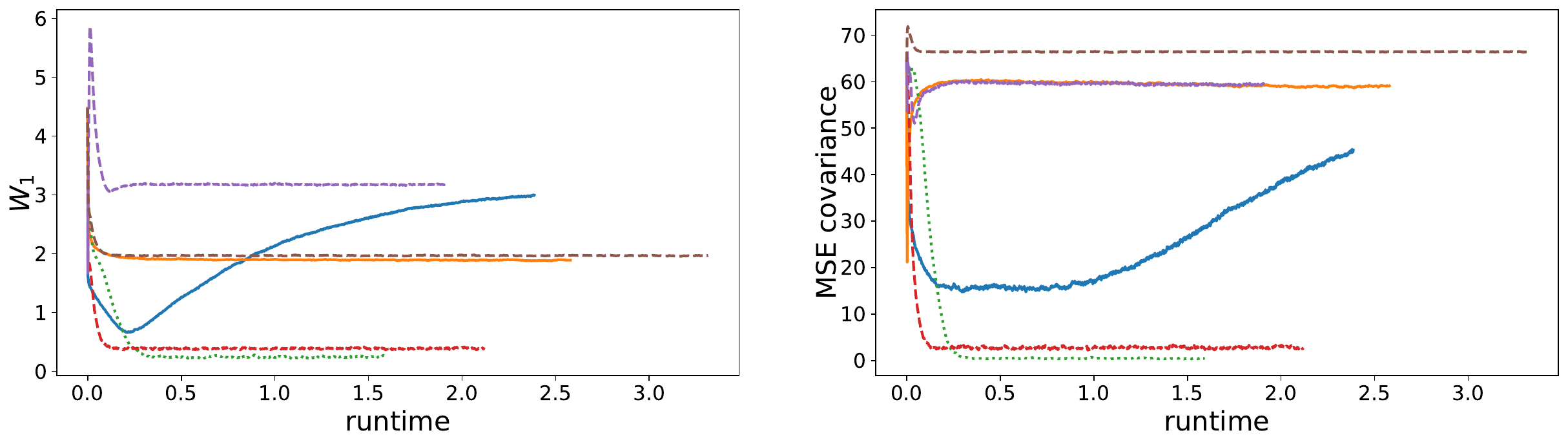}
     \includegraphics[width = 0.8\textwidth]{ 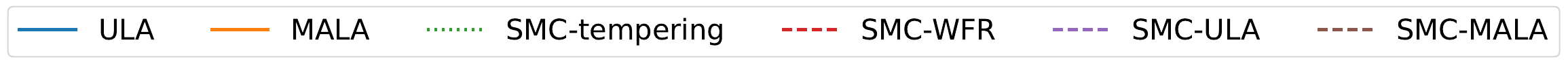}
    \caption{2D Gaussian mixture: Evolution of $W_1$ and mean squared error (MSE) for the covariance matrix against runtime (first row) and number of iterations (second row) averaged over 50 repetitions.  The initial distribution is $\mu_0(x) = \mathcal{N}(x; 0, \textsf{Id})$. We compare ULA, MALA, SMC tempering, SMC-WFR, SMC-ULA and SMC-MALA. SMC-WFR achieves the fastest convergence both in number of iterations and in runtime, ULA and MALA based algorithms fail to identify all the modes.}
    \label{fig:gm2d}
\end{figure}

\begin{figure}[h]
    \centering
     \includegraphics[width=\linewidth]{ 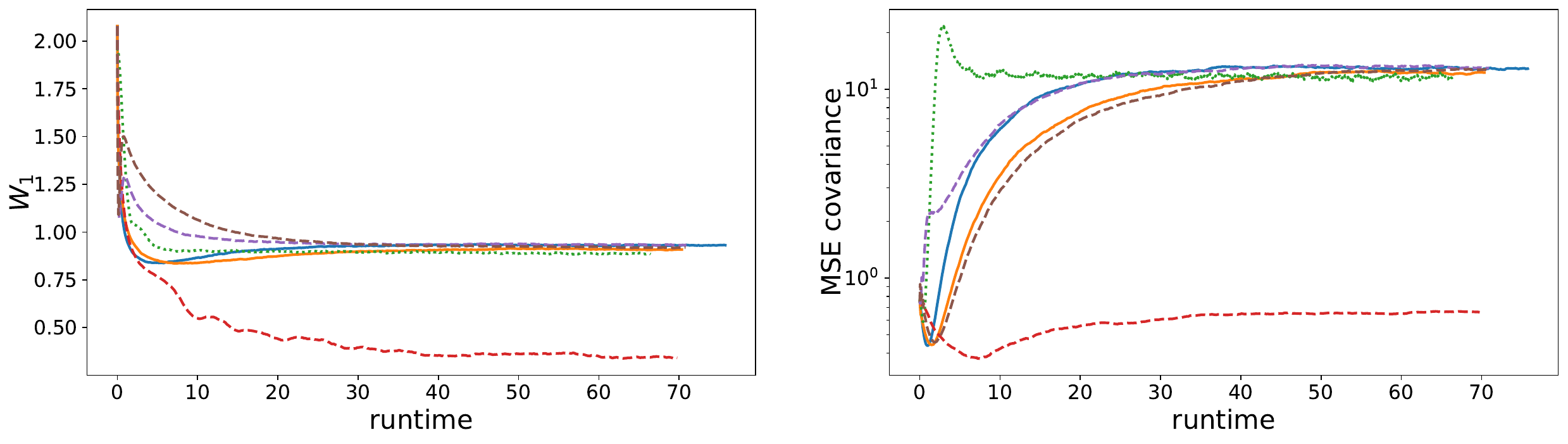}
     \includegraphics[width=\linewidth]{ 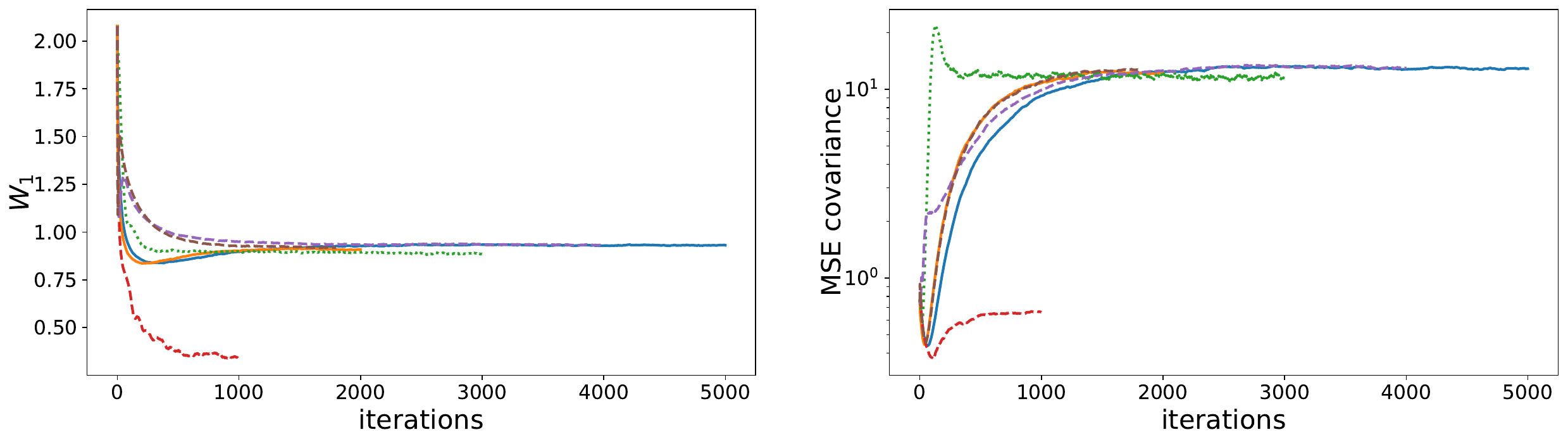}
     \includegraphics[width = 0.8\textwidth]{ legend_comparison.pdf}
    \caption{20D Gaussian mixture: Evolution of $W_1$ and mean squared error (MSE) for the covariance matrix against runtime (first row) and number of iterations (second row) averaged over 50 repetitions.  The initial distribution is $\mu_0(x) = \mathcal{N}(x; 0, \textsf{Id})$. We compare ULA, MALA, SMC tempering, SMC-WFR, SMC-ULA and SMC-MALA. SMC-WFR achieves the fastest convergence both in number of iterations and in runtime.}
    \label{fig:gm20d}
\end{figure}


\section{Conclusions}
We study the connection between sampling algorithms and gradient flows in the space of probability measures, with focus on the Fisher--Rao and Wasserstein--Fisher--Rao geometry. We identify a natural parallel between Fisher--Rao dynamics and importance sampling and use this idea to propose a sequential Monte Carlo (SMC) algorithm which approximates the Wasserstein--Fisher--Rao dynamics. This algorithm has convergence guarantees (Proposition~\ref{prop:is_wfr} and~\ref{prop:lp}) and shows superior performance to other approximations (Section~\ref{sec:bdl}). 
In particular, Proposition~\ref{prop:is_wfr} establishes that the time discretisation of the WFR PDE~\eqref{eq:wfr} used by SMC-WFR decreases the KL divergence at an exponential rate and is sharper than the decay achieved by the W flow only. Proposition~\ref{prop:lp} extends a classical result in the SMC literature to SMC-WFR, showing that the particle approximation obtained by replacing densities with empirical measures leads to  consistent estimates with mean squared error that decays at rate $N^{-1/2}$.

Numerical approximations of the WFR flow are underexplored, and their ability to realise the theoretical gains of the WFR flow over pure W and FR flows \citep{lu2023birth} is poorly understood. We propose a numerical approximation of the WFR flow (Algorithm~\ref{alg:smc_wfr}) and investigate whether this approximation maintains the superior theoretical performances of the WFR flow in~\eqref{eq:rate_wfr}. To achieve this goal, we performed a simulation study on a range of targets which exhibit different challenges (e.g. multimodality, concentration, high dimensionality) and compared the convergence speed in terms of wallclock time of Algorithm~\ref{alg:smc_wfr} with popular Monte Carlo methods. Our experiments complement the theoretical rate~\eqref{eq:rate_wfr} taking into account numerical approximation via two popular classes of Monte Carlo methods: MCMC and importance sampling.

Interpreting SMC as the gradient flow of the KL divergence w.r.t. the Wasserstein--Fisher--Rao geometry allows us to exploit the well known results connecting Langevin dynamics to  Wasserstein gradient flows of KL to gain some insight on when SMC performs better than Langevin-based algorithms.
We identify the following properties of the target and initial distribution which should guide the choice of sampling algorithm to achieve the best tradeoff between accuracy and convergence speed:
\begin{itemize}
    \item If $\pi\propto e^{-V_\pi}$ satisfies a log-Sobolev inequality~\eqref{eq:lsi} with small $C_{\textrm{LSI}}$, $\nabla V_\pi$ is well-behaved and $\mu_0$ puts mass on a region of high target probability, then the W flow works well, and ULA and MALA are likely to provide good results (e.g. the donut and Gaussian examples). In this case approximating the WFR flow might be unnecessary as it will closely follow the W flow.
    \item In the case of large $C_{\textrm{LSI}}$ or if the gradients of $\nabla V_\pi$ are not globally Lipschitz, the combination of the diffusive behaviour of ULA with the importance sampling approach approximating the FR step in SMC-WFR provides a considerable advantage both in terms of runtime and stability (e.g. the banana and Gaussian mixture examples). 
    \item SMC-WFR and SMC-tempering are more robust to the choice of initial distribution and provide good results even when $\pi$ and $\mu_0$ do not significantly overlap. This is largely due to the use of tempering techniques.
\end{itemize}

We also highlight some limitations of the proposed methodology and opportunities for further developments of numerical approximations of the WFR flow that can realise its superior theoretical guarantees:
\begin{itemize}
    \item  Algorithm~\ref{alg:smc_wfr} relies on ULA for the proposal step and thus is not suited for non-differentiable targets, is unstable when ULA is and provides little exploration for targets with challenging geometry. To circumvent stability and non-differentiability issues, one could consider alternative approximations of the Langevin diffusion \citep{NEURIPS2019_6a8018b3_salim, salim2020wasserstein, mou2021high}. To improve exploration, standard recipes like preconditioning \citep{dalalyan2017theoretical, hird2025quantifying} and modulation of the diffusion coefficient could be beneficial \citep{Pavliotis2014}.
    \item The $\mathcal{O}(N^2)$ computational cost of SMC-WFR is balanced out by the improved convergence speed for small $N$ but for large $N$ the $\mathcal{O}(N)$ cost of the weights~\eqref{eq:w_wfr} might become prohibitive. This can be addressed adapting $N$-body learning techniques as in \cite{klaas2012toward, lang2005empirical} or considering efficient implementations using GPUs as shown in \citet[Section 4]{clarte2019collective}. The use of multiple proposal steps in SMC-WFR can also mitigate computational cost as the ULA updates are cheap, but is only beneficial when ULA works well. 
    \item SMC-WFR does not correct for the time discretisation bias of ULA. As swapping ULA for MALA as in SMC-MALA seems to deteriorate the performances of the algorithm, one could instead use adaptive stepsizes: large at first to favour fast convergence and then smaller as time goes on to reduce bias. Alternatively, one could replace SMC-WFR with SMC-MALA once SMC-WFR has converged to high target probability regions.
\end{itemize}

Nevertheless, we demonstrate that the proposed SMC based algorithm is competitive on a range of multi-modal examples.  We hope that the ideas in this manuscript help spur the development of further numerical approximations of WFR flows.

\paragraph*{Acknowledgements}
The authors would like to thank Deniz Akyildiz, Paula Cordero Encinar, Nikolas N\"usken, Anna Korba, Linfeng Wang for helpful discussions.  The authors gratefully acknowledge the mathematical research institute MATRIX in Australia where part of this research was performed.

F.R.C. is supported by the Gruppo
Nazionale per l'Analisi Matematica, la Probabilità e le loro Applicazioni (GNAMPA-INdAM).

\bibliographystyle{plainnat}
\bibliography{mybibfile.bib}

\clearpage
\appendix
\numberwithin{equation}{section}
\numberwithin{figure}{section}

\section{Supplementary Lemmata}

\begin{lemma}
    \label{lem:normalisereplicator}
    Let $\rho_t$ denote an unnormalised probability density function, and assume its time evolution is governed by 
\begin{align}
    \label{eq:unnormreplicator}
    \partial_t \rho_t(x) = \rho_t(x) f(x)
\end{align}
for some function $f(x)$, also known as the ``fitness function'' in the evolutionary biology literature. The evolution of the normalised probability density $\mu_t(x) = \frac{\rho_t(x)}{\int \rho_t(y) dy}$ is given by the replicator PDE 
\begin{align}
    \label{eq:replicator}
    \partial_t \mu_t(x) = \mu_t(x)\left(f(x) - \mathbb{E}_{\mu_t}[f] \right) 
\end{align}

\begin{proof}
Differentiating the expression for $\mu_t(x)$ with respect to $t$ and substituting in \eqref{eq:unnormreplicator},
\begin{align*}
    \partial_t \mu_t(x) &= \frac{\partial_t \rho_t (x)}{\int \rho_t(y) dy} - \frac{\rho_t(x)}{(\int \rho_t(y) dy)^2 } \int \partial_t \rho_t (y) dy \\
    & = \frac{\rho_t(x)f(x)}{\int \rho_t(y) dy} - \frac{\rho_t(x)}{(\int \rho_t(y) dy)^2 } \int f(y)\rho_t(y) dy \\
    & = \mu_t(x) f(x) - \mu_t(x) \int f(y) \mu_t(y) dy 
\end{align*}
yields the desired result.
\end{proof}
    
\end{lemma}
\begin{lemma} 
[Unit time FR as a deterministic time rescaling of infinite time FR.]  
\label{lem:unitinfFR}
Consider the unit time FR PDE 
\begin{align}
    \label{eq:FRunit}
    \partial_t \mu_t(x) = \mu_t(x) \log \left( \frac{\pi(x)}{\mu_0(x)}\right) - \mathbb{E}_{\mu_t} \left[\frac{\pi(x)}{\mu_0(x)} \right], \quad t \in [0,1]
\end{align}
Let $\mu_\tau^\ast = \mu_t$ where $\tau = -\log(1-t)$ and $\mu_t$ is a solution of \eqref{eq:FRunit}.  Then formally, $\mu_\tau^\ast$ is a solution of the infinite time FR PDE 
\begin{align*}
    \partial_t \mu_\tau^\ast(x)  = \mu_\tau^\ast(x) \log \left( \frac{\pi(x)}{\mu_\tau^\ast(x)}\right) - \mathbb{E}_{\mu_\tau^\ast} \left[\frac{\pi(x)}{\mu_\tau^\ast(x)} \right], \quad \tau \in [0,\infty)
\end{align*}

\begin{proof}
    Consider the geometric interpolation 
    \begin{align}
    \label{eq:geominterp}
    \rho_t (x) = \pi^{t}(x) \rho_0^{1-t}(x) = \exp [t \log \pi(x) + (1-t) \log \rho_0(x)]
\end{align}
where $\rho_t$ denotes an unnormalised probability density.  Differentiating both sides with respect to $t$ yields 
\begin{align}
    \label{eq:unnormunitFR}
    \partial_t \rho_t = \rho_t(x) \log \left( \frac{\pi(x)}{\rho_0(x)}\right), \quad t \in [0,1], 
\end{align}
that is, the unit time Fisher--Rao PDE in unnormalised form (see Lemma \ref{lem:normalisereplicator}).  Consider the deterministic time change $\tau = -\log \left(1-t \right)$ where $\tau \in [0, \infty)$ such that $\rho_\tau^\ast = \rho_t$.  Then using \eqref{eq:geominterp},
\begin{align}
    \label{eq:loginfdens}
    \log \rho_{\tau}^\ast = \log \rho_{t(\tau)} = (1 - \exp(-\tau) ) \log \pi \, + \, \exp(-\tau) \log \rho_0 
\end{align}
and $\frac{d t}{d \tau} = \exp(-\tau)$.  Combining these results and applying chain rule to \eqref{eq:unnormunitFR}, 
\begin{align*}
    \partial_\tau \rho_{\tau}^\ast  = \frac{dt}{d\tau} \partial_t \rho_{t(\tau)} &= \exp(-\tau)\rho_{t(\tau)} \log \left(\frac{\pi }{\rho_0} \right) \\
    &= \rho_{t(\tau)} \left[ \exp(-\tau) \log \pi  -\exp(-\tau) \log {\rho_0}  \right]  \\
    & = \rho_\tau^\ast  \left[ \exp(-\tau) \log \pi  -\exp(-\tau) \log {\rho_0}  \right] \\
    & = \rho_\tau^\ast  \left[ \exp(-\tau) \log \pi  + (1- \exp(-\tau)) \log \pi - \log \rho_\tau^\ast \right] \\
    & = \rho_\tau^\ast  \left[ \log \pi - \log \rho_\tau^\ast \right]
\end{align*}
yields the unnormalised form of the infinite time FR PDE.  Applying the normalisation Lemma~\ref{lem:normalisereplicator} yields the result.

\end{proof}

\end{lemma}

\section{Proof of Proposition~\ref{prop:is_wfr}}
\label{app:proofs3}
\begin{proof}
Let us denote $\cF(\mu) =\KL(\mu||\pi)$.
To show that the mirror descent update~\eqref{eq:fr_semigroup} reduces $\cF$ we reproduce the argument of \citet[Proposition 1]{chopin2023connection}. Since $\cF$ is 1-relatively smooth with respect to $\phi: \mu\mapsto \int \log (\mu(x))d\mu(x)$ \citep{aubin2022mirror} and since $\delta = 1-e^{-\gamma}\leq 1$, for all $\gamma>0$, we have 
\begin{align}
\label{eq:conv_smooth_new}
		\cF(\mu_{n})&\le \cF(\mu_{n-1/2})+ \ssp{\nabla \cF(\mu_{n-1/2}),\mu_{n}	-\mu_{n-1/2}}+B_{\phi}(\mu_{n}|\mu_{n-1/2})\\
  &\le \cF(\mu_{n-1/2}) + \ssp{\nabla \cF(\mu_{n-1/2}),\mu_{n}	-\mu_{n-1/2}}+\frac{1}{\delta} B_{\phi}(\mu_{n}|\mu_{n-1/2}),
\nonumber
	\end{align}
where $B_{\phi}(\nu|\mu)=\phi(\nu)-\phi(\mu)-\ssp{\nabla \phi(\mu), \nu- \mu}$ denotes the Bregman divergence and $\nabla \phi$ the first variation.
	Applying \citet[Lemma 3]{aubin2022mirror} to the convex function $G(\nu)=\delta \ssp{\nabla \cF(\mu_{n-1/2}),\nu-\mu_{n-1/2}}$, with $\mu=\mu_{n-1/2}$ and $\bar{\nu}=\mu_{n}$ yields
	\begin{align*}
		&\ssp{\nabla \cF(\mu_{n-1/2}),\mu_{n}	-\mu_{n-1/2}} + \frac{1}{\delta} B_{\phi}(\mu_{n}|\mu_{n-1/2}) \le\\
		&\qquad\qquad \ssp{\nabla \cF(\mu_{n-1}),\nu-\mu_{n-1/2}} + \frac{1}{\delta} B_{\phi}(\nu| \mu_{n-1/2}) - \frac{1}{\delta} B_{\phi}(\nu| \mu_{n}).
	\end{align*}
	Fix $\nu$, then \eqref{eq:conv_smooth_new} becomes:\begin{equation*}
		\cF(\mu_{n})\le \cF(\mu_{n-1/2}) + \ssp{\nabla \cF(\mu_{n-1/2}),\nu-\mu_{n-1/2}}	+ \frac{1}{\delta}B_{\phi}(\nu| \mu_{n-1/2}) - \frac{1}{\delta} B_{\phi}(\nu| \mu_{n}).
	\end{equation*}
	This shows in particular, by substituting $\nu=\mu_{n-1/2}$ and since $B_{\phi}(\nu| \mu_{n})\ge 0$, that 
 \begin{align*}
     \cF(\mu_{n})\le \cF(\mu_{n-1/2})-\frac{1}{\delta} B_{\phi}(\mu_{n-1/2}| \mu_{n}).
 \end{align*}
We now show that $B_{\phi}(\mu_{n-1/2}| \mu_{n})=\KL(\mu_{n-1/2}|| \mu_{n})>0$. Since $B_{\phi}$ is a divergence we know that $B_{\phi}(\mu_{n-1/2}| \mu_{n})=0$ iff $\mu_{n-1/2}=\mu_n$. 
Under the condition $\gamma>0$, the only possibility to have $\mu_{n-1/2}=\mu_n$ is for $\mu_{n-1/2}\equiv\mu_n\equiv \pi$ for which $\cF(\mu_{n-1/2})=\cF(\mu_{n})=0$. Since $n$ is arbitrary, this implies $\mu_0=\pi$ and thus no FR step is needed.

It follows that
 \begin{align}
     \label{eq:kl_decreasing}
     \cF(\mu_{n})< \cF(\mu_{n-1/2}).
 \end{align}
To deal with the W flow update we observe that $\mu_{n-1/2}$ in~\eqref{eq:ula} with fixed is obtained as
\begin{align*}
    \mu_{n-1/2}(x)  = \mu_{n-1}R_{\gamma}(x):=(4\uppi\gamma)^{-d/2}\int \exp \left(-\frac{1}{4\gamma} \|x - y -\gamma\nabla \log \pi(y) \|^2   \right) \mu_{n-1}(y) dy 
\end{align*}
and using \citet[Lemma 3]{vempala2019rapid}
\begin{align*}
   \cF(\mu_{n-1/2})&\leq e^{-\gamma /C_\textrm{LSI}}\cF(\mu_{n-1})+6\gamma^2 d L_\pi^2,
\end{align*}
with $\gamma \leq C_\textrm{LSI}^{-1}/(4L_\pi^2)$.
Using~\eqref{eq:kl_decreasing} and the result above we obtain
\begin{align*}
\cF(\mu_{n})< e^{-\gamma /C_\textrm{LSI}}\cF(\mu_{n-1})+6\gamma^2 d L_\pi^2.
\end{align*}
Proceeding by induction we then find that
\begin{align*}
    \cF(\mu_{n})&< e^{-n\gamma /C_\textrm{LSI}}\cF(\mu_{0})+6\gamma^2 d L_\pi^2\sum_{k=0}^{n-1}e^{-k\gamma /C_\textrm{LSI}}\\
    &< e^{-n\gamma /C_\textrm{LSI}}\cF(\mu_{0})+6\gamma^2 d L_\pi^2\left(\frac{1-e^{-(n-1)}}{1-e^{-\gamma/C_\textrm{LSI}}}\right)\\
    &< e^{-n\gamma /C_\textrm{LSI}}\cF(\mu_{0})+6\gamma^2 d L_\pi^2\left(\frac{1}{1-e^{-\gamma/C_\textrm{LSI}}}\right)\\
    &< e^{-n\gamma /C_\textrm{LSI}}\cF(\mu_{0})+8\gamma d L_\pi^2 C_\textrm{LSI}^{-1},
\end{align*}
where for the last inequality we used the fact that $\gamma\leq C_\textrm{LSI}^{-1}/(4L_\pi^2)\leq C_\textrm{LSI}/4$ as in \citet[Theorem 1]{vempala2019rapid}.
\end{proof}

\section{Proofs of Section~\ref{sec:proxy}}
\label{app:lp}

\subsection{Preliminary definitions}
For any distribution $\eta$ and any $\testfn\in\bounded$ we denote $\eta(\testfn):= \int \testfn(x)\eta( x)dx$, similarly for all empirical distributions $\eta^N:=N^{-1}\sum_{i=1}^N \delta_{X^i}$ we denote the corresponding average by $\eta^N(\testfn):= N^{-1}\sum_{i=1}^N \testfn(X^i)$.

We recall that the initial distribution is $\mu_0$, $\mu_{n-1/2}$ denotes the distribution after application of the W flow over time step $\gamma$ and $\mu_n$ the distribution obtained by applying the FR flow over time step $\gamma$ to $\mu_{n-1/2}$.
The particle approximations obtained with Algorithm~\ref{alg:smc_wfr} are as follows: the particle approximation to $\predictive$ is denoted by $\predictiveN = N^{-1} \sum_{i=1}^N \delta_{X_n^i}$, $\bgN = \sum_{i=1}^N W_n^i \delta_{X_n^i}$ and $\updateN = N^{-1} \sum_{i=1}^N \delta_{\widetilde{X}_n^i}$ both approximate $\update$.

\subsection{Control of FR flow approximation}

To control the error introduced by the weighting step we first consider the idealised algorithm in which $\mu_{n-1/2}$ can be computed analytically and the weights are given by
\begin{align*}
    v_n(x) &=\left( \frac{\pi(x)}{\mu_{n-1/2}(x)}\right)^{1-e^{-\gamma}}.
\end{align*}
We then compare the evolution of this algorithm with that of Algorithm~\ref{alg:smc_wfr} and show that the additional approximation in the weights does not alter the rate in the error bounds.

Before moving forward, we recall the following standard representation of importance sampling estimators:
\begin{align*}
    \mu_n^N(\varphi) = \sum_{i=1}^NW_n^i\varphi(X_n^i) = \frac{\sum_{i=1}^Nw_n(X_n^i)\varphi(X_n^i)}{\sum_{i=1}^N w_n(X_n^i)} = \frac{\predictiveN( \weightN\testfn)}{\predictiveN( \weightN)} 
\end{align*}
and
\begin{align}
\label{eq:vn}
    \mu_n(\varphi) &=\frac{\int \mu_{n-1/2}(x)^{e^{-\gamma}} \pi(x)^{1-e^{-\gamma}}\varphi(x)dx}{\int \mu_{n-1/2}(x)^{e^{-\gamma}} \pi(x)^{1-e^{-\gamma}}dx}\\
    &=\frac{\int \mu_{n-1/2}(x)\varphi(x)\left( \frac{\pi(x)}{\mu_{n-1/2}(x)}\right)^{1-e^{-\gamma}}dx}{\int \mu_{n-1/2}(x)\left( \frac{\pi(x)}{\mu_{n-1/2}(x)}\right)^{1-e^{-\gamma}}dx} = \frac{\predictive( \weight\testfn)}{\predictive( \weight)}. \notag 
\end{align}

Key to the following results is the following Proposition which obtains the conditional expectations of $\predictiveN(\weightN\testfn)$ and closely follows the result in \citet[Proposition 7]{crucinio2023properties}.
\begin{proposition}
\label{prop:conditional_expe}
Let $\mathcal{F}_{n-1}^N$ denote the $\sigma$-field generated by the weighted samples up to (and including) time $n-1$ and 
let us define $M_n(x_{n-1}, \cdot):=\mathcal{N}(\cdot; x_{n-1}+\gamma\nabla \log\pi(x_{n-1}), 2\gamma\cdot\textsf{Id})$.

We have
$$\E\left[\predictiveN(\weightN\testfn)\mid \mathcal{F}_{n-1}^N\right] = \int \varphi(x_n)\pi(x_n)^{1-e^{-\gamma}}\left(\mu_{n-1/2}^N(x_n)\right)^{e^{-\gamma}}dx_n ,$$
for all $n\geq 1$ and all $\testfn\in\bounded$.
In addition, $$\predictive(\weight \testfn) = \int \varphi(x_n)\pi(x_n)^{1-e^{-\gamma}}\left(\mu_{n-1/2}(x_n)\right)^{e^{-\gamma}}dx_n .$$
\end{proposition}
\begin{proof}
Let us denote the $\sigma$-field generated by the particle system up to (and including) time $n-1$ before the W flow at time $n$ by $\mathcal{G}_{n-1}^N:= \sigma\left(\widetilde{X}_{n-1}^{i}: i\in\lbrace 1,\ldots,N\rbrace\right)\vee \mathcal{F}_{n-1}^N$.

Consider the conditional expectation
 \begin{align*}
 \E\left[\predictiveN(\weightN\testfn)\mid \mathcal{F}_{n-1}^N\right] &=\frac{1}{N}\sum_{i=1}^N\E\left[\weightN(X_n^i)\testfn(X_n^i)|\mathcal{F}_{n-1}^N\right]\\
&=\frac{1}{N}\sum_{i=1}^N\E\left[\E\left[\weightN(X_n^i)\testfn(X_n^i)|\sfmutation\right]|\mathcal{F}_{n-1}^N\right]\notag\\
&=\frac{1}{N}\sum_{i=1}^N\E\left[\int M_n(\widetilde{X}_{n-1}^i, d x_n)\weightN(x_n)\testfn(x_n)|\mathcal{F}_{n-1}^N\right]\notag\\
&=\sum_{j=1}^N\frac{w_{n-1}^N(X_{n-1}^j)}{\sum_{k=1}^Nw_{n-1}^N(X_{n-1}^k)}\int M_n(X_{n-1}^j, x_n)\weightN(x_n)\testfn(x_n)dx_n,\notag
\end{align*}
where the third equality follows from the fact that  $X_n^i|\mathcal{G}_{n-1}^N \sim M_n(\widetilde{X}_{n-1}^i, \cdot)$ for each $i=1,\dots, N$ and the fourth from the fact that  $\{X_{n-1}^j\}_{j=1}^N$ and $w_{n-1}$ are $\mathcal{F}_{n-1}^N$-measurable and conditionally each $\widetilde{X}_{n-1}^i$ is drawn independently from the categorical distribution with probabilities given by the weights.

Plugging the definition of $\weightN$  into the above and observing that 
\begin{align*}
    \mu_{n-1/2}^N(x_n)&=\sum_{i=1}^N W_{n-1}^i M_n(X_{n-1}^i, x_n)\\
    &=\sum_{i=1}^N\frac{w_{n-1}(X_{n-1}^i)}{\sum_{k=1}w_{n-1}(X_{n-1}^k)}M_n(X^i_{n-1}, x_n)
\end{align*}
we obtain
\begin{align*}
&\E\left[\predictiveN(\weightN\testfn)\mid \mathcal{F}_{n-1}^N\right] \\
&=\sum_{j=1}^N\frac{w_{n-1}(X_{n-1}^j)}{\sum_{k=1}w_{n-1}(X_{n-1}^k)} \int M_n(X_{n-1}^j, x_n) \testfn(x_n)\left( \frac{\pi(x_n)}{\mu_{n-1/2}^N(x_n)}\right)^{1-e^{-\gamma}}dx_n \\
&=\int \testfn(x_n)\left(\mu_{n-1/2}^N(x_n)\right)^{e^{-\gamma}}\pi(x_n)^{1-e^{-\gamma}}dx_n.
\end{align*}

The second result follows from~\eqref{eq:vn}.

\end{proof}

The following result shows that as $N\to \infty$ the weights $\weightN$ converge to those of the idealised algorithm.
\begin{proposition}
\label{prop:mudelta}
    Assume that for any $\testfn\in \bounded$ and any $p\geq 1$ there exists some finite constant $C_{p,n-1}(\gamma) $ such that
\begin{equation*}
\E\left[\vert\widetilde{\mu}^N_{n-1}(\testfn) - \mu_{n-1}(\testfn)\vert^p\right]^{1/p} \leq C_{p,n-1}(\gamma) \frac{\supnorm{\testfn}}{N^{1/2}}.
\end{equation*}
Under the conditions of Proposition~\ref{prop:lp}, we have for any $p\geq 1$ and some finite constant $D_{p,n}(\gamma) $
\begin{align*}
    \E\left[\vert\mu_{n-1/2}^N(x_n)^{e^{-\gamma}} - \mu_{n-1/2}(x_n)^{e^{-\gamma}} \vert^p \right]^{1/p} \leq \frac{D_{p,n}(\gamma)}{N^{1/2}}.
\end{align*}
\end{proposition}
\begin{proof}
    Consider a Taylor expansion of the function $u \mapsto u^{\delta}$ with $\delta=e^{-\gamma} >0$ around $u_0$ with a second order remainder of Lagrange form
\begin{align*}
u^{\delta} =u_0^{\delta} +\delta u_0^{\delta-1}(u - u_0) +\delta(\delta-1)\theta_u^{\delta-2} (u - u_0)^2
\end{align*}
where $\theta_u$ is a point between $u$ and $u_0$.
Applying this Taylor expansion with to $\mu_{n-1/2}^N(x_n)^{\delta}$ around the point $\mu_{n-1/2}(x_n)$ gives
\begin{align}
\label{eq:biaslagrange}
\mu_{n-1/2}^N(x_n)^{\delta} - \mu_{n-1/2}(x_n)^{\delta} &=\delta \mu_{n-1/2}(x_n)^{\delta-1}\left(\mu_{n-1/2}^N(x_n) - \mu_{n-1/2}(x_n)\right) \\
&+ \delta(\delta-1)\theta_u^{\delta-2} \left(\mu_{n-1/2}^N(x_n) - \mu_{n-1/2}(x_n)\right)^2 \notag
\end{align}
where $\theta_u$ is a point between $\mu_{n-1/2}^N(x_n)$ and $\mu_{n-1/2}(x_n)$.
It follows that
\begin{align*}
    \E\left[\vert\mu_{n-1/2}^N(x_n)^{\delta} - \mu_{n-1/2}(x_n)^{\delta} \vert^p \right]^{1/p} &\leq\delta \mu_{n-1/2}(x_n)^{\delta-1} \E\left[\vert\mu_{n-1/2}^N(x_n) - \mu_{n-1/2}(x_n)\vert^p \right]^{1/p} \\
    &+ \delta|\delta-1|\E\left[\theta_u^{p(\delta-2)} \vert\mu_{n-1/2}^N(x_n) - \mu_{n-1/2}(x_n)\vert^{2p} \right]^{1/p}.
\end{align*}

Notice that $\theta_u$ is random since $\mu_{n-1/2}^N(x_n)$ is random; in addition $$\mu_{n-1/2}^N(x_n)=  \pi(x_n)/w_n(x_n)^{1/(1-\delta)} \qquad\qquad\textrm{and} \qquad\qquad \mu_{n-1/2}(x_n)=  \pi(x_n)/v_n(x_n)^{1/(1-\delta)}.$$
Under our assumptions
$$\mu_{n-1/2}^N(x_n)\geq \pi(x_n)/\supnorm{w_n}^{1/(1-\delta)} \qquad\qquad\textrm{and} \qquad\qquad \mu_{n-1/2}(x_n)=  \geq \pi(x_n)/\supnorm{v_n}^{1/(1-\delta)}.$$
It follows that $$\theta_u  \geq \pi(x_n)\max\{\supnorm{w_n}^{-1/(1-\delta)}, \supnorm{v_n}^{-1/(1-\delta)}\}:=\beta$$
almost surely.

Therefore, since $\delta-2 = e^{-\gamma}-2 < 0$ as $\gamma>0$
\begin{align*}
    \E\left[\vert\mu_{n-1/2}^N(x_n)^{\delta} - \mu_{n-1/2}(x_n)^{\delta} \vert^p \right]^{1/p} &\leq\delta \mu_{n-1/2}(x_n)^{\delta-1} \E\left[\vert\mu_{n-1/2}^N(x_n) - \mu_{n-1/2}(x_n)\vert^p \right]^{1/p} \\
    &+ \delta|\delta-1|\beta^{\delta-2}\E\left[ \vert\mu_{n-1/2}^N(x_n) - \mu_{n-1/2}(x_n)\vert^{2p} \right]^{1/p}.
\end{align*}

Recalling that $\mu_{n-1/2}^N(x_n) = \widetilde{\mu}_{n-1}^N(M_n(\cdot, x_n))$ with $M_n(x_{n-1}, x_{n} )=\mathcal{N}(x_n; x_{n-1}+\gamma\nabla \log\pi(x_{n-1}), 2\gamma\cdot\textsf{Id})$ a bounded function of $x_{n-1}$ with maximum $(4\uppi \gamma)^{-d/2}$,
the hypothesis then gives us 
\begin{align*}
    \E\left[\vert\mu_{n-1/2}^N(x_n)^{\delta} - \mu_{n-1/2}(x_n)^{\delta} \vert^p \right]^{1/p}  &\leq\delta \mu_{n-1/2}(x_n)^{\delta-1} C_{p, n-1} \frac{(4\uppi \gamma)^{-d/2}}{N^{1/2}}\\
    &+ \delta|\delta-1|\beta^{\delta-2}C_{p, n-1}^{2}\frac{(4\uppi \gamma)^{-d}}{N}.
\end{align*}
\end{proof}

\subsection{Proof of Proposition~\ref{prop:lp}}
We proceed by induction, taking $n=0$ as the base case.
At time $n=0$, the particles $(X_0^{i})_{i=1}^N$ are sampled i.i.d. from $\mu_0$, hence $\E\left[\testfn(X_0^{i})\right] = \mu_0(\testfn)$ for $i=1,\ldots,N$.
 We can define the sequence of functions $\Delta_0^i: \real^d \mapsto \mathbb{R}$ for $i=1,\ldots, N$
\begin{equation*}
\Delta_0^i(x) :=  \testfn(x) -  \E\left[\testfn(X_{0}^{i})\right]
\end{equation*}
so that,
\begin{align*}
\mu_0^N(\testfn) - \mu_0(\testfn) = \frac{1}{N}\sum_{i=1}^N \Delta_0^i(X_0^{i}),
\end{align*}
and apply Lemma~\ref{lemma:delmoral} to get for every $p\geq 1$
\begin{align}
\label{eq:lp0_iid}
\sqrt{N}\E\left[\vert \mu_0^N(\testfn) - \mu_0(\testfn)\vert^p\right]^{1/p} &\leq b(p) ^{1/p} \frac{1}{\sqrt{N}} \left(\sum_{i=1}^N \left(\sup( \Delta_{0}^i) - \inf( \Delta_{0}^i)\right)^2\right)^{1/2} \\
&\leq b(p) ^{1/p} \frac{1}{\sqrt{N}} \left(\sum_{i=1}^N 4\left(\sup\vert \Delta_{0}^i\vert\right)^2\right)^{1/2} \notag\\
& \leq b(p) ^{1/p} \frac{1}{\sqrt{N}} \left(\sum_{i=1}^N 16\supnorm{\testfn}^2\right)^{1/2} \notag \\
& \leq 4b(p) ^{1/p} \supnorm{\testfn},\notag
\end{align}
from which the result at $n=0$ follows and the base case is established.

Lemma~\ref{lp:lemma3} and \ref{lp:lemma4} below show that if the result holds at time $n-1$ then it also holds at time $n$. We then conclude the proof using induction. 

\subsection{Auxiliary results for the proof of Proposition~\ref{prop:lp}}

As a preliminary we reproduce part of \citet[Lemma 7.3.3]{smc:theory:Del04}, a Marcinkiewicz-Zygmund-type inequality of which we will make extensive use.

\begin{lemma}[Del Moral, 2004] \label{lemma:delmoral}
Given a sequence of probability measures $(\mu_i)_{i \geq 1}$ on a given measurable space $(E,\mathcal{E})$ and a collection of independent random variables, one distributed according to each of those measures, $(X_i)_{i \geq 1}$, where $\forall i, X_i \sim \mu_i$, together with any sequence of measurable functions $(h_i)_{i \geq 1}$ such that $\mu_i(h_i) = 0$ for all $i \geq 1$, we define for any $N \in \mathbb{N}$,
$$m_N(X)(h) = \frac{1}{N} \sum_{i=1}^N h_i( X_i ) \ \textrm{ and } \ \sigma_N^2(h) = \frac{1}{N} \sum_{i=1}^N \left(\sup(h_i) - \inf(h_i) \right)^2.$$
If the $h_i$ have finite oscillations (i.e., $\sup(h_i)-\inf(h_i)<\infty \  \forall i \geq 1$) then we have:
$$\sqrt{N} \E\left[ \left\vert m_N(X)(h) \right\vert^p \right]^{1/p} \leq b(p)^{1/p} \sigma_N(h),$$
with, for any pair of integers $q,p$ such that $q \geq p \geq 1$, denoting $(q)_p=q!/(q-p)!$:
\begin{align}
    \label{eq:b(p)}
    b(2q) = (2q)_q 2^{-q} \ \textrm{ and } \  b(2q - 1) = \frac{(2q-1)_q}{\sqrt{q-\frac12}} 2^{-(q-\frac12)}. 
\end{align}
\end{lemma}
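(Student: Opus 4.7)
The plan is a symmetrization-plus-Khintchine argument producing the sharp constant $b(2q)$. Let $Y_i := h_i(X_i)$, so that $\E[Y_i] = \mu_i(h_i) = 0$, and set $S_N := \sum_{i=1}^N Y_i = N\,m_N(X)(h)$, $\omega_i := \sup(h_i)-\inf(h_i)$, and $\sigma^2 := \sum_{i=1}^N \omega_i^2 = N\,\sigma_N(h)^2$. Introduce an independent family $(Y_i')_{i\ge 1}$ with $Y_i' \stackrel{d}{=} Y_i$ (independent of $(Y_j)_j$), and set $Z_i := Y_i - Y_i'$. Because $Y_i, Y_i' \in [\inf h_i, \sup h_i]$ almost surely, $|Z_i| \le \omega_i$ pointwise; each $Z_i$ is symmetric about zero, and the $Z_i$ are mutually independent. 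Since $\E[S_N'] = 0$, Jensen's inequality applied to $x\mapsto|x|^p$ yields the classical symmetrization bound
\[
\E[|S_N|^p] \,=\, \E\bigl[\bigl|\E[S_N-S_N'\mid (Y_i)_i]\bigr|^p\bigr] \,\le\, \E[|S_N-S_N'|^p] \,=\, \E\Bigl[\Bigl|\sum_{i=1}^N Z_i\Bigr|^p\Bigr].
\]

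I would next introduce independent Rademacher signs $(\varepsilon_i)_i$ (independent of everything); the symmetry and independence of the $Z_i$ give $(Z_i)_i \stackrel{d}{=} (\varepsilon_i Z_i)_i$, so conditioning on $(Z_i)_i$ reduces the task to bounding the Rademacher moment $\E_\varepsilon[(\sum_i\varepsilon_i Z_i)^{2q}]$ pointwise. Expanding the $2q$th power and using $\E[\varepsilon_i^{\alpha_i}] = \mathbf{1}\{\alpha_i\text{ even}\}$ produces the identity
\[
\E_\varepsilon\Bigl[\Bigl(\sum_i \varepsilon_i z_i\Bigr)^{\!2q}\Bigr] = \sum_{|\beta|=q}\binom{2q}{2\beta}\prod_i z_i^{2\beta_i},\qquad \Bigl(\sum_i z_i^2\Bigr)^{\!q} = \sum_{|\beta|=q}\binom{q}{\beta}\prod_i z_i^{2\beta_i},
\]
so the sharp bound $\E_\varepsilon[(\sum_i \varepsilon_i z_i)^{2q}] \le b(2q)(\sum_i z_i^2)^q$ reduces to $\binom{2q}{2\beta}/\binom{q}{\beta} \le (2q)!/(q!\,2^q) = b(2q)$ for every composition $\beta$ with $|\beta|=q$. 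That, in turn, reduces coordinate-wise to the elementary inequality $(2k)!/k! = (k+1)(k+2)\cdots(2k) \ge 2^k$ (each of the $k$ factors is at least $2$, with equality at $k=0$); multiplying across coordinates gives the required comparison.

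Using the almost-sure inequality $\sum_i Z_i^2 \le \sum_i \omega_i^2 = \sigma^2$ in the Khintchine estimate yields $\E[|S_N|^{2q}] \le b(2q)\,\sigma^{2q}$, which is the even-$p$ case. For odd $p = 2q-1$, I would interpolate by Cauchy--Schwarz (equivalently, log-convexity of $p\mapsto \log\E[|S_N|^p]$):
\[
\E[|S_N|^{2q-1}] \,\le\, \E[|S_N|^{2q-2}]^{1/2}\,\E[|S_N|^{2q}]^{1/2} \,\le\, \sqrt{b(2q-2)\,b(2q)}\,\sigma^{2q-1},
\]
and a short algebraic check (using $(2q)! = 2q(2q-1)(2q-2)!$ and $q! = q(q-1)!$) confirms that $\sqrt{b(2q-2)\,b(2q)}$ matches the closed form $(2q-1)_q/(\sqrt{q-\tfrac12}\cdot 2^{q-1/2})$ for $b(2q-1)$ in \eqref{eq:b(p)}. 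Substituting $S_N = N\,m_N(X)(h)$ and $\sigma^2 = N\,\sigma_N(h)^2$ and taking $p$th roots produces the stated inequality.

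The main obstacle is pinning down the constant exactly. Cruder approaches, for instance Hoeffding's MGF lemma $\E[e^{\lambda Y_i}] \le e^{\lambda^2\omega_i^2/8}$ combined with Markov's inequality and integration by parts, do yield a bound of the correct shape $\E[|S_N|^{2q}]^{1/(2q)}\lesssim \sigma$, but with a $\Gamma$-flavored prefactor rather than the explicit $b(p)$ of \eqref{eq:b(p)}. Getting $(2q)!/(q!\,2^q)$ on the nose requires the precise multinomial coefficient comparison in the Khintchine step, underwritten by the elementary factorial-to-power-of-two estimate; the remaining ingredients (Jensen symmetrization, the pointwise $|Z_i|\le\omega_i$ bound, and the Cauchy--Schwarz interpolation for odd $p$) are all routine.
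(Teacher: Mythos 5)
The paper cites this inequality from \cite[Lemma~7.3.3]{smc:theory:Del04} without reproducing the proof, so there is no in-paper argument to compare against; I have therefore checked your proposal on its own terms, and it is correct. The symmetrization step is sound: conditional on $(Y_i)_i$ one has $\E[S_N - S_N' \mid (Y_i)_i] = S_N$, so Jensen for $x\mapsto |x|^p$ gives $\E[|S_N|^p]\le \E[|\sum_i Z_i|^p]$, and the pointwise bound $|Z_i|\le\omega_i$ is immediate from $Y_i,Y_i'\in[\inf h_i,\sup h_i]$. In the Khintchine step, the ratio $\binom{2q}{2\beta}/\binom{q}{\beta}=\frac{(2q)!}{q!}\prod_i\frac{\beta_i!}{(2\beta_i)!}$ is maximized by taking each $\beta_i\in\{0,1\}$, which yields exactly $(2q)!/(q!\,2^q)=b(2q)$; your coordinate-wise inequality $(2\beta_i)!/\beta_i!\ge 2^{\beta_i}$ is precisely the needed comparison. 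The Cauchy--Schwarz interpolation $\E[|S_N|^{2q-1}]\le\E[|S_N|^{2q-2}]^{1/2}\E[|S_N|^{2q}]^{1/2}$ is valid, and a direct computation using $(2q)!=2q(2q-1)(2q-2)!$, $q!=q(q-1)!$ and $q-\tfrac12=(2q-1)/2$ shows $b(2q-2)\,b(2q)=b(2q-1)^2$, so the odd-index constant in \eqref{eq:b(p)} is recovered exactly rather than being worsened by the interpolation. Unwinding the normalizations $S_N=N\,m_N(X)(h)$ and $\sigma^2=N\,\sigma_N^2(h)$ then gives the stated $\sqrt{N}$-scaling. This matches the standard route (and, I believe, the spirit of Del Moral's original argument, which also rests on the fact that $b(2q)$ is the $2q$-th Gaussian moment), so the proof is a valid reconstruction of the cited result.
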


To control the error introduced by the FR reweighting step we use standard techniques from the importance sampling literature (e.g. \citet[Lemma 4]{smc:theory:CD02}).

\begin{lemma}[FR flow]
\label{lp:lemma3}
Under the conditions of Proposition~\ref{prop:lp} , assume that for $p\geq 1$ and some finite constant ${C}_{p,n}$
\begin{equation*}
\E\left[\vert\widetilde{\mu}^N_{n-1}(\testfn) - \mu_{n-1}(\testfn)\vert^p\right]^{1/p} \leq C_{p,n-1}(\gamma) \frac{\supnorm{\testfn}}{N^{1/2}}.
\end{equation*}
then
\begin{equation*}
\E\left[\vert\bgN(\testfn) - \bg(\testfn)\vert^p\right]^{1/p} \leq \bar{C}_{p,n}(\gamma) \frac{\supnorm{ \testfn}}{N^{1/2}}
\end{equation*}
for any $\testfn\in \bounded$ and any $p\geq 1$.
\end{lemma}
\begin{proof}
Let us assume that
\begin{equation}
\label{eq:weight_comparison}
\E\left[\predictiveN( \weightN\testfn)-  \predictive(\weight\testfn)\vert^p\right]^{1/p}\leq \widetilde{C}_{p, n} \frac{\supnorm{\testfn}}{ N^{1/2}}
\end{equation}
for $p\geq 1$ and finite constant $\widetilde{C}_{p,n}(\gamma)$.
Apply the definition of $\bg$ and $\bgN$ and consider the following decomposition
\begin{align*}
\vert\bgN(\testfn) - \bg(\testfn)\vert & = \left\lvert \frac{\predictiveN( \weightN\testfn)}{\predictiveN( \weightN)} - \frac{\predictive( \weight\testfn)}{\predictive( \weight)} \right\rvert\\
&\leq \left\lvert \frac{\predictiveN( \weightN\testfn)}{\predictiveN( \weightN)} - \frac{\predictiveN( \weightN\testfn)}{\predictive( \weight)} \right\rvert + \left\lvert \frac{\predictiveN( \weightN\testfn)}{\predictive( \weight)} - \frac{\predictive( \weight\testfn)}{\predictive( \weight)} \right\rvert.
\end{align*}
Then, for the first term
\begin{align*}
\left\lvert \frac{\predictiveN( \weightN\testfn)}{\predictiveN( \weightN)} - \frac{\predictiveN( \weightN \testfn)}{\predictive( \weight)} \right\rvert & = \left\lvert \frac{\predictiveN( \weightN\testfn)}{\predictiveN( \weightN)}\right\rvert \left\lvert \frac{\predictive(\weight) - \predictiveN(\weightN)}{\predictive( \weight)} \right\rvert\\
&\leq \frac{\supnorm{\testfn}}{\vert \predictive( \weight) \vert} \vert \predictive(\weight) - \predictiveN(\weightN)\vert.
\end{align*}
For the second term
\begin{align*}
\left\lvert \frac{\predictiveN( \weightN\testfn)}{\predictive( \weight)} - \frac{\predictive( \weight\testfn)}{\predictive( \weight)} \right\rvert & = \frac{1}{\vert \predictive( \weight) \vert}  \vert \predictiveN( \weightN\testfn) - \predictive( \weight\testfn)\vert.
\end{align*}
It follows that
\begin{align*}
\E\left[\vert\bgN(\testfn) - \bg(\testfn)\vert^p\right]^{1/p} &\leq \frac{\supnorm{\testfn}}{ \predictive( \weight)} \E\left[\vert \predictive(\weight) - \predictiveN(\weightN)\vert^p\right]^{1/p}\\
&+\frac{1}{ \predictive( \weight) }  \E\left[\vert \predictiveN( \weightN\testfn) - \predictive( \weight\testfn)\vert^p\right]^{1/p}
\\
&\leq 2\frac{\supnorm{\testfn}}{\predictive( \weight)}\frac{\widetilde{C}_{p,n}(\gamma)}{N^{1/2}},
\end{align*}
where $\predictive( \weight) = \int \mu_{n-1/2}(x)^{e^{-\gamma}} \pi(x)^{1-e^{-\gamma}}dx$ is the normalising constant of $\mu_n$.

We now show~\eqref{eq:weight_comparison}.
Let $\mathcal{F}_{n-1}^N$ denote the $\sigma$-field generated by the weighted samples up to (and including) time $n-1$.
Consider the decomposition
\begin{align*}
\predictiveN(\weightN\testfn) - 
\predictive(\weight\testfn) & =\predictiveN(\weightN\testfn) - \E\left[\predictiveN(\weightN\testfn)\mid\mathcal{F}_{n-1}^N\right]\\
&+\E\left[\predictiveN(\weightN\testfn)|\mathcal{F}_{n-1}^N\right] -\predictive(\weight\testfn).
\end{align*}

Let us first bound
$$\E\left[\vert\predictiveN( \weightN \testfn) - \E\left[\predictiveN( \weightN \testfn) \mid \mathcal{F}_{n-1}^N\right]\vert^p\right]^{1/p}.$$
Using the sequence of functions $\Delta_n^i : \real^d \mapsto \mathbb{R}$, $
\Delta_n^i(x) :=  \weightN(x)\testfn(x) -  \E\left[\weightN(X_n^i)\testfn(X_{n}^i) \mid \mathcal{F}_{n-1}^N\right]$, for $i=1,\ldots, N$.
Conditionally on $\mathcal{F}_{n-1}^N$, $\Delta_n^i(X_n^i)$ $i=1,\ldots, N$ are independent and have expectation equal to 0, moreover
\begin{align*}
\predictiveN( \weightN \testfn) - \E\left[\predictiveN( \weightN \testfn) \mid \mathcal{F}_{n-1}^N\right] = \frac{1}{N}\sum_{i=1}^N \Delta_n^i(X_n^i).
\end{align*}
Using Lemma~\ref{lemma:delmoral}, we have almost surely
\begin{align}
\label{eq:conditional_expe_lp}
\sqrt{N}\E\left[\predictiveN( \weightN \testfn) - \E\left[\predictiveN( \weightN \testfn) \mid \mathcal{F}_{n-1}^N\right]\vert^p \mid \mathcal{F}_{n-1}^N\right]^{1/p}
& \leq 4b(p) ^{1/p} \supnorm{w_n}\supnorm{\testfn}, 
\end{align}
where $b(p)$ is given in~\eqref{eq:b(p)}.

Consider now the second term and use Proposition~\ref{prop:conditional_expe}
\begin{align*}
    \E\left[\predictiveN(\weightN\testfn)|\mathcal{F}_{n-1}^N\right] -\predictive(\weight\testfn) &= \int \varphi(x_n)\pi(x_n)^{1-e^{-\gamma}}\left[\left(\mu_{n-1/2}^N(x_n)\right)^{e^{-\gamma}}-\left(\mu_{n-1/2}(x_n)\right)^{e^{-\gamma}}\right]dx_n.
\end{align*}
Using Proposition~\ref{prop:mudelta} we find
\begin{align*}
    \E\left[\left\lvert\E\left[\predictiveN(\weightN\testfn)|\mathcal{F}_{n-1}^N\right] -\predictive(\weight\testfn) \right\rvert^p\right]^{1/p} &\leq D_{p,n}(\gamma)\frac{\supnorm{\varphi}}{N^{1/2}}\int \pi(x_n)^{1-e^{-\gamma}} dx_n.
\end{align*}
It follows that
\begin{align*}
    \E\left[\predictiveN( \weightN\testfn)-  \predictive(\weight\testfn)\vert^p\right]^{1/p}\leq \widetilde{C}_{p, n} \frac{\supnorm{\testfn}}{ N^{1/2}}
\end{align*}
holds with $\widetilde{C}_{p,n}(\gamma) = 4b(p) ^{1/p}\supnorm{w_n}\supnorm{\testfn}+D_{p,n}(\gamma)$.
\end{proof}

We control the error induced by the resampling step using the same argument as \citet[Lemma 5]{smc:theory:CD02} whose proof is given for completeness.

\begin{lemma}[Multinomial resampling]
\label{lp:lemma4}
Under the conditions of Proposition~\ref{prop:lp},
assume that for any $\testfn\in \bounded$, for some $p\geq 1$ and some finite constant $\bar{C}_{p, n}$
\begin{equation*}
\E\left[\vert\bgN(\testfn) - \update(\testfn)\vert^p\right]^{1/p}\leq \bar{C}_{p, n}\frac{\supnorm{\testfn}}{N^{1/2}},
\end{equation*}
then after the resampling step performed through multinomial resampling
\begin{equation*}
\E\left[\vert\updateN(\testfn) - \update(\testfn)\vert^p\right]^{1/p} \leq C_{p,n}(\gamma)\frac{\supnorm{\testfn}}{N^{1/2}}
\end{equation*}
for any $\testfn\in \bounded$ and  $C_{p,n}(\gamma)  = 4b(p)^{1/p}+\bar{C}_{p,n}(\gamma) $.
\end{lemma}
\begin{proof}
    Divide into two terms and apply Minkowski's inequality
\begin{align*}
\E\left[\vert\updateN(\testfn) - \update(\testfn)\vert^p\right]^{1/p} & \leq \E\left[\vert\updateN(\testfn) - \bgN(\testfn)\vert^p\right]^{1/p}\\
&+\E\left[\vert\bgN(\testfn) - \update(\testfn)\vert^p\right]^{1/p}.
\end{align*}
Denote by $\sfresampling$
the $\sigma$-field generated by the weighted samples up to (and including) time $n$, $\sfresampling:= \sigma\left(X_n^i: i\in\lbrace 1,\ldots,N\rbrace\right)\vee\mathcal{G}_{n-1}^N$ and consider the sequence of functions $\Delta_n^i : E \mapsto \mathbb{R}$, $
\Delta_n^i(x) :=  \testfn(x) -  \E\left[\testfn(\widetilde{X}_{n}^i) \mid \sfresampling\right]$,  for $i=1,\ldots, N$.
Conditionally on $\sfresampling$, $\Delta_n^i(\widetilde{X}_n^i)$ $i=1,\ldots, N$ are independent and have expectation equal to 0, moreover
\begin{align*}
\updateN(\testfn) - \bgN(\testfn) &= \frac{1}{N} \sum_{i=1}^N \left( \testfn(\widetilde{X}_{n}^i) - \E\left[\testfn(\widetilde{X}_{n}^i) \mid \sfresampling\right]\right)= \frac{1}{N} \sum_{i=1}^N \Delta_n^i(\widetilde{X}_n^i).
\end{align*}
Using Lemma~\ref{lemma:delmoral}, we find
\begin{align}
\label{eq:lp_resampling}
\sqrt{N}\E\left[\vert\updateN(\testfn) - \bgN(\testfn)\vert^p \mid \sfresampling\right]^{1/p}
& \leq 4b(p) ^{1/p} \supnorm{\testfn},
\end{align}
where $b(p)$ is as in~\eqref{eq:b(p)}.
This result combined with the hypothesis yields
\begin{align*}
\E\left[\vert\updateN(\testfn) - \update(\testfn)\vert^p\right]^{1/p} & \leq 4b(p)^{1/p}\frac{\supnorm{\testfn}}{N^{1/2} }+ \bar{C}_{p,n}(\gamma) \frac{\supnorm{\testfn}}{N^{1/2}}\leq (4b(p)^{1/p}+\bar{C}_{p,n}(\gamma))\frac{\supnorm{\testfn}}{N^{1/2}}.
\end{align*}
Thus, $C_{p,n}(\gamma)  = 4b(p)^{1/p}+\bar{C}_{p,n}(\gamma) $.
\end{proof}
\section{Analytic solutions: Multivariate Gaussian case}
\label{sec:gaussian}
This section focuses on comparing the evolution of the PDEs introduce in this section for the special case of transporting a Gaussian $\mu_0(x) = \mathcal{N}(x; m_0, C_0)$ to target Gaussian $\pi(x) = \mathcal{N}(x; m_\pi, C_\pi)$.  In this case, we can obtain nice expressions for the moments along the flow described by each PDE.

In particular, observing that both the tempered W flow and the tempered FR flow preserve gaussianity, we have that $\mu_t(x) = \mathcal{N}(x; m_t, C_t)$ for each time $t\geq 0$.

We detail here the results needed to obtain the ODEs describing the tempered and standard PDEs for the special case of transporting a Gaussian $\mu_0(x) = \mathcal{N}(x; m_0, C_0)$ to target Gaussian $\pi(x) = \mathcal{N}(x; m_\pi, C_\pi)$.  

\textbf{Wasserstein Flow (Langevin)}
The PDE in this case is the Fokker-Planck equation for the overdamped Langevin SDE 
\begin{align*}
dX_t &= -\nabla V_\pi(X_t)dt + \sqrt{2}dB_t \\ 
     & = -C_\pi^{-1}(X_t-m_\pi)dt + \sqrt{2}dB_t .
\end{align*}
Since the potential takes the form $V_\pi(x) = \frac{1}{2}(x-m_\pi)^\top C_\pi^{-1}(x-m_\pi)$, the moment equations are standard as this essentially boils down to an Ornstein-Uhlenbeck process. They are given by 
\begin{align*}
      \dot{m}_t &= -C_\pi^{-1}(m_t - m_\pi) \\
       \dot{C}_t &= -C_\pi^{-1}C_t - C_t C_\pi^{-1} + 2I 
\end{align*}
and the explicit solution can be found as
\begin{align*}
    m_t &= e^{-tC_\pi^{-1} }m_0 + (I - e^{-tC_\pi^{-1} })m_\pi \\
    C_t &= e^{-tC_\pi^{-1}} \left(2\int_0^t e^{sC_\pi^{-1}}(e^{sC_\pi^{-1}})^\top ds + C_0 \right) (e^{-tC_\pi^{-1}})^\top .
\end{align*}
In the 1D case, we can easily evaluate the expression for $C_t$ as
\begin{align*}
    C_t = e^{-2tC_\pi^{-1}}[C_0 + C_\pi(e^{2tC_\pi^{-1}} - 1)].
\end{align*}

\textbf{Fisher--Rao flow}
In the Gaussian case, the ODEs for the mean and covariance respectively are given by 
\begin{align*}
    \dot{m}_t &= -C_t C_\pi^{-1}(m_t - m_\pi) \\
    \dot{C}_t &=  -C_tC_\pi^{-1}C_t + C_t .
\end{align*}
The explicit solutions in this case are known (see e.g. \citet[Lemma E.3]{chen2023sampling}), 
\begin{align*}
m_t &= m_\pi+e^{-t}\left((1-e^{-t})C_\pi^{-1}+e^{-t}C_0^{-1}\right)^{-1}C_0^{-1}(m_0-m_\pi)\\
&= m_\pi + e^{-t}C_tC_0^{-1}(m_0 - m_\pi) \\
    C_t^{-1} &= C_\pi^{-1}+e^{-t}(C_0^{-1}-C_\pi^{-1}).
\end{align*}

\textbf{Wasserstein--Fisher--Rao flow}
The mean and covariance equations in the Gaussian case for the Wasserstein--Fisher--Rao flow can be obtained using the results for the pure Wasserstein and pure Fisher--Rao flows as shown in \citet[Proposition 2.2]{liero2025evolution},
\begin{align*}
    \dot{m}_t &= -(C_t + I)C_\pi^{-1}(m_t - m_\pi) \\
    \dot{C}_t &= -C_t C_\pi^{-1}C_t + C_t -C_\pi^{-1}C_t - C_tC_\pi^{-1} + 2I
\end{align*}
The explicit solution in the 1D case is given by 
 \begin{align*}
     C_t &= C_\pi + \frac{1}{\left(\frac{1}{C_0 - C_\pi} + \frac{1}{C_\pi + 2}\right) e^{\left(1 + \frac{2}{C_\pi}\right)t} - \frac{1}{C_\pi + 2}} \\
 m_t & = m_\pi + (m_0 - m_\pi) \exp \left( \frac{t}{C_\pi} \right) \frac{C_t - C_\pi}{C_0 - C_\pi},   
 \end{align*}
showing that $C_t$ must approach $C_\pi$ faster than $\exp(t/C_\pi)$ grows, which is guaranteed by the factor 2 in the expression for $C_t$. 
Explicit solution of the covariance evolution under the WFR flow are given in \citet[Remark 2.3]{liero2025evolution}.

\section{Birth Death Langevin Algorithms}
\label{app:bdl}

Birth Death Langevin (BDL) algorithms were introduced in \citep{lu2023birth, Lu2019} as an approximation of the WFR flow. They consider the PDE~\eqref{eq:wfr} and replace $\mu_t$ with an empirical measure leading to an interacting particle system whose evolution is then discretised in time.
    The W step is performed via the unadjusted Langevin algorithm (ULA) with target $\pi$, as in~\eqref{eq:ula};
    while the FR step is performed via a birth-death step with exact rate
    \begin{align}
    \label{eq:rate_bdl}
    \Lambda(x, \mu_t) := \log\left(\frac{\mu_t}{\pi}\right) - \mathbb{E}_{\mu_t} \left[ \log\left(\frac{\mu_t}{\pi}\right) \right].
\end{align}
    When $\mu_t$ is replaced by an empirical measure $\mu_t^N$, the rate $ \Lambda(x, \mu_t)$ cannot be computed analytically and \citep{Lu2019} resort to kernel density estimation (KDE) and replace $\mu_t^N$ with $K_\varepsilon*\mu_t^N$, where $K_\varepsilon$ is a kernel with bandwidth $\varepsilon$. Convergence of the resulting continuous-time interacting particle system is guaranteed by \citet[Proposition 5.1]{Lu2019}.

    Contrary to our approach, the BDL dynamics are obtained by first applying a space discretisation (corresponding to the empirical measure $\mu_t^N$) and then a time discretisation. The W step described by the Fokker--Planck PDE~\eqref{eq:winfflow} is resolved using a time discretisation of the corresponding Langevin SDE and the FR flow~\eqref{eq:infFR} is resolved through its connections with birth-death processes.
    Different approximations of the rate~\eqref{eq:rate_bdl} lead to slightly different BDL algorithms \citep{lu2023birth, pampel2023sampling}.

\begin{algorithm}
\begin{algorithmic}
\STATE{\textit{Inputs:} Discretisation steps $(\gamma_n)_{n=1, \ldots, T}$, kernel function $K$, initial proposal $\mu_0$. }
\STATE{\textit{Initialize:} sample $X_0^i\sim \mu_0$ for $i=1, \dots, N$.}
\FOR{$n= 1,\ldots, T $}
\STATE{\textit{Langevin step:} $X_n^i = X_{n-1}^i+\gamma_n\nabla \log \pi(X_{n-1}^i)+\sqrt{2\gamma_n}\xi_i$ where $\xi_i\sim\mathcal{N}(0,1)$ for $i=1,\dots, N$.}
 \STATE{\textit{Birth-death step:} compute the rate 
 \begin{equation*}
 \beta_i = \log\left(N^{-1}\sum_{j=1}^N K(X_n^i-X_n^j)\right)-\log\pi(X_n^i)
\end{equation*}
for $i=1,\dots, N$.}
\STATE{\textit{Birth-death step:} set $\bar{\beta}_i = \beta_i-N^{-1}\sum_{j=1}^N\beta_j$.
If $\bar{\beta}_i>0$ kill $X_n^i$ with probability $1-\exp(-\bar{\beta}_i\gamma_n)$, otherwise duplicate  $X_n^i$ w.p.  $1-\exp(\bar{\beta}_i\gamma_n)$.}
\STATE{\textit{Resampling:} Let $N_1$ denote the current number of particles.}
\IF{$N_1>N$}
\STATE{Kill $N_1-N$ randomly selected particles.}
\ELSE
\STATE{Duplicate $N-N_1$ randomly selected particles.}
\ENDIF
\ENDFOR
\STATE{\textit{Output:} $\{X_n^i\}_{i=1}^N$.}
 \end{algorithmic}
 \caption{Birth-death accelerate Langevin algorithm (BDL; \citep{Lu2019})} \label{alg:bdl}
\end{algorithm}

The birth-death Langevin algorithm of \citep{Lu2019} is given in Algorithm~\ref{alg:bdl}.
\citep{lu2023birth} replace the average rate in Algorithm~\ref{alg:bdl} is replaced by $\bar{\beta}_i = \beta_i-N^{-1}\sum_{j=1}^N\beta_j+\sum_{j=1}^N \frac{K(X_n^i-X_n^j)}{\sum_{l=1}^N K(X_n^j-X_n^l)}-1$.

\section{Additional Experimental Results}
\subsection{Confirmation of rate in Proposition~\ref{prop:lp}}
\label{app:prop_confirmation}
We empirically validate our convergence result in Proposition~\ref{prop:lp} using a simple Gaussian example with initial distribution $\mu_0(x) = \mathcal{N}(x; 0, 1)$ and targets $\pi(x) = \mathcal{N}(x; 20, 0.1)$ and $\pi(x) = \mathcal{N}(x; 1, 5)$. In the first case the WFR flow is mostly driven by the W part while for the second target the FR part drives the $\KL$ decay.

We set $\gamma = 10^{-2}$ and iterate for $400$ steps. We consider number of particles $N= 50, 100, 500, 1000, 1500, 2000$.
Figure~\ref{fig:Ngamma} shows the square root of the mean squared error ($\mse$) for mean and variance of $\pi$ and $\mathbb{P}_\pi(X>m_\pi)$. We add the theoretical rate in Proposition~\ref{prop:lp} for reference. As is often the case, for small $N$ the decay is faster than $\sqrt{N}$, but as $N\to \infty$ the $\mse$ decays at the predicted rate. 

\begin{figure}
    \centering
    \includegraphics[width=\linewidth]{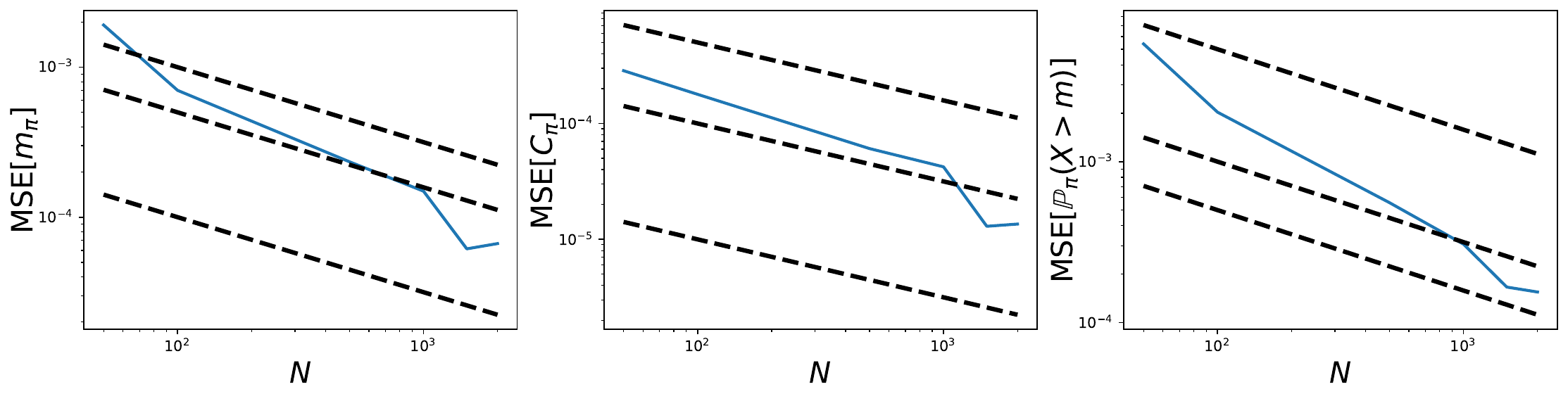}
    \includegraphics[width=\linewidth]{ 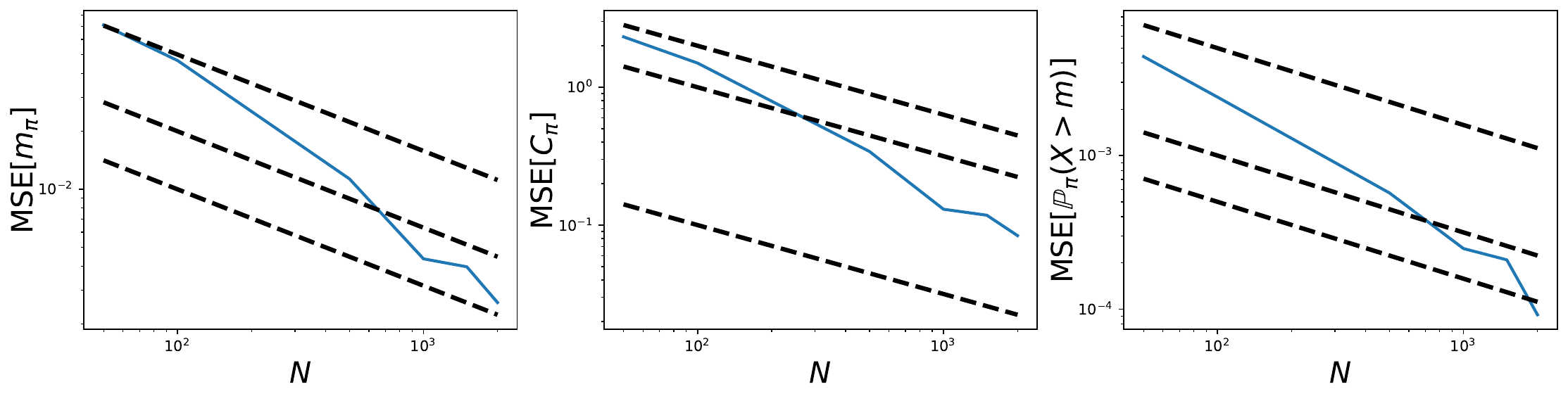}
    \includegraphics[width=0.3\linewidth]{ 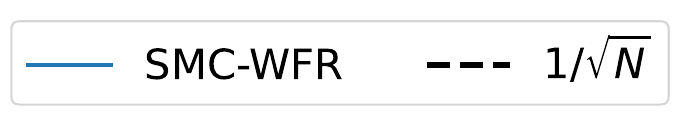}
    \caption{Decay of $\mse$ for mean, variance and $\mathbb{P}_\pi(X>m_\pi)$ of $\pi(x) = \mathcal{N}(x; 20, 0.1)$ (first row), $\pi(x) = \mathcal{N}(x; 1, 5)$ for increasing number of particles $N$ (in blue); we add the theoretical rate $C/\sqrt{N}$ for a given constant $C$ for reference (in dashed black). The results are averaged over 50 replicates.}
    \label{fig:Ngamma}
\end{figure}

\subsection{SMC-ULA and SMC-MALA as Approximations of the WFR Flow}
\label{app:alternatives}

Figure~\ref{fig:gaussians_exact_smc_variants} shows the evolution of $\KL$ given by Algorithm~\ref{alg:smc_wfr} and the SMC-ULA and SMC-MALA algorithms described in Section~\ref{sec:variants} for three 1D targets
\begin{align*}
    \pi(x) &= \mathcal{N}(x; 20, 0.1)\\
    \pi(x) &= \mathcal{N}(x; 1, 5)\\
    \pi(x) &= \frac{1}{2}\mathcal{N}(x;0, 1) + \frac{1}{2}\mathcal{N}(x;6, 1).
\end{align*}
The initial distribution is a standard Gaussian.
The plots suggest that SMC-ULA and SMC-MALA are poor approximations of the WFR flow. In particular, for the first target, convergence is driven by the W flow (approximated by ULA or MALA) and thus the three algorithms are all in agreement with the exact flow. For the second target, convergence is driven by the FR flow. Figure~\ref{fig:gaussians_exact_smc_variants} middle shows that the cheaper alternative to the weights~\eqref{eq:w_wfr_smc} used in SMC-ULA performs poorly in this case.
The third target is non-Gaussian, in this case both SMC-ULA and SMC-MALA exhibit slower convergence than SMC-WFR.

\begin{figure}
    \centering
    \includegraphics[width=0.32\linewidth]{ 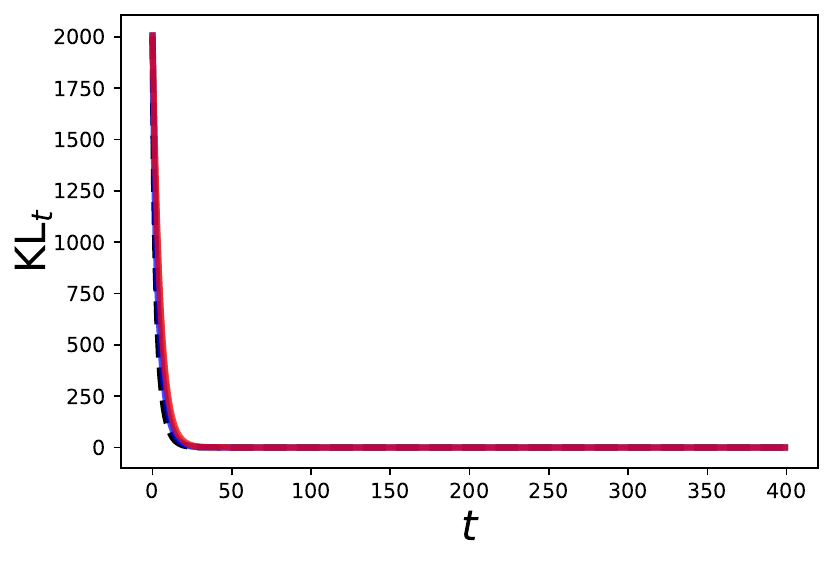}
    \includegraphics[width=0.32\linewidth]{ 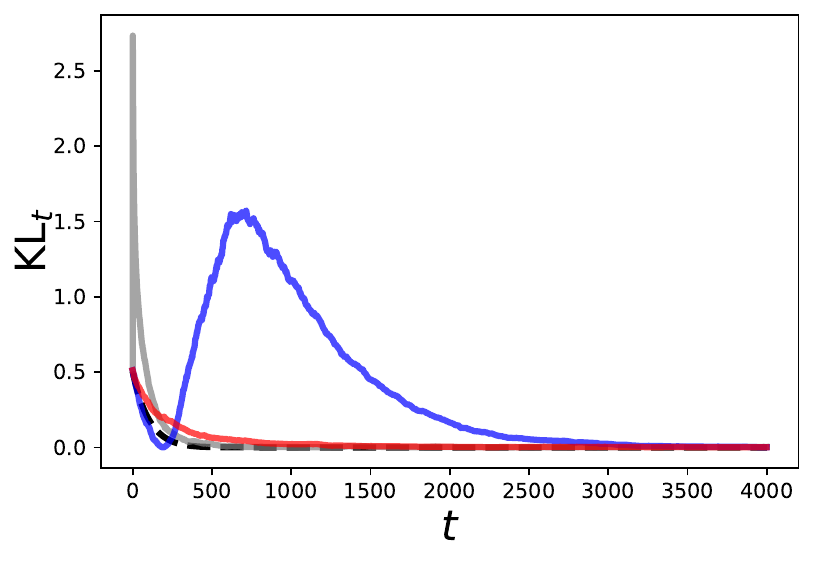}
    \includegraphics[width=0.32\linewidth]{ 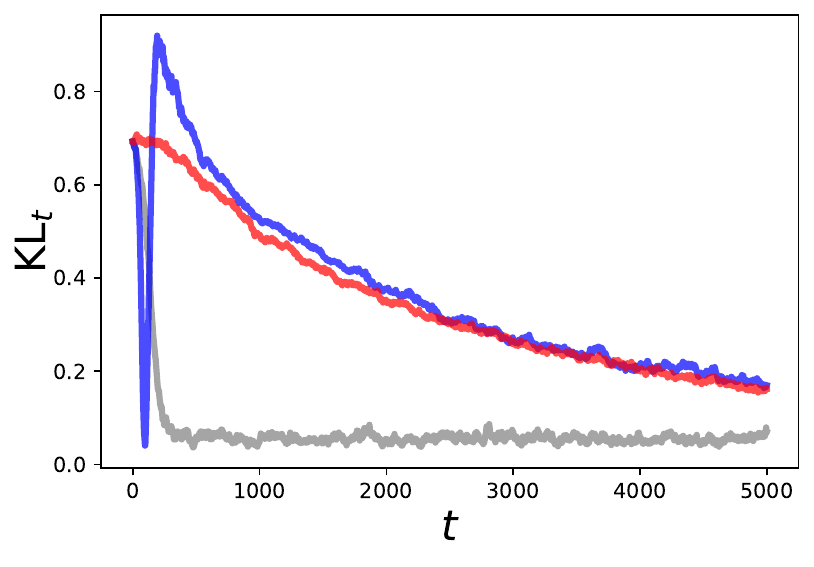}
    \includegraphics[width = 0.6\textwidth]{ 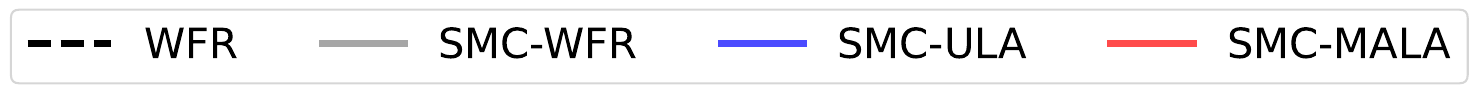}

    \caption{SMC-ULA and SMC-MALA as approximations of WFR: Comparison of evolution of $\KL$ of the exact WFR flow and approximations provided by Algorithm~\ref{alg:smc_wfr}, SMC-ULA, SMC-MALA with initial distribution $\mu_0(x) = \mathcal{N}(x; 0, 1)$ and $\pi(x) = \mathcal{N}(x; 20, 0.1)$ (left), $\pi(x) = \mathcal{N}(x; 1, 5)$ (middle), $\pi(x) = \frac{1}{2}\mathcal{N}(x;0, 1) + \frac{1}{2}\mathcal{N}(x;6, 1)$ (right).
    }
    \label{fig:gaussians_exact_smc_variants}
\end{figure}

\subsection{Comparison with other Monte Carlo algorithms}
\label{app:targets}
We collect here additional experimental details for Section~\ref{sec:comparison}.
The targets we consider are summarised in Table~\ref{tab:targets}.

\begin{table}[h]
\centering
\begin{tabular}{llccc}
Target & $d$ &$\pi(x)$ & Behaviour \\
\hline\noalign{\smallskip}
Gaussian mixture & 2 & $\sum_{i=1}^4 w_i \mathcal{N}(x;\mathbf{m}_i, \Sigma_i) $  & multimodality \\
\\
Banana & 2 & $\exp\left(-[\sigma^2(x_2-x_1^2)^2/2+\nicefrac{(1-x_1)^2}{2\sigma^2}]\right)$ & multiscale/ \\
& & & locally Lipschitz gradient\\
Donut & 2 & $ \exp\left(-(\sqrt{x_1^2+x_2^2}-2)^2/0.25^2\right)$ & concentration\\
\\
Gaussian & 20 &$\mathcal{N}(x;\mathbf{m}, \Sigma) $ & high dimensional\\
\\
Gaussian mixture & 20 &$\sum_{i=1}^c w_i \mathcal{N}(x;\mathbf{m}_i, \Sigma_i) $ & multimodality/ \\
& & & high dimensional
\end{tabular}
\caption{Summary of target distributions (up to normalisation) and their challenges.}
\label{tab:targets}
\end{table}

\begin{figure}
    \centering
    \includegraphics[width=0.32\linewidth]{ 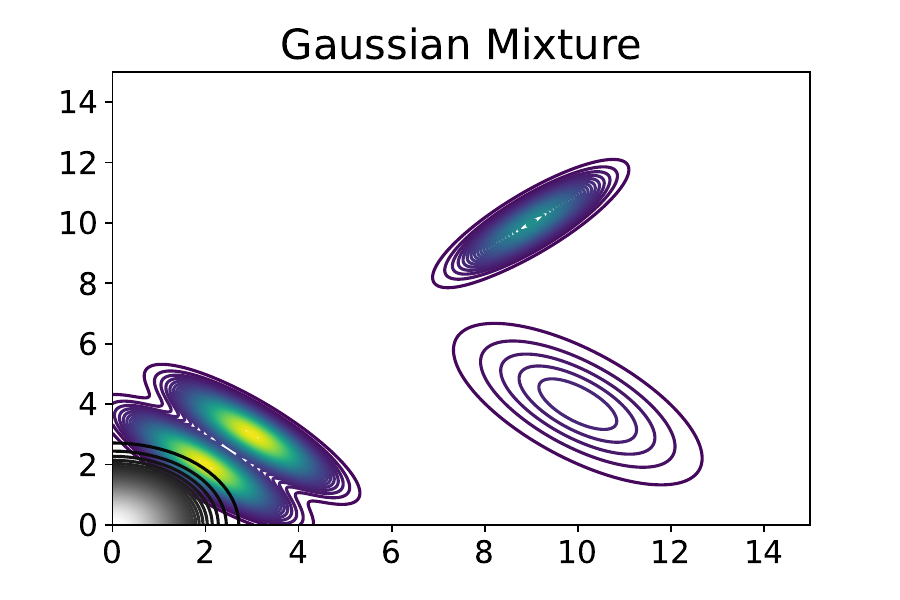}
    \includegraphics[width=0.32\linewidth]{ 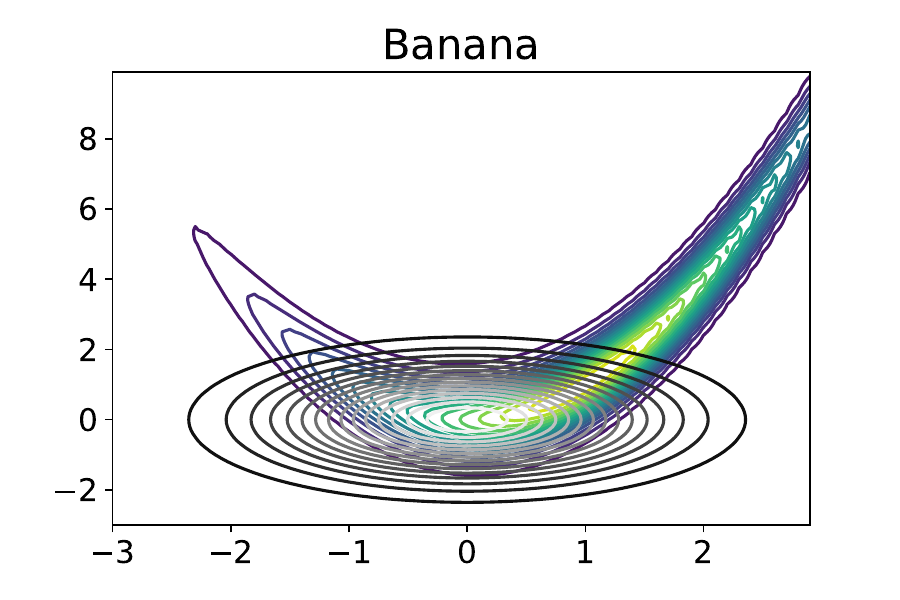}
    \includegraphics[width=0.32\linewidth]{ 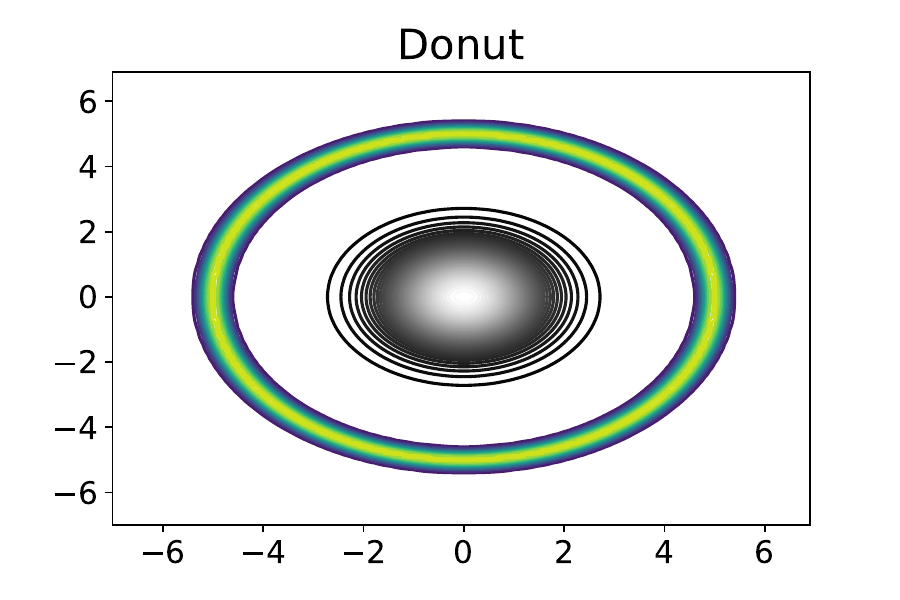}

    \caption{Contour plots for 2D targets (Gaussian Mixture, Banana and Donut) and initial distribution $\mu_0(x) = \mathcal{N}(x; 0, 1)$.
    }
    \label{fig:targets}
\end{figure}

\textbf{2D Gaussian mixture (Figure~\ref{fig:gm2d})}
The 2D 4 component Gaussian mixture  is given by $\pi(x) = \sum_{i=1}^4 w_i \mathcal{N}(x; \mathbf{m}_i, \Sigma_i)$
with 
\begin{align*}
    w &= \left[\frac{1}{3}, \frac{1}{6}, \frac{1}{3}, \frac{1}{6} \right]^\top; \qquad\qquad
    \mathbf{m}_1 = [3, 3]^\top; \quad\mathbf{m}_2 = [9, 10]^\top; \quad\mathbf{m}_3 = [2, 2]^\top;  \quad\mathbf{m}_4 = [10, 4]^\top \\
    \Sigma_1 &= 0.7 \cdot  \left[ \begin{matrix}
        1 & -0.8377 \\
        -0.8377 & 1
    \end{matrix}  \right]; \quad\Sigma_2 = 0.7 \cdot  \left[ \begin{matrix}
        1 & 0.8377 \\
        0.8377 & 1
    \end{matrix}  \right]; \quad
    \Sigma_3 = \Sigma_1; \quad
    \Sigma_4 =  2 \cdot  \left[ \begin{matrix}
        1 & -0.6701 \\
        -0.6701 & 1
    \end{matrix}  \right].
\end{align*}
For the 2D Gaussian mixture target we use $N=300, \gamma = 0.2$. We tune the variance of the proposal for MALA and SMC-MALA to achieve $57\%$ acceptance rate and the variance of RWM for SMC tempering to achieve $23\%$ acceptance rate.

We use $T=1000$ iterations for SMC-WFR.
The number of iterations of the other algorithms is selected so that the runtime of all algorithms is comparable.

For the 2D Gaussian Mixture ULA does not work well and fails to explore all modes.  A similar issue affects MALA and SMC-MALA. SMC-ULA gives the worst results as it combines the lack of exploration of ULA with the instability in the weights~\eqref{eq:w_smcula} due to the lack of overlap between $\mu_0$ and $\pi$ (see Figure~\ref{fig:targets}).

SMC-tempering works well in this setting, as the tempering iterates~\eqref{eq:tempering} flatten the target in the early iterations and help with global exploration. SMC-WFR combines the local exploration of ULA with the ability to jump from one mode to another and has the fastest convergence, although it has a larger bias than SMC-tempering due to the large value of $\gamma$ selected. On the other hand, the fast convergence of SMC-WFR occurs only for sufficiently large $\gamma$.

\textbf{Banana (Figure~\ref{fig:banana_app})}
For the banana target we select $\sigma^2=2$. We use $N=300, \gamma = 0.005$. We tune the variance of the proposal for MALA and SMC-MALA to achieve $57\%$ acceptance rate and the variance of RWM for SMC tempering to achieve $23\%$ acceptance rate.
We use $T=4000$ iterations for SMC-WFR.
The number of iterations of the other algorithms is selected so that the runtime of all algorithms is comparable.

For the banana target ULA and SMC-ULA are slower to convergence as in the initial stages the drift component of ULA pushes the particles far in the tails. SMC-WFR does not suffer from this instability despite the ULA proposal. This experiment showcases the benefit of using the weights~\eqref{eq:w_wfr} rather than~\eqref{eq:w_smcula} despite the additional computational cost.

MALA and SMC-MALA provide the best results.
SMC-tempering has similar convergence speed but tends to underestimate the covariance matrix.

\begin{figure}
    \centering
    \includegraphics[width=\linewidth]{ 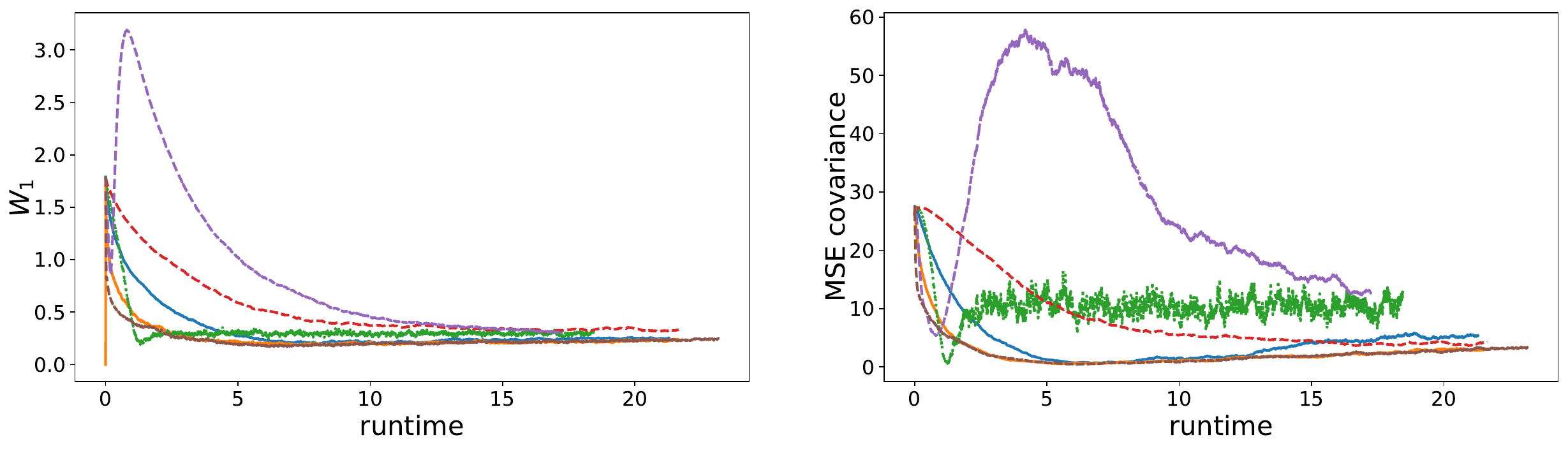}
    \includegraphics[width=\linewidth]{ 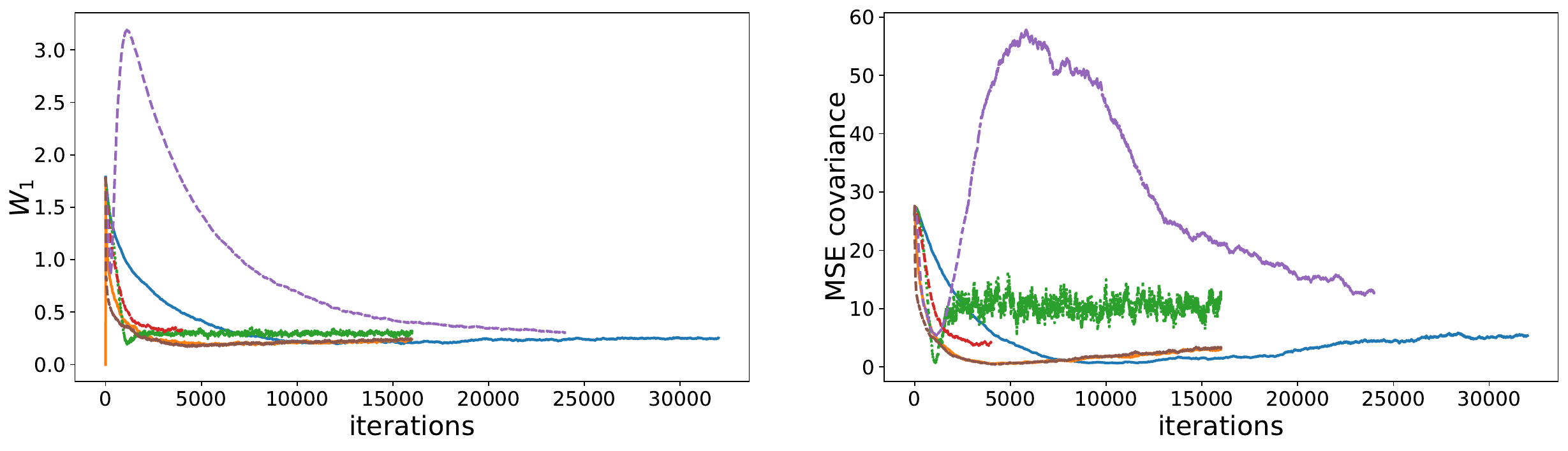}
    \includegraphics[width = 0.8\textwidth]{ legend_comparison.pdf}
    \caption{Banana: Evolution of $W_1$ and mean squared error (MSE) for the covariance matrix against runtime (first row) and number of iterations (second row) averaged over 50 repetitions.  The initial distribution is $\mu_0(x) = \mathcal{N}(x; 0, \textsf{Id})$. We compare ULA, MALA, SMC tempering, SMC-WFR, SMC-ULA and SMC-MALA.}
    \label{fig:banana_app}
\end{figure}

\textbf{Donut (Figure~\ref{fig:donut_app})}
For the donut target we use $N=300, \gamma = 0.005$. 
We tune the variance of the proposal for MALA and SMC-MALA to achieve $57\%$ acceptance rate and the variance of RWM for SMC tempering to achieve $23\%$ acceptance rate. 
We use $T=1000$ iterations for SMC-WFR.
The number of iterations of the other algorithms is selected so that the runtime of all algorithms is comparable.

For the donut target ULA works well while SMC-tempering struggles to quickly move from $\mu_0$ to $\pi$ since it does not use gradient information. SMC-WFR and SMC-ULA improve the results of ULA as guaranteed by the converge results of WFR~\eqref{eq:rate_wfr}. For this target SMC-ULA and SMC-WFR behave similarly, we do not observe any numerical instability in SMC-ULA as $\mu_0$ significantly overlaps with $\pi$ (see Figure~\ref{fig:targets}) and the weights~\eqref{eq:w_smcula} are stable.

SMC-tempering is slow to convergence due to the lack of gradient information.
\begin{figure}
    \centering
     \includegraphics[width=\linewidth]{ 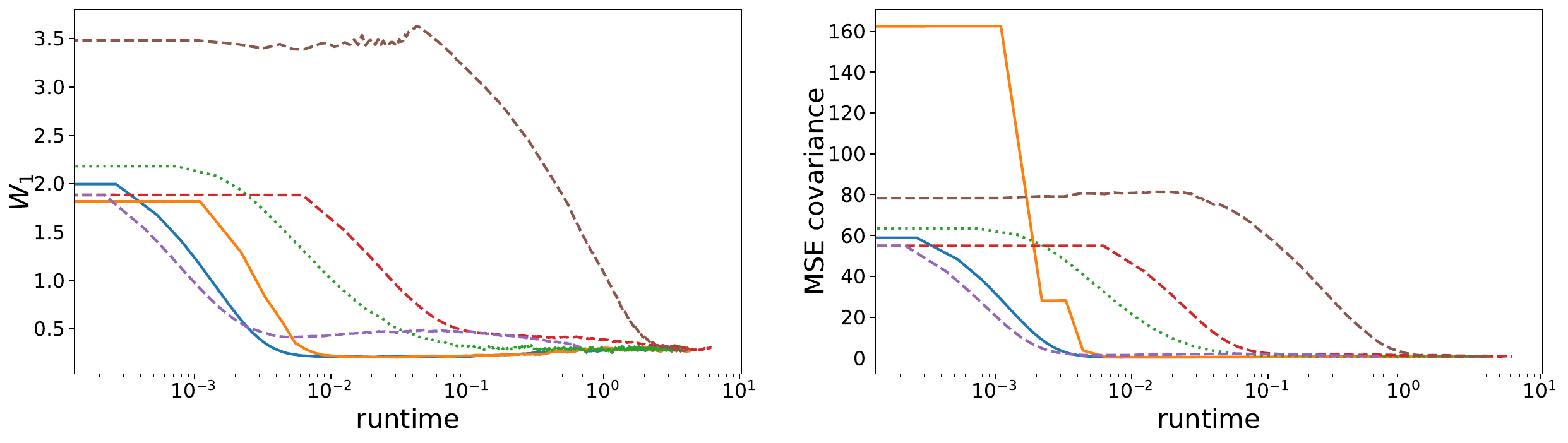}
     \includegraphics[width=\linewidth]{ 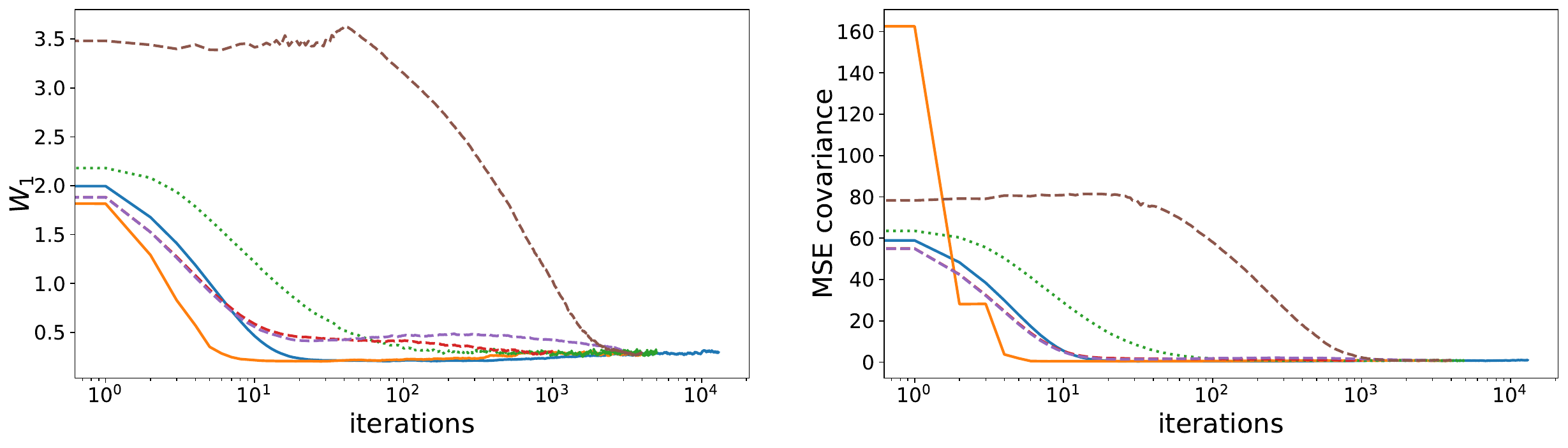}
     \includegraphics[width = 0.8\textwidth]{ legend_comparison.pdf}
    \caption{Donut (ULA works well): Evolution of $W_1$ and mean squared error (MSE) for the covariance matrix against runtime (first row) and number of iterations (second row) averaged over 50 repetitions.  The initial distribution is $\mu_0(x) = \mathcal{N}(x; 0, \textsf{Id})$. We compare ULA, MALA, SMC tempering, SMC-WFR, SMC-ULA and SMC-MALA. ULA, SMC-ULA, SMC-WFR are similar in terms of number of iterations but ULA is faster in terms of wallclock time.}
    \label{fig:donut_app}
\end{figure}

\textbf{20D Gaussian (Figure~\ref{fig:20DG1} and~\ref{fig:20DG2})}
We consider two 20D Gaussian distributions (thus $\nabla V_\pi$ is globally Lipschitz continuous). One in which the W flow performs well since the log-Sobolev constant is small (and so $C_{\textrm{LSI}}^{-1}$ is large) and a second one  in which the log-Sobolev constant is large and thus the W flow performs less well.
The first target has 
\begin{align*}
     \bf{m} &= 3\cdot \mathsf{1}_{20} \\
    \Sigma &= 2 \cdot C
\end{align*}
where $\mathsf{1}_{20}$ is the 20D one vector.  The correlation matrix $C$ can be found in the supplementary material, in this case $C_{\textrm{LSI}}=0.077$.
The second target has 
\begin{align*}
    \bf{m} &= 20\cdot  \mathsf{1}_{20} \\
    \Sigma &= 5 \cdot\mathsf{Id}_{20 \times 20}
\end{align*}
where $\mathsf{Id}_{20 \times 20}$ is the 20D identity matrix. In this case $C_{\textrm{LSI}}=5$. 

For both targets we use $N=1000$ particles and $T=100$ iterations for SMC-WFR.
The number of iterations of the other algorithms is selected so that the runtime of all algorithms is comparable.
We set $\gamma = 0.05$ for target 1 and $\gamma = 0.5$ for target 2.
We tune the variance of the proposal for MALA and SMC-MALA to achieve $57\%$ acceptance rate and the variance of RWM for SMC tempering to achieve $23\%$ acceptance rate. 

For the first 20D Gaussian target (small log-Sobolev constant) the theory suggests that the W flow is faster than FR. This is confirmed by Figure~\ref{fig:20DG1} which shows SMC-WFR, SMC-ULA and SMC-MALA outperforming both ULA and SMC-tempering on this target in terms of iterations. However, ULA works well and is cheap compared to alternatives and is the faster than SMC-WFR to convergence when taking runtime into account. 
In this case, SMC-ULA seems to offer the best of both worlds by closely following the WFR flow (approximated by SMC-WFR) and being computationally cheap.

 \begin{figure}
     \centering
      \includegraphics[width=\linewidth]{ 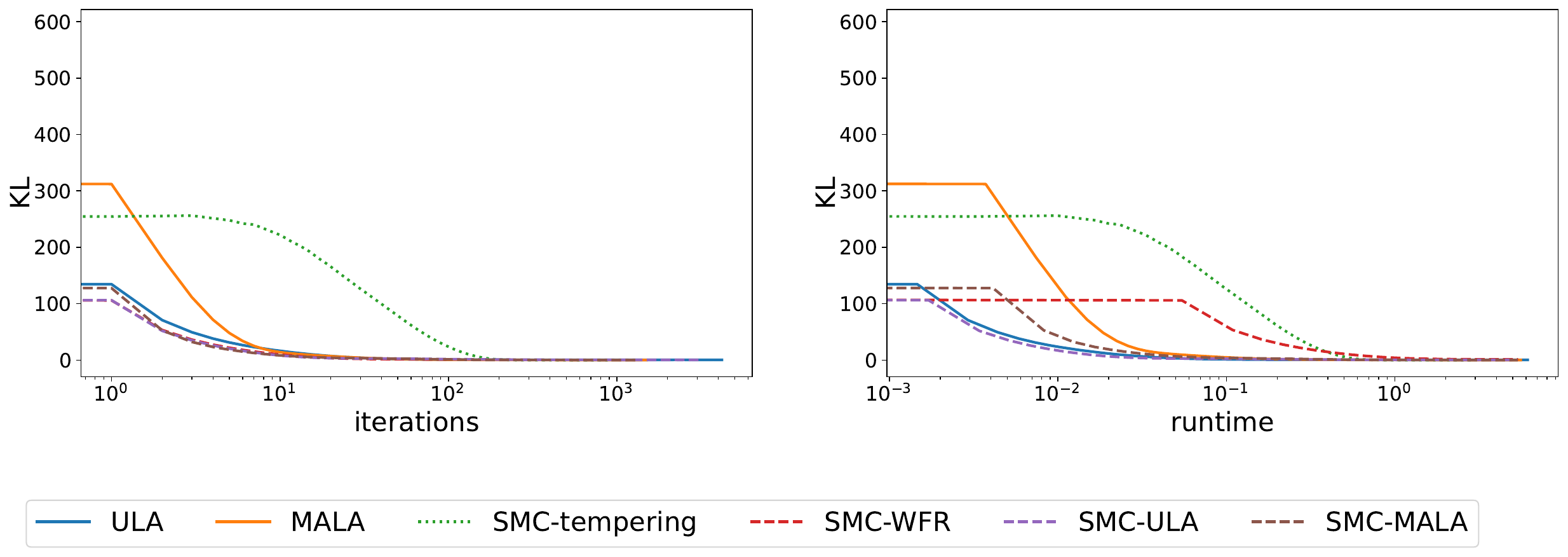}
     \caption{20D Gaussian with small log-Sobolev constant (ULA works well): Evolution of KL against runtime (right) and number of iterations (left) averaged over 50 repetitions.  The initial distribution is $\mu_0(x) = \mathcal{N}(x; 0, \textsf{Id})$. We compare ULA, MALA, SMC tempering, SMC-WFR, SMC-ULA and SMC-MALA. SMC-WFR and SMC-ULA achieve the fastest convergence in number of iterations but SMC-ULA and ULA are faster than SMC-WFR in terms of runtime due to their lower computational cost.}
     \label{fig:20DG1}
 \end{figure}
 
For the second 20D Gaussian target (large log-Sobolev constant) ULA is less fast and in fact MALA based algorithms tend to perform better compared to ULA based ones. SMC-ULA and SMC-WFR perform similarly in terms of iteration number suggesting that in this case SMC-ULA is a valid alternative to approximate the WFR flow.

  \begin{figure}
     \centering
      \includegraphics[width=\linewidth]{ 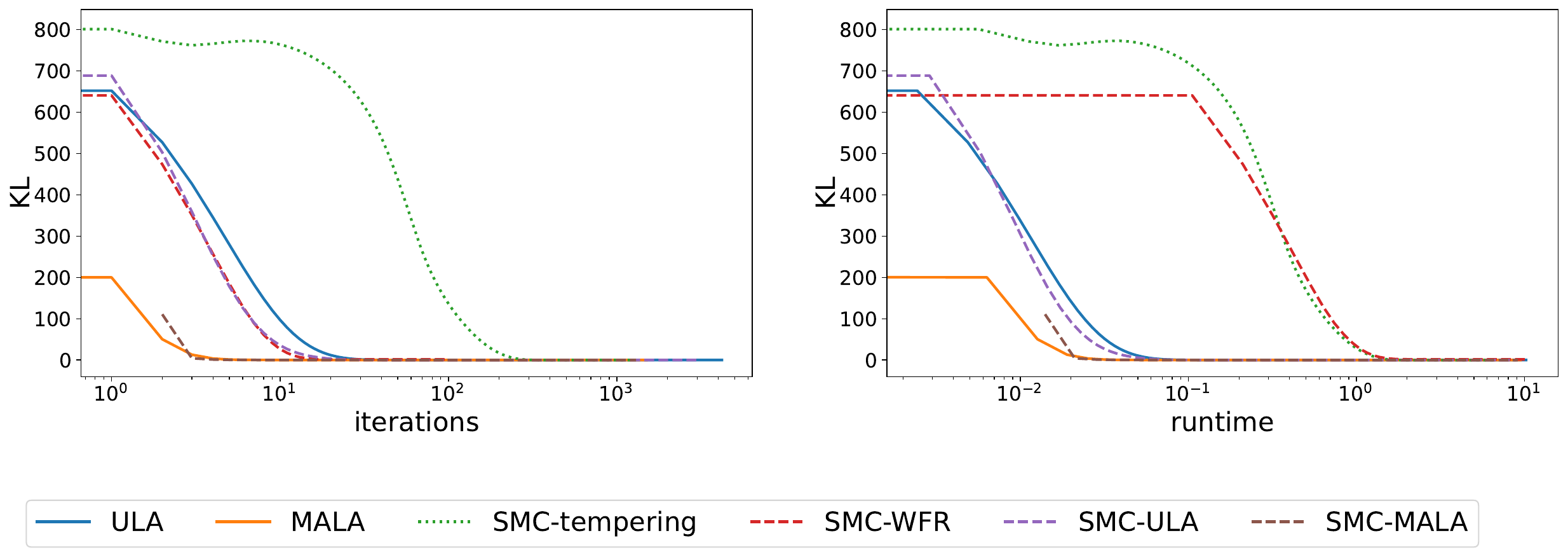}
     \caption{20D Gaussian with large log-Sobolev constant (ULA struggles): Evolution of KL against runtime (right) and number of iterations (left) averaged over 50 repetitions.  The initial distribution is $\mu_0(x) = \mathcal{N}(x; 0, \textsf{Id})$. We compare ULA, MALA, SMC tempering, SMC-WFR, SMC-ULA and SMC-MALA. SMC-WFR and SMC-ULA are better than ULA in terms of iteration number. MALA and SMC-MALA outperform ULA based algorithms thanks to tuning of proposal variance.}
     \label{fig:20DG2}
 \end{figure}

\textbf{20D Gaussian mixture (Figure~\ref{fig:gm20d})}
The 20D 3 component Gaussian mixture  is given by $\pi(x) = \sum_{i=1}^3 w_i \mathcal{N}(x; \mathbf{m}_i, \Sigma_i)$
with 
\begin{align*}
    w &= \left[\frac{1}{2}, \frac{1}{3}, \frac{1}{6} \right]^\top; \qquad\qquad
    \mathbf{m}_1 = 3\cdot  \mathsf{1}_{20}; \quad\mathbf{m}_2 = 4\cdot  \mathsf{1}_{20};  \quad\mathbf{m}_3 = [2, 5, \dots, 2, 5]^\top \\
    \Sigma_1 &= 2 \cdot  C; \quad\Sigma_2 = 2 \cdot C; \quad
    \Sigma_3 = C^{-1}M^{-1}
\end{align*}
where $\mathsf{1}_{20}$ is the 20D one vector.  The correlation matrices $C, M$ can be found in the supplementary material.

For the 20D Gaussian mixture target we use $N=1000, \gamma = 0.1$. We tune the variance of the proposal for MALA and SMC-MALA to achieve $57\%$ acceptance rate and the variance of RWM for SMC tempering to achieve $23\%$ acceptance rate.

We use $T=1000$ iterations for SMC-WFR.
The number of iterations of the other algorithms is selected so that the runtime of all algorithms is comparable.

For the 20D Gaussian Mixture ULA, MALA, SMC-ULA and SMC-MALA do not work well due to the different covariance structure of the 3 components.
SMC-tempering also fails on this target. SMC-WFR is the only algorithm to correctly recover the covariance structure of the target.

\end{document}